\begin{document}



\newcommand{\mbbN}{\mathbb{N}}
\newcommand{\mcA}{\mathcal{A}}
\newcommand{\mcB}{\mathcal{B}}
\newcommand{\mcS}{\mathcal{S}}
\newcommand{\mcD}{\mathcal{D}}

\newcommand{\mb}{\mathbf}
\newcommand{\ms}{\mathsf}
\newcommand{\mbb}{\mathbb}
\newcommand{\mc}{\mathcal}

\pagestyle{plain}
\mainmatter
\title{Forward-Secure Group Signatures from Lattices}
\author{San Ling, Khoa Nguyen, Huaxiong Wang, Yanhong Xu}
\institute{
Division of Mathematical Sciences, \\
School of Physical and Mathematical Sciences,\\
Nanyang Technological University, Singapore.\\
\texttt{\{lingsan,khoantt,hxwang,xu0014ng\}@ntu.edu.sg}
}
\maketitle
\begin{abstract}

Group signature is a fundamental cryptographic primitive, aiming to protect anonymity and ensure accountability of users. It allows group members to anonymously sign messages on behalf of the whole group, while incorporating a tracing mechanism to identify the signer of any suspected signature. Most of the existing group signature schemes, however, do not guarantee security once secret keys are exposed. To reduce potential damages caused by key exposure attacks, Song (ACMCCS 2001) put forward the concept of forward-secure group signature (FSGS), which prevents attackers from forging group signatures pertaining to past time periods even if a secret group signing key is revealed at the current time period. For the time being, however, all known secure FSGS schemes are based on number-theoretic assumptions, and are vulnerable against quantum computers.

~~In this work, we construct the first lattice-based FSGS scheme. Our scheme is proven secure under the Short Integer Solution   and Learning  With Errors  assumptions. At the heart of our construction is a scalable lattice-based key evolving mechanism, allowing users to periodically update their secret keys and to efficiently prove in zero-knowledge that key evolution process is done correctly. To realize this essential building block, we first employ the Bonsai tree structure by Cash et al. (EUROCRYPT 2010) to handle the key evolution process, and then develop Langlois et al.'s construction (PKC 2014) to design its supporting zero-knowledge protocol.

\smallskip
{\bf{Keywords.}} Group signatures, key exposure, forward-security, lattice-based cryptography, zero-knowledge proofs
\end{abstract}

\section{Introduction}
\noindent{\sc{Group signatures.}} Initially suggested by Chaum and van Heyst~\cite{CV91}, group signature (GS) allows users of a group controlled by a manager to sign messages anonymously in the name of the group (anonymity). Nevertheless, there is a tracing manager to identify the signer of any signature should the user abuse the anonymity (traceability). These seemingly contractive features, however, allow group signatures to find applications in various real-life
scenarios such as e-commence systems and anonymous online communications. Unfortunately, the exposure of group signing keys renders almost all the existing schemes unsatisfactory in practice. Indeed, in the traditional models of group signatures, e.g.,~\cite{BMW03,KTY04,BS04VLR,BSZ05,KY06,SEHKMO12}, the security of the scheme is no longer guaranteed when the key exposure arises. So now let us look closely at the key exposure problem  and the countermeasures to it.


 \smallskip
  \noindent
{\sc Exposure of Group Signing Keys and Forward-Secure Group Signatures.} Exposure of users' secret keys  is one of the greatest dangers to many cryptographic protocols in practice~\cite{Song01}. Forward-secure mechanisms first introduced by Anderson~\cite{Anderson02}, aim to minimize the damages caused by secret key exposures. More precisely, forward-security protects past uses of private keys in earlier time periods even if a break-in occurs currently. Afterwards, many forward-secure schemes were constructed, such as forward-secure signatures~\cite{BM99,AR00,IR01}, forward-secure public key encryption systems~\cite{DKXY02,CHK03,BBG05}, and forward-secure signatures with un-trusted update~\cite{BSSW06,LQY07,LQY10}. 
At the heart of these schemes is a key evolving technique that operates as follows. It divides the lifetime of the scheme into discrete $T$ time periods. Upon entering a new time period, a subsequent secret key is  computed from the current one via a \emph{one-way} key evolution algorithm. Meanwhile, the current key is deleted promptly. Due to the one-wayness of the updating algorithm, the security of the previous keys is preserved even though the current one is compromised. 
Therefore, by carefully choosing a secure scheme that operates well with a  key evolving mechanism, forward-security of the scheme can be guaranteed.

As investigated by Song~\cite{Song01}, secret key exposure in group signatures is much more damaging than in  ordinary digital signatures.
In group signatures, if one group member's signing key is disclosed to the attacker, then the latter can sign arbitrary messages. 
In this situation, if the underlying group signature scheme is not secure against exposure of group signing keys, then the whole system has to be re-initialized, which is obviously inefficient in practice. Besides its inefficiency, this solution is also unsatisfactory. Once there is a break-in of the system, all previously signed group signatures become invalid since we do not have a mechanism to distinguish whether a signature is generated by a legitimate group member or by the attacker. What is worse, one of the easiest way for a misbehaving member Eve to attack the system and/or to repudiate her illegally signed signatures is to reveal her group signing key secretly in the Internet and then claim to be a victim of the key exposure problem~\cite{IR01}. Now the users who had accepted signatures \emph{before} Eve's group signing key is exposed are now at the mercy of all the group members, some of whom (e.g., Eve) would not reissue the signatures with the new key.

The aforementioned problems induced by the exposure of group signing keys motivated Song~\cite{Song01} to put forward  the notion of forward-secure group signature (FSGS), in which group members are able to update their group signing keys at each time period via a one-way key evolution algorithm. Therefore, when some group member's singing key is disclosed, all the signatures generated during past periods remain valid, which then prevents dishonest group members from repudiating signatures by simply exposing keys.
Later, Nakanishi, Hira, Funabiki~\cite{NHF09} defined a rigourous security model of FSGS for static groups, where users are fixed throughout the scheme, and demonstrated a pairing-based construction. Subsequently, Libert and Yung~\cite{LY10} extended Nakanishi et al.'s work to capture the setting of the dynamically growing groups. However, all these  schemes are constructions based on number-theoretic assumptions and are fragile in the presence of quantum adversaries. In order not to put all eggs in one basket, it is imperative to consider instantiations based on alternative, post-quantum foundations, e.g., lattice assumptions. In view of this, let us now look at the topic of lattice-based group signatures.


 \smallskip

 \noindent
{\sc Lattice-based group signatures.}
In 2010, Gordon et al.~\cite{GKV10} introduced the first lattice-based instantiation of GS. Since then, numerous schemes have been put forward with various improvements on security, efficiency, and functionality. While many of them~\cite{CNR12,LLLS13,LNW15,NZZ15,LLNW16,BCN18,PinoLS18} aim to provide enhancement on security and efficiency, they are solely designed for the static groups and often fall too short for specific needs of real-life applications. With regard to advanced features, there have been proposed several schemes~\cite{LLNW14-PKC,LLMNW16-dgs,LNWX17,LNWX18,LMN16,LNWX19-ats} and they are still behind their counterparts in the number-theoretic setting. Specifically, \cite{LLNW14-PKC,LLMNW16-dgs,LNWX17,LNWX18} deal with dynamic user enrollments and/or revocations of misbehaving users while~\cite{LMN16,LNWX19-ats} attempt to restrict the power of the tracing manager or keep his actions accountable. For the time being, the problem of making GS secure against the key exposure problem is still open in the context of lattices. Taking into account the great threat of key exposure to GS and the vulnerability of GS from number-theoretic assumptions in front of quantum computers, it would be tempting to investigate lattice-based instantiations of FSGS. Furthermore, it would be desirable to achieve it with reasonable overhead, e.g., with complexity at most poly-logarithmic in~$T$.

\medskip
\noindent
{\sc{Our Contributions.}} We introduce the first FSGS scheme in the context of lattices. 
The scheme satisfies the security requirements put forward by Nakanishi et al.~\cite{NHF09} in the random oracle model. Assuming the hardness of the Short Integer Solution ($\mathsf{SIS}$) problem and  the Learning With Errors ($\mathsf{LWE}$) problem, our scheme achieves full anonymity and a stronger notion of traceability named forward-secure traceability, which captures the traceability in the setting of key exposure problems.
Let $\lambda$ be the security parameter, $N$ be the expected number of group members, and $T$ be total time periods, our construction achieves signature  size $\widetilde{\mathcal{O}}(\lambda(\log N+\log T))$, group public key  size $\widetilde{\mathcal{O}}(\lambda^2(\log N+\log T))$, and secret key  size $\widetilde{\mathcal{O}}(\lambda^2(\log N+\log T)^2\log T)$. In particular, forward security is achieved with a reasonable cost: the size of keys and signatures are at most $\mathcal{O}(\log^3 T)$ larger than those of the basic GS scheme~\cite{LLNW14-PKC} upon which we build ours.

\smallskip
\noindent
{\sc{Overview of Our Techniques.}} Typically, designing secure GS requires a  combination of  digital signature, encryption scheme and zero-knowledge ($\mathsf{ZK}$) protocol.
Let us first consider an ordinary GS scheme similar to the template proposed by Bellare et al.~\cite{BMW03}.  In the scheme, each user is assigned an $\ell$ bit string $\mathrm{id}$ as identity, where $\ell=\log N$. The user's signing key is a signature on his identifier $\mathrm{id}$, generated by the group manager.  Specifically, we let the signing key be a short vector $\mathbf{v}_{\mathrm{id}}$  satisfying $\mathbf{A}_{\mathrm{id}}\cdot\mathbf{v}_{\mathrm{id}}=\mathbf{u} \bmod q$ for some public vector $\mathbf{u}$. When signing a message, the user first encrypts his identity $\mathrm{id}$ to a ciphertext $ \mathbf{c}$ and proves that he possesses a valid signature on his identity that is also correctly encrypted to $\mathbf{c}$.
To achieve forward-security, we would need a mechanism to update the group signing key periodically and a $\mathsf{ZK}$ protocol to prove that the key updating procedure is done honestly.


Inspired by the HIBE-like key evolving technique from Nakanishi et al.~\cite{NHF09} and Libert and Yung~\cite{LY10}, which in turn follows from~\cite{CHK03,BBG05,BSSW06}, we exploit the   hierarchical structure of the Bonsai tree~\cite{CHKP10} to enable periodical key updating. 
To the best of our knowledge, this is the only lattice-based HIBE in the standard model with supporting (Stern-like~\cite{Ste96}) $\mathsf{ZK}$ proofs by Langlois et al.~\cite{LLNW14-PKC}, which seems to be the right stepping stone towards our goal.
Let $T=2^d$ be the total number of time periods. To enable key updating, each user $\mathrm{id}$ is associated with a subtree of depth $d$, where the leaves of the tree correspond to successive time periods in the apparent way.
Let the subtree be identified by matrices $\mathbf{A}_{\mathrm{id}},\mathbf{A}_{\ell+1}^{0},\mathbf{A}_{\ell+1}^{1},\ldots,\mathbf{A}_{\ell+d}^{0},\mathbf{A}_{\ell+d}^1$ and
%
%
$z=\mathsf{Bin}(t)$ be the binary representation of $t$. In order to show the key evolution is done correctly, we observe that it suffices to prove possession of  a (short) Bonsai signature $\mathbf{v}_{\mathrm{id}\|z}$  satisfying $[\mathbf{A}_{\mathrm{id}}|\mathbf{A}_{\ell+1}^{z[1]}|\cdots|\mathbf{A}_{\ell+d}^{z[d]}]\cdot \mathbf{v}_{\mathrm{id}\|z}=\mathbf{u}\bmod q$. %
%
However, proving knowledge of the Bonsai signature departs from the protocol presented in~\cite{LLNW14-PKC}. The matrix $\mathbf{A}_{\mathrm{id}}$ should be secret  and the binary string $z$ should be public in our case while it is the other way around in~\cite{LLNW14-PKC}. Nevertheless, analyzing the above equation carefully, it actually reduces to proving knowledge of short vectors $\mathbf{w}_1$ and $\mathbf{w}_2$ and a binary string $\mathrm{id}$ such that  $\mathbf{A}_{\mathrm{id}}\cdot\mathbf{w}_1+\mathbf{A}''\cdot\mathbf{w}_2=\mathbf{u}\bmod q$,  where $\mathbf{v}_{\mathrm{id}\|z}=(\mathbf{w}_1\|\mathbf{w}_2)$ and $\mathbf{A}''$ is built from some public matrices. To prove knowledge of $\mathbf{w}_2$, we can employ the decomposition/extension/permutation techniques by Ling et al.~\cite{LNSW13} that operate in  Stern's framework~\cite{Ste96}. Regarding the $\mathsf{ZK}$ protocol for proving knowledge of $\mathbf{w}_1$ and $\mathrm{id}$, it indeed depends on   the signature scheme used by the group manager to certify users. For simplicity, we employ the Bonsai tree signature~\cite{CHKP10} as well. Then, by utilizing the $\mathsf{ZK}$ protocol in~\cite{LLNW14-PKC}, we are able to prove knowledge of $\mathbf{w}_1$ and $\mathrm{id}$ and manage to obtain the desired $\mathsf{ZK}$ protocol for proving possession of $\mathbf{v}_{\mathrm{id}\|z}$.  It is worth mentioning  that, besides the Bonsai signature, the Boyen signature~\cite{Boy10} is also a plausible candidate, for which a $\mathsf{ZK}$ protocol showing the possession of  a valid message-signature pair was known~\cite{LNW15}.

In the above, we have discussed the  (Stern-like) $\mathsf{ZK}$ protocol showing knowledge of correctly updated signing key $\mathbf{v}_{\mathrm{id}\|\mathsf{Bin}(t)}$, the main technical building block in achieving our FSGS scheme.
The next question is then how should the user derive $\mathbf{v}_{\mathrm{id}\|\mathsf{Bin}(t)}$ for all possible $t$ using his group signing key $\mathbf{v}_{\mathrm{id}}$.  To this end, we make a minor but significant change to the group signing key. Observe that for the Bonsai tree signature, once a trapdoor matrix $\mathbf{S}_{\mathrm{id}}$ satisfying $\mathbf{A}_{\mathrm{id}}\cdot\mathbf{S}_{\mathrm{id}}=\mathbf{0} \bmod q$ is known,  the user $\mathrm{id}$ is able to generate $\mathbf{v}_{\mathrm{id}\|\mathsf{Bin}(t)}$ for all possible $t$. Therefore, we let the user's signing key be $\mathbf{S}_{\mathrm{id}}$ instead.
%
%
%
Nevertheless, we then observe user $\mathrm{id}$ should not hold $\mathbf{S}_{\mathrm{id}}$ at all times, as the adversary could also generate all possible $\mathbf{v}_{\mathrm{id}\|\mathsf{Bin}(t)}$ once $\mathbf{S}_{\mathrm{id}}$ is known to him. One trivial method is to generate all possible $\mathbf{v}_{\mathrm{id}\|\mathsf{Bin}(t)}$ and then delete all the previous ones upon entering a new period. However, this will incur linear dependency on $T$, which is undesirable for efficiency purpose.

To achieve logarithmic overhead, we should think of a way to employ the structure of the Bonsai tree.
Let $\mathrm{Nodes}_{(t\rightarrow T-1)}$ be the set of nodes such that it has size at most $\log T$ and contains exactly one ancestor of each leaf or the leaf itself between $t$ and $T-1$\footnote{This set can be determined by the Nodeselect algorithm presented by Libert and Yung~\cite{LY10}.}. Now we let the signing key of user $\mathrm{id}$ at time $t$ be trapdoor matrices $ \mathbf{S}_{\mathrm{id}\|z} $ for all $z\in\mathrm{Nodes}_{(t\rightarrow T-1)}$. The user is then able to produce all possible $\mathbf{v}_{\mathrm{id}\|\mathsf{Bin}(t)}$  by employing $ \mathbf{S}_{\mathrm{id}\|z} $ if $z$ is an ancestor of $\mathsf{Bin}(t)$. More importantly, for each $z'\in\mathrm{Nodes}_{(t+1\rightarrow T-1)}$, there exists a unique ancestor $z\in\mathrm{Nodes}_{(t\rightarrow T-1)}$, which enables the evolving of the signing key from time  $t$ to $t+1$, thanks to the basis delegation algorithm of the Bonsai signature.

As discussed so far, we have shown how to update the key periodically and identified the $\mathsf{ZK}$ protocol for the honest behaviour of update. The thing that remains  is to find a public key encryption $(\mathsf{PKE})$ scheme that is compatible with the above ingredients. Furthermore, to achieve full anonymity, it typically requires the $\mathsf{PKE}$ scheme to be CCA-secure. To this end, we apply the CHK transform~\cite{CHK04} to the identity-based encryption scheme~\cite{GPV08}. For the obtained $\mathsf{PKE}$ scheme, we observe that there exists a Stern-like $ \mathsf{ZK}$ protocol (see~\cite{LNW15}) for proving knowledge of the plaintext, which is compatible in our setting.
To summarize, we have obtained a  lattice-based FSGS scheme by developing several technical building blocks from previous works in a non-trivial way.
Our scheme satisfies full anonymity due to the facts that the underlying encryption scheme is $\mathrm{CCA}$-secure and that the underlying $\mathsf{ZK}$ protocol is statistically zero-knowledge, and  achieves forward-secure traceability due to the security of the Bonsai tree signature~\cite{CHKP10}. We believe that, our construction - while not being truly novel - would certainly help to enrich the area of lattice-based GS. 

\smallskip

\noindent
{\sc Related Work.} Recently,  Kansal, Dutta and  Mukhopadhyay~\cite{KDM17} proposed a lattice-based FSGS scheme  that operates in the model of Libert and Yung~\cite{LY10}. Unfortunately, it can be observed that their construction does not satisfy the correctness and security requirements of~\cite{LY10}. 
(For details, see Appendix~\ref{appendix:incorrect}.)

\section{Preliminaries}\label{section:prelim}

\noindent Throughout the paper, all vectors are column vectors. When concatenating two matrices of form $\mathbf{A}\in\mathbb{R}^{n\times m}$ and $\mathbf{B}\in\mathbb{R}^{n\times k}$, we use the notion $[\mathbf{A}|\mathbf{B}]\in\mathbb{R}^{n\times (m+k)}$ while we denote $(\mathbf{x}\|\mathbf{y})\in\mathbb{R}^{m+k}$ as  the concatenation of two vectors of form $\mathbf{x}\in\mathbb{R}^m$ and $\mathbf{y}\in\mathbb{R}^k$. 
Let $[m]$ be the set $\{1,2,\cdots, m\}$.

\subsection{Forward-Secure Group Signatures}\label{subsection: model and security}
We now recall the syntax and security requirements of forward-secure group signature (FSGS), as formalized by Nakanishi et al.~\cite{NHF09}.
An FSGS scheme consists of the following polynomial-time algorithms.

\begin{description}
\item $\mathsf{KeyGen}$: This algorithm takes the tuple $(\lambda, T, N)$ as input, with $\lambda$ being security parameter, $T$ being total number of time periods, and $N$ being maximum number of  group members. It then returns group public key $\mathsf{gpk}$,  secret key $\mathsf{msk}$ of  group manager ($\mathsf{GM}$),  secret key $\mathsf{mosk}$ of tracing manager ($\mathsf{TM}$), initial user secret keys $\mathbf{usk}_0$.  $\mathbf{usk}_0$ is an array of initial $N$ secret signing key $\{\mathbf{usk}_0[0],\mathbf{usk}_0[1],\cdots,\mathbf{usk}_0[N-1]\}$,  with $\mathbf{usk}_0[i]$ being the initial key of user~$i$.
\item $\mathsf{KeyUpdate}$:  On inputs $\mathsf{gpk},~\mathbf{usk}_{t}[i]$, $i$, and $t+1$, with $\mathbf{usk}_{t}[i]$ being the secret signing key of user $i$ at time $t$, this randomized algorithm outputs the secret signing key $\mathbf{usk}_{t+1}[i]$ of user $i$ at time $t+1$.
\item $\mathsf{Sign}$: On inputs $\mathsf{gpk}$, $\mathbf{usk}_{t}[i]$, user $i$, time period $t$, and message $M$, this randomized algorithm generates a signature $\Sigma$ on   message $M$.
\item $\mathsf{Verify}$: It takes as inputs $\mathsf{gpk}$, time period $t$, message $M$ and signature $\Sigma$, and returns $1/0$ indicating the validity of the signature.
\item $\mathsf{Open}$: On inputs $\mathsf{gpk}$, $\mathsf{mosk}$, $t$, $M$ and $\Sigma$, this deterministic algorithm returns an index $i$ or $\bot$.

\end{description}

\noindent\emph{Correctness.} For all $\lambda, T,N$,  $(\mathsf{gpk},\mathsf{msk},\mathsf{mosk},\mathbf{usk}_0)\leftarrow \mathsf{KeyGen}(\lambda,T,N)$, $\forall i\in\{0,1,\cdots,N-1\}$, all $M\in\{0,1\}^*$, all $\mathbf{usk}_t[i]\leftarrow \mathsf{KeyUpdate}(\mathsf{gpk},\mathbf{usk}_{t-1}[i],i,t)$ for all $t\in\{0,1,\cdots T-1\}$, the following equations hold:
$$\mathsf{Verify}(\mathsf{gpk},t,M,\mathsf{Sign}(\mathsf{gpk},\mathbf{usk}_t[i],t,M))=1,$$
$$\mathsf{Open}(\mathsf{gpk},\mathsf{mosk},t,M,\mathsf{Sign}(\mathsf{gpk},\mathbf{usk}_t[i],t,M))=i.$$

\noindent\emph{Forward-Secure Traceability.} This requirement demands that any $\mathrm{PPT}$ adversary, even if it can corrupt the tracing manager and some (or all) group members, is not able to produce a valid signature (i) that is opened to some non-corrupted user or (ii) that is traced to some corrupted user, but the signature is signed at time period preceding the secret key query of this corrupted user. Note that (i) captures the standard traceability requirement as in~\cite{BMW03} while (ii) deals with the new requirement in the context of forward-security. Details are modelled in the experiment in Fig.~\ref{Exp:trace}.

In the experiment in  Fig.~\ref{Exp:trace}, the adversary can adaptively choose which user to corrupt, when to corrupt and  when to halt, and when to output its forgery. Furthermore, it is allowed to query signature on any message of a member $i$ through the signing oracle $\mathsf{Sign}(\mathbf{usk}_{t}[\cdot],\cdot)$ if $i\notin \mathrm{CU}$ at time period~$t$.

Define the advantage $\mathbf{Adv}_{\mathrm{FSGS},\mathcal{A}}^{\ms{Trace}}(\lambda,T,N)$ of adversary $\mathcal{A}$ against forward-secure traceability of an FSGS scheme as $\text{Pr}[\mathbf{Exp}_{\mathrm{FSGS},\mathcal{A}}^{\mathsf{Trace}}(\lambda,T,N)=1]$. An FSGS scheme is forward-secure traceable if the advantage of any $\mathrm{PPT}$ adversary is negligible.

\begin{figure}
  \centering


    \begin{minipage}{13cm}
    \underline{$\mathrm{Experiment}~\mathbf{Exp}_{\mathrm{FSGS},\mathcal{A}}^{\mathsf{Trace}}(\lambda,T,N)$} \\
    $\>(\mathsf{gpk},\mathsf{msk},\mathsf{mosk},\mathbf{usk}_0)\leftarrow\mathsf{KeyGen}(\lambda,T,N)$,\\
    $\>\mathrm{st}\leftarrow(\mathsf{gpk},\mathsf{mosk}),~\mathsf{CU}\leftarrow\emptyset$, $t:=0$, $\mathrm{Cont}=1,~\mathrm{stop}=0,~K\leftarrow\epsilon$,\\
    $\> \mathrm{For}(t=0,~t\leq T-1,~t++)$\\
    $\>~~~\mathrm{While}(\mathrm{Cont}=1)$ \\
           $~~~~~~\>(\mathrm{Cont},j,\mathrm{st})\leftarrow\mathcal{A}^{\mathsf{Sign}(\mathbf{usk}_{t}[\cdot],\cdot)}(\mathrm{choose1},K,\mathrm{st})$,\\
           $~~~~~~\> \mathrm{if}~\mathrm{Cont}=1,~ \mathrm{let}~\mathsf{CU}\leftarrow\mathsf{CU}\cup\{j\},K\leftarrow K\cup\{\mathsf{usk}_t[j],t\}$, $\mathrm{endif}$\\
    $\>~~~\mathrm{Endwhile}$  \\
    $\>~~~\mathrm{stop}\leftarrow\mathcal{A}(\mathrm{choose2,st},t,K),$\\
    $\>~~~    \mathrm{if}~ \mathrm{stop}=1, \mathrm{break};~\mathrm{else}~\mathrm{Cont}=1.$\\
    $\>\mathrm{Endfor}$\\
    $\>(t^*,M^*,\Sigma^*)\leftarrow\mathcal{A}(\mathrm{guess},\mathrm{st},t,K)$,\\
    $\> \text{Output} ~1$ if
    \begin{itemize}
    \item $\mathsf{Verify}(\mathsf{gpk},t^*,M^*,\Sigma^*)=1,$
    \item $\Sigma^*$ is not obtained from  querying the signing oracle by $\mathcal{A}$, and
    \item Compute $i^*\leftarrow\mathsf{Open}(\mathsf{gpk},\mathsf{mosk},t^*,M^*,\Sigma^*)$, then
        \begin{itemize}
        \item either $i^*\notin\mathsf{CU}$,
        \item or $i^*\in \mathsf{CU}$, but $\mathcal{A}$ only queried $\mathbf{usk}_{t}[i^*]$ for $t> t^*$.
        \end{itemize}
    \end{itemize}

  \end{minipage}

  \caption{Experiment used to define forward-secure traceability of an FSGS scheme.}\label{Exp:trace}
  \end{figure}

\noindent\emph{Full Anonymity.} This requirement demands that any $\mathrm{PPT}$ adversary is infeasible to figure out which of two signers of its choice signed the challenged message  of its choice at time period $t$ of its choice.  Details of this requirement is modelled in the experiment in  Fig.~\ref{Exp:anon}. In the  experiment in Fig.~\ref{Exp:anon}, the adversary is  accessible to secret keys of all group users and $\mathsf{GM}$,  and can query the opening of any signature  except for the challenged one through the opening oracle $\mathsf{Open}(\mathsf{mosk},\cdot)$.

Define  $\mathbf{Adv}_{\mathrm{FSGS},\mathcal{A}}^{\ms{Anon}}(\lambda,T,N)$ of  $\mathcal{A}$ against full anonymity of an FSGS scheme as $\text{Pr}[\mathbf{Exp}_{\mathrm{FSGS},\mathcal{A}}^{\mathsf{Anon}}(\lambda,T,N)=1]$. An FSGS scheme is fully anonymous if the advantage of any $\mathrm{PPT}$ adversary $\mathcal{A}$ is negligible.

 \begin{figure}
   \begin{minipage}{13cm}
  \smallskip
  \underline{$\mathrm{Experiment}~\mathbf{Exp}_{\mathrm{FSGS},\mathcal{A}}^{\mathsf{Anon}}(\lambda,T,N)$} \\
  $\>(\mathsf{gpk},\mathsf{msk},\mathsf{mosk},\mathbf{usk}_0)\leftarrow\mathsf{KeyGen}(\lambda,T,N)$,\\
  $(\mathrm{st},t,i_0,i_1,M)\leftarrow\mathcal{A}^{\mathsf{Open}(\mathsf{mosk},\cdot)}(\mathrm{choose},\mathsf{gpk},\mathsf{msk},\mathbf{usk}_0)$,\\ $b\leftarrow\{0,1\}, \Sigma\leftarrow\mathsf{Sign}(\ms{gpk},\ms{usk}_{t}[i_b],t,M)$,\\
  $b'\leftarrow\mathcal{A}^{\mathsf{Open}(\mathsf{mosk},\cdot)}(\mathrm{guess},\mathrm{st},\Sigma)$,\\
  $\mathrm{If}~b'=b,~\mathrm{then~output~1}$,\\
  $\mathrm{Else~output~0}$.

  \end{minipage}
  \caption{Experiment used to define full anonymity of an FSGS scheme.}\label{Exp:anon}
  \end{figure}

\subsection{Some Background on Lattices}
Let $n\in\mathbb{Z}^{+}$ and $\Lambda$ be a lattice of dimension $n$ over $\mathbb{R}^n$. Let $\mathbf{S}=\{\mathbf{s}_1,\cdots,\mathbf{s}_n\}\subset \mathbb{R}^n$ be a basis of $\Lambda$. For simplicity, we  write $\mathbf{S}=[\mathbf{s}_1|\cdots|\mathbf{s}_n]\in\mathbb{R}^{n\times n}$.
Define $\|\mathbf{S}\|=\mathrm{Max}_{i}\|\mathbf{s}_i\|$.
Let $\widetilde{\mathbf{S}}=[\widetilde{\mathbf{s}}_1|\cdots|\widetilde{\mathbf{s}}_n]$ be the Gram-Schmidt orthogonalization of  $\mathbf{S}$. We refer to $\|\widetilde{\mathbf{S}}\|$ as the Gram-Schmidt norm of $\mathbf{S}$. For any $\mathbf{c}\in \mathbb{R}^n$ and  $\sigma \in \mathbb{R}^{+}$, define the following:
$\rho_{\sigma,\mathbf{c}}(\mathbf{x})=\exp(-\pi\frac{\|\mathbf{x}-\mathbf{c}\|^2}{\sigma^2})$ and $\rho_{\sigma,\mathbf{c}}(\Lambda)=\sum_{\mathbf{x}\in\Lambda}\rho_{\sigma,\mathbf{c}}(\mathbf{x})$ for any $\mathbf{x}\in\Lambda$.
Define the discrete Gaussian distribution  over the lattice $\Lambda$ with parameter $\sigma$ and center $\mathbf{c}$ to be
$D_{\Lambda,\sigma,\mathbf{c}}(\mathbf{x})={\rho_{\sigma,\mathbf{c}}(\mathbf{x})}/{\rho_{\sigma,\mathbf{c}}(\Lambda)}$ for any $\mathbf{x}\in \Lambda$. We often omit $\mathbf{c}$ if it is $\mathbf{0}$.

Let $n,m,q \in \mathbb{Z}^{+}$ with $q\geq 2$. For $\mathbf{A}\in \mathbb{Z}_q^{n\times m}$ and $\mathbf{u}\in \mathbb{Z}_q^n$ that admits a solution to the equation $\mathbf{A}\cdot \mathbf{x}=\mathbf{u}\bmod q$, define
$$\Lambda^{\bot}(\mathbf{A})=\{\mathbf{e}\in\mathbb{Z}^m: \mathbf{A}\mathbf{e}=\mathbf{0} \mod q\}, ~~\Lambda^{\mathbf{u}}(\mathbf{A})=\{\mathbf{e}\in\mathbb{Z}^m: \mathbf{A}\mathbf{e}=\mathbf{u} \mod q\}.$$
Define discrete Gaussian distribution over  the set $\Lambda^{\mathbf{u}}(\mathbf{A})$ in the following way:  $D_{\Lambda^{\mathbf{u}}(\mathbf{A}),\sigma,\mathbf{c}}(\mathbf{x})={\rho_{\sigma,\mathbf{c}}(\mathbf{x})}/{\rho_{\sigma,\mathbf{c}}(\Lambda^{\mathbf{u}}(\mathbf{A}))}$ for $\mathbf{x}\in\Lambda^{\mathbf{u}}(\mathbf{A})$.



\begin{lemma}[\cite{GPV08,PR06}]\label{lemma:infinity-norm-bound}
Let $n,m,q\in \mathbb{Z}^+$ with $q\geq 2$ and   $m\geq 2n\log q$. Let $\sigma\in \mathbb{R}$ such that $\sigma\geq \omega(\sqrt{\log m})$.
\begin{itemize}
\item Then for all but a $2q^{-n}$ fraction of all $\mathbf{A}\in\mathbb{Z}_q^{n\times m}$, the distribution of the syndrome $\mathbf{u}=\mathbf{A}\cdot\mathbf{e}\mod q$ is within negligible statistical distance from uniform over $\mathbb{Z}_q^n$ for $\mathbf{e}\hookleftarrow D_{\mathbb{Z}^m,\sigma}$. Besides, given $\mathbf{A}\cdot\mathbf{e}=\mathbf{u}\mod q$, the conditional distribution of $\mathbf{e}\hookleftarrow D_{\mathbb{Z}^m,\sigma}$  is $D_{\Lambda^{\mathbf{u}}(\mathbf{A}),\sigma}$.
\item Let ${x}\hookleftarrow D_{\mathbb{Z},\sigma}$, $t=\log n$, and $\beta=\lceil\sigma\cdot t\rceil$. Then the probability of $|{x}|\leq\beta$ is overwhelming.
\item The  distribution $D_{\mathbb{Z}^m,\sigma}$  has min-entropy at least $m-1$.
\end{itemize}

\end{lemma}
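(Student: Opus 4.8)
\textbf{Proof proposal for Lemma~\ref{lemma:infinity-norm-bound}.}

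The three items are standard facts about discrete Gaussians, so the plan is to invoke the relevant results from \cite{GPV08,PR06} and the smoothing-parameter machinery, filling in the quantitative details. For the first item, the plan is to recall that the smoothing parameter $\eta_\epsilon(\mathbb{Z}^m)$ is bounded by $\omega(\sqrt{\log m})$ for negligible $\epsilon$, hence our hypothesis $\sigma \geq \omega(\sqrt{\log m})$ puts us above smoothing. Then I would appeal to the regularity/left-over-hash-style lemma of \cite{GPV08} (their Lemma~5.2, building on \cite{PR06}): for all but a $2q^{-n}$ fraction of $\mathbf{A}\in\mathbb{Z}_q^{n\times m}$ (this is where the condition $m\geq 2n\log q$ enters, ensuring $\Lambda^\bot(\mathbf{A})$ is ``full rank'' mod $q$ with overwhelming probability and that the relevant lattice has small enough smoothing parameter), sampling $\mathbf{e}\hookleftarrow D_{\mathbb{Z}^m,\sigma}$ and outputting $\mathbf{u}=\mathbf{A}\mathbf{e}\bmod q$ yields $\mathbf{u}$ within negligible statistical distance of uniform on $\mathbb{Z}_q^n$. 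The ``conditional distribution'' claim follows by a direct computation: conditioning $D_{\mathbb{Z}^m,\sigma}$ on the event $\{\mathbf{A}\mathbf{x}=\mathbf{u}\}$ restricts the support to the coset $\Lambda^{\mathbf{u}}(\mathbf{A})$ and the Gaussian weights $\rho_\sigma(\mathbf{x})$ renormalize exactly to $D_{\Lambda^{\mathbf{u}}(\mathbf{A}),\sigma}$, since $\rho_\sigma$ does not depend on $\mathbf{A}$.

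For the second item, the plan is to use the one-dimensional Gaussian tail bound: for $x\hookleftarrow D_{\mathbb{Z},\sigma}$ and any $k>0$, $\Pr[|x|>k\sigma] \leq 2\exp(-\pi(k^2 - o(1)))$ or, more crudely, a bound of the form $2^{-\Omega(k^2)}$ (this is Banaszczyk's tail inequality applied to the integer lattice, as used in \cite{GPV08}). Setting $k = t = \log n$ gives $\beta = \lceil \sigma t\rceil$ and failure probability at most $2\exp(-\pi(\log^2 n - o(1)))$, which is negligible in $n$, hence $|x|\leq\beta$ holds with overwhelming probability. The third item, the min-entropy bound, follows because for $\sigma$ above the smoothing parameter the maximum point mass $\max_{\mathbf{x}} D_{\mathbb{Z}^m,\sigma}(\mathbf{x})$ is at most $2^{-(m-1)}$; concretely, the mass at any single point is $\rho_\sigma(\mathbf{x})/\rho_\sigma(\mathbb{Z}^m) \leq 1/\rho_\sigma(\mathbb{Z}^m)$, and $\rho_\sigma(\mathbb{Z}^m) = \prod_{i=1}^m \rho_\sigma(\mathbb{Z}) \geq 2^{m-1}$ once $\sigma$ is large enough that $\rho_\sigma(\mathbb{Z}) \geq \sqrt{2}$ (equivalently, once $\sigma$ exceeds a small absolute constant, which is implied by $\sigma\geq\omega(\sqrt{\log m})$). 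This is exactly the min-entropy statement of \cite[Lemma~2.10]{GPV08}.

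The only genuinely delicate point is the first item: one must be careful that the $2q^{-n}$ exceptional fraction and the ``negligible statistical distance'' are the precise constants from \cite{GPV08,PR06}, and that the hypothesis $m\geq 2n\log q$ (rather than the bare $m\geq n\log q$) is what guarantees both that a random $\mathbf{A}$ is surjective mod $q$ and that $\lambda_1^\infty(\Lambda^\bot(\mathbf{A}))$ is large enough to keep the smoothing parameter below $\sigma$. I would therefore simply cite the statement verbatim from those works rather than re-derive the regularity lemma, since the excerpt explicitly permits assuming earlier results; the remaining two items are short self-contained computations as sketched above. $\qed$
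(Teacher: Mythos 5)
The paper gives no proof of this lemma at all: it is imported verbatim from \cite{GPV08,PR06} as background, so your plan of citing the regularity lemma, the Gaussian tail bound, and the min-entropy bound is exactly the paper's (implicit) treatment, and your sketches of the underlying arguments are the standard and correct ones. The only quibble is a small arithmetic slip in your third item: $\rho_\sigma(\mathbb{Z})\geq\sqrt{2}$ gives $\rho_\sigma(\mathbb{Z}^m)\geq 2^{m/2}$, not $2^{m-1}$; you would need $\rho_\sigma(\mathbb{Z})\geq 2^{1-1/m}$ (still just a slightly larger absolute constant on $\sigma$, and in any case covered by the cited Lemma~2.10 of \cite{GPV08}), so the conclusion is unaffected.
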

We next present two hard average-case  problems: the \emph{Short Integer Solution} ($\mathsf{SIS}$) problem (in the $\ell_\infty$ norm) and the \emph{Learning With Errors} ($\mathsf{LWE}$) problem.
\begin{definition}[\cite{Ajtai96,MR07,GPV08}, $\mathsf{SIS}_{n,m,q,\beta}^\infty$]Given $\mathbf{A}\xleftarrow{\$}\mathbb{Z}_q^{n\times m}$, find a vector $\mathbf{e}\in\mathbb{Z}^m$ so that $\mathbf{A}\cdot\mathbf{e}=\mathbf{0}\bmod q$ and $0<\|\mathbf{e}\|_{\infty}\leq \beta$.
\end{definition}
Let  $q> \beta \sqrt{n}$ be an integer and $m,\beta$ be polynomials in $n$, then solving the $\mathsf{SIS}_{n,m,q,\beta}^{\infty}$ problem (in the $\ell_{\infty}$ norm) is no easier than solving the  $\mathsf{SIVP}_{\gamma}$ problem in the worst-case for some $\gamma=\beta\cdot \widetilde{\mathcal{O}}(\sqrt{nm})$ (see \cite{GPV08,MP13}).

\begin{definition}[\cite{Regev05}, $\mathsf{LWE}_{n,q,\chi}$] 
For $\mathbf{s}\in\mathbb{Z}_q^n$, define a distribution $\mathcal{A}_{\mathbf{s},\chi} $ over $\mathbb{Z}_q^n\times \mathbb{Z}_q$ as follows: it samples a uniform vector $\mathbf{a}$ over $\mathbb{Z}_q^n$ and an element $e$ according to $\chi$,  and outputs the pair $(\mathbf{a},\mathbf{a}^\top\cdot \mathbf{s}+{e})$. Then the goal of the $\mathsf{LWE}_{n,q,\chi}$ problem is to distinguish $m=\mathrm{poly}(n)$ samples chosen according to the distribution $\mathcal{A}_{\mathbf{s},\chi}$ for some secret $\mathbf{s}\in\mathbb{Z}_q^n$ from $m$ samples chosen according to the uniform  distribution over $\mathbb{Z}_q^n\times \mathbb{Z}_q$.
\end{definition}
Let $B=\widetilde{\mathcal{O}}(\sqrt{n})$ and  $\chi$ be an efficiently samplable distribution over $\mathbb{Z}$  that outputs samples $e\in\mathbb{Z}$ with $|e|\leq B$ with all but negligible probability in $n$. If $q\geq 2$ is an arbitrary modulus, then the $\mathsf{LWE}_{n,q,\chi}$ problem is at least as hard as the worst-case problem $\mathsf{SIVP}_{\gamma}$ with $\gamma=\widetilde{\mathcal{O}}(n\cdot q/B)$ 
through an efficient quantum reduction~\cite{Regev05,BGV12}. Additionally, it is showed that the hardness of the $\mathsf{LWE}$ problem is  maintained when the secret $\mathbf{s}$ is chosen from the error distribution $\chi$ (see~\cite{ACPS09}).


Now let us recall some algorithms from previous works that will be used extensively in this work. 
\begin{lemma}[\cite{AP09}]\label{lemma:trapgen}
Let $n,m,q\in \mathbb{Z}^+$ with $q\geq 2$ and $m=O(n\log q)$.
There is a $\mathrm{PPT}$ algorithm $\mathsf{TrapGen}(n,m,q)$ which returns  a tuple $(\mathbf{A},\mathbf{S})$ such that \begin{itemize}
\item $\mathbf{A}$ is   within negligible statistical distance from uniform over $\mathbb{Z}_q^{n\times m}$,
\item $\mathbf{S}$ is a basis for $\Lambda^{\bot}(\mathbf{A})$, i.e., $\mathbf{A}\cdot\mathbf{S}=0\bmod q$, and $\|\widetilde{\mathbf{S}}\|\leq \mathcal{O}(\sqrt{n\log q})$.
\end{itemize}
\end{lemma}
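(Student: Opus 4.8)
\medskip
\noindent\textbf{Proof proposal.}
The plan is to follow the high-level strategy of Ajtai's construction as optimized by Alwen and Peikert~\cite{AP09} (and later refinements): reduce the statement to exhibiting one \emph{fixed} structured matrix that is surjective modulo $q$ and comes with a publicly computable basis of small Gram-Schmidt norm, then randomize that matrix into a statistically uniform $\mathbf{A}$ while controlling what the randomization does to the basis. For the fixed block I would take the gadget matrix $\mathbf{G}\in\mathbb{Z}_q^{n\times nk}$ with $k=\lceil\log q\rceil$ whose $i$-th row carries the powers $1,2,4,\ldots,2^{k-1}$ in columns $(i-1)k+1,\ldots,ik$: such $\mathbf{G}$ maps $\mathbb{Z}_q^{nk}$ onto $\mathbb{Z}_q^n$, admits an explicit short-preimage map, and one writes down by inspection a basis $\mathbf{S}_{\mathbf{G}}$ of $\Lambda^{\bot}(\mathbf{G})$ with $\|\widetilde{\mathbf{S}_{\mathbf{G}}}\|=\mathcal{O}(1)$.

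Next, split $m=m_1+nk$ with $m_1=\Theta(n\log q)$ taken large enough for the regularity item of Lemma~\ref{lemma:infinity-norm-bound} (reading the hypothesis $m=\mathcal{O}(n\log q)$ as $m\geq c\,n\log q$ for a suitable constant). The algorithm samples $\mathbf{A}_1\xleftarrow{\$}\mathbb{Z}_q^{n\times m_1}$ and $\mathbf{R}\in\mathbb{Z}^{m_1\times nk}$ with columns drawn independently from $D_{\mathbb{Z}^{m_1},\sigma}$ for $\sigma=\omega(\sqrt{\log m})$, and outputs $\mathbf{A}=[\mathbf{A}_1|\mathbf{G}-\mathbf{A}_1\mathbf{R}]\bmod q$, keeping $\mathbf{R}$ as the trapdoor. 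For the first bullet of the lemma, the first item of Lemma~\ref{lemma:infinity-norm-bound} gives that for all but a $2q^{-n}$ fraction of $\mathbf{A}_1$ the map $\mathbf{e}\mapsto\mathbf{A}_1\mathbf{e}\bmod q$ sends $D_{\mathbb{Z}^{m_1},\sigma}$ to a near-uniform distribution on $\mathbb{Z}_q^n$; applying this to each of the $nk$ columns of $\mathbf{R}$ shows $\mathbf{A}_1\mathbf{R}$, hence $\mathbf{G}-\mathbf{A}_1\mathbf{R}$, is statistically close to uniform even conditioned on $\mathbf{A}_1$, so $\mathbf{A}$ is within negligible distance of uniform over $\mathbb{Z}_q^{n\times m}$. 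For the second bullet, the identities $\mathbf{A}\left(\begin{smallmatrix}\mathbf{R}\\\mathbf{I}\end{smallmatrix}\right)=\mathbf{G}$ and $\mathbf{A}\left(\begin{smallmatrix}\mathbf{I}\\\mathbf{0}\end{smallmatrix}\right)=\mathbf{A}_1$ let me read off a basis of $\Lambda^{\bot}(\mathbf{A})$ from $\mathbf{R}$ and $\mathbf{S}_{\mathbf{G}}$: the $nk$ columns of $\left(\begin{smallmatrix}\mathbf{R}\\\mathbf{I}\end{smallmatrix}\right)\mathbf{S}_{\mathbf{G}}$ lie in $\Lambda^{\bot}(\mathbf{A})$ because $\mathbf{G}\mathbf{S}_{\mathbf{G}}=\mathbf{0}$, and for $j\in[m_1]$, computing via $\mathbf{S}_{\mathbf{G}}$ a short $\mathbf{g}_j$ with $\mathbf{G}\mathbf{g}_j$ equal to the $j$-th column of $\mathbf{A}_1$, the vector $\left(\begin{smallmatrix}\mathbf{e}_j\\\mathbf{0}\end{smallmatrix}\right)-\left(\begin{smallmatrix}\mathbf{R}\\\mathbf{I}\end{smallmatrix}\right)\mathbf{g}_j$ also lies in $\Lambda^{\bot}(\mathbf{A})$. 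Collecting these $m$ vectors gives a full-rank set whose lattice has determinant $\pm q^n=\det\Lambda^{\bot}(\mathbf{A})$, hence a basis $\mathbf{S}$; concretely $\mathbf{S}=\left(\begin{smallmatrix}\mathbf{I}&\mathbf{R}\\\mathbf{0}&\mathbf{I}\end{smallmatrix}\right)\mathbf{T}_0$ with $\mathbf{T}_0=\left(\begin{smallmatrix}\mathbf{I}_{m_1}&\mathbf{0}\\-\mathbf{W}&\mathbf{S}_{\mathbf{G}}\end{smallmatrix}\right)$ and $\mathbf{W}=[\mathbf{g}_1|\cdots|\mathbf{g}_{m_1}]$, so $\mathbf{A}\mathbf{S}=\mathbf{0}\bmod q$.

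The step I expect to be the main obstacle is the bound $\|\widetilde{\mathbf{S}}\|=\mathcal{O}(\sqrt{n\log q})$, since Lemma~\ref{lemma:infinity-norm-bound} gives only entrywise control of $\mathbf{R}$, whereas naively the columns of $\mathbf{S}$ involve products such as $\mathbf{R}\mathbf{g}_j$ of ordinary norm as large as $\mathcal{O}(n\log q)$ --- and indeed only $\|\mathbf{S}\|=\mathcal{O}(n\log q)$ holds. The resolution has two ingredients. First, one orders the columns of $\mathbf{T}_0$ with its $\mathbf{S}_{\mathbf{G}}$-block first; since the rows of that block span the whole bottom coordinate subspace, the Gram-Schmidt orthogonalization of (reordered) $\mathbf{T}_0$ has norm $\max(\|\widetilde{\mathbf{S}_{\mathbf{G}}}\|,1)=\mathcal{O}(1)$, independently of $\mathbf{W}$. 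Second, using the elementary fact $\|\widetilde{\mathbf{L}\mathbf{T}_0}\|\leq s_1(\mathbf{L})\cdot\|\widetilde{\mathbf{T}_0}\|$ (the orthogonalization ``factors through'' the linear map) with $\mathbf{L}=\left(\begin{smallmatrix}\mathbf{I}&\mathbf{R}\\\mathbf{0}&\mathbf{I}\end{smallmatrix}\right)$, it remains to bound the largest singular value $s_1(\mathbf{L})\leq\sqrt{s_1(\mathbf{R})^2+1}$, for which I would invoke the standard (non-lattice) estimate that a Gaussian matrix $\mathbf{R}\in\mathbb{Z}^{m_1\times nk}$ with parameter $\sigma$ has $s_1(\mathbf{R})=\mathcal{O}(\sqrt{n\log q})$ with overwhelming probability. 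Combining, $\|\widetilde{\mathbf{S}}\|=\mathcal{O}(\sqrt{n\log q})$. The remaining bookkeeping --- that the constant in $m_1=\Theta(n\log q)$ is compatible with both $m=\mathcal{O}(n\log q)$ and the regularity threshold, and that $\sigma$ can be fixed so the regularity statement and the singular-value bound hold at once --- is exactly the quantitative content of~\cite{AP09}.
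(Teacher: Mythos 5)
This lemma is imported from \cite{AP09}; the paper gives no proof of it, so there is no in-paper argument to compare against. Your sketch is a legitimate self-contained proof, but it follows the later gadget-trapdoor route of Micciancio and Peikert (EUROCRYPT 2012) rather than Alwen and Peikert's original construction: you build $\mathbf{A}=[\mathbf{A}_1\mid \mathbf{G}-\mathbf{A}_1\mathbf{R}]$ around the power-of-two gadget $\mathbf{G}$ and convert the trapdoor $\mathbf{R}$ into a basis via $\mathbf{S}=\mathbf{L}\mathbf{T}_0$, whereas \cite{AP09} works with a structured Hermite-normal-form-style block and assembles the short basis directly. The individual steps you give are sound: the identity $\mathbf{A}\mathbf{S}=\mathbf{0}\bmod q$, the determinant argument showing the $m$ vectors form a full basis of $\Lambda^{\bot}(\mathbf{A})$, the inequality $\|\widetilde{\mathbf{L}\mathbf{T}_0}\|\leq s_1(\mathbf{L})\cdot\|\widetilde{\mathbf{T}_0}\|$ (valid because the $i$-th Gram--Schmidt norm is a distance to the span of the preceding columns), and the observation that ordering the $\mathbf{S}_{\mathbf{G}}$-block first makes $\|\widetilde{\mathbf{T}_0}\|=\mathcal{O}(1)$ independently of $\mathbf{W}$.

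The one point to fix is quantitative. Drawing the columns of $\mathbf{R}$ from $D_{\mathbb{Z}^{m_1},\sigma}$ with $\sigma=\omega(\sqrt{\log m})$ gives $s_1(\mathbf{R})=\sigma\cdot\mathcal{O}(\sqrt{m_1}+\sqrt{nk})=\omega(\sqrt{\log m})\cdot\mathcal{O}(\sqrt{n\log q})$, so your final bound is $\|\widetilde{\mathbf{S}}\|=\omega(\sqrt{\log m})\cdot\mathcal{O}(\sqrt{n\log q})$, which overshoots the stated $\mathcal{O}(\sqrt{n\log q})$ by a superlogarithmic factor. To land exactly on the claimed bound, take $\mathbf{R}$ with i.i.d.\ entries from a bounded distribution (e.g., uniform over $\{-1,1\}$ or $\{0,1\}$) and obtain uniformity of $\mathbf{G}-\mathbf{A}_1\mathbf{R}$ from the leftover hash lemma rather than from the Gaussian regularity item of Lemma~\ref{lemma:infinity-norm-bound}; then $s_1(\mathbf{R})=\mathcal{O}(\sqrt{m_1}+\sqrt{nk})=\mathcal{O}(\sqrt{n\log q})$ with overwhelming probability, which is how \cite{AP09} reaches the stated Gram--Schmidt bound. (The slack is harmless for this paper, whose parameters absorb $\widetilde{\mathcal{O}}$ factors, but the lemma as stated does not contain it.) Your reading of the hypothesis ``$m=\mathcal{O}(n\log q)$'' as $m\geq c\,n\log q$ for a suitable constant is the intended one, and your split $m=m_1+nk$ is consistent with it.
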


\begin{lemma}[\cite{GPV08}]\label{lemma:sampled}
Let $\mathbf{S}\in\mathbb{Z}^{m\times m}$ be a basis of $\Lambda^{\bot}(\mathbf{A})$ for some $\mathbf{A}\in\mathbb{Z}_q^{n\times m}$ whose columns expand the entire group $\mathbb{Z}_q^n$. Let $\mathbf{u}$ be a  vector over $\mathbb{Z}_q^n$ and $s\geq \omega(\sqrt{\log n})\cdot\|\widetilde{\mathbf{S}}\|$. There is a $\mathrm{PPT}$ algorithm $\mathsf{SampleD}(\mathbf{A},\mathbf{S},\mathbf{u},s)$ which returns a vector $\mathbf{v}\in \Lambda^{\mathbf{u}}(\mathbf{A})$  from a distribution that is within negligible statistical distance from $D_{\Lambda^{\mathbf{u}}(\mathbf{A}),s}$.
\end{lemma}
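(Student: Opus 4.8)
\medskip
\noindent\emph{Proof plan.} The strategy is the standard trapdoor Gaussian-sampling argument of Gentry, Peikert and Vaikuntanathan~\cite{GPV08}. First I reduce sampling over the affine set $\Lambda^{\mathbf{u}}(\mathbf{A})$ to sampling over the lattice $\Lambda^{\bot}(\mathbf{A})$. Since the columns of $\mathbf{A}$ span $\mathbb{Z}_q^n$, Gaussian elimination over $\mathbb{Z}_q$ produces in polynomial time some particular solution $\mathbf{t}\in\mathbb{Z}^m$ with $\mathbf{A}\cdot\mathbf{t}=\mathbf{u}\bmod q$. Then $\Lambda^{\mathbf{u}}(\mathbf{A})=\mathbf{t}+\Lambda^{\bot}(\mathbf{A})$, and for every $\mathbf{v}$ in this coset one has $D_{\Lambda^{\mathbf{u}}(\mathbf{A}),s}(\mathbf{v})=D_{\Lambda^{\bot}(\mathbf{A}),s,-\mathbf{t}}(\mathbf{v}-\mathbf{t})$, because $\rho_{s,-\mathbf{t}}(\mathbf{w})=\rho_s(\mathbf{w}+\mathbf{t})$ and $\rho_{s,-\mathbf{t}}(\Lambda^{\bot}(\mathbf{A}))=\rho_s(\Lambda^{\mathbf{u}}(\mathbf{A}))$. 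Hence it suffices to sample $\mathbf{w}$ from a distribution statistically close to $D_{\Lambda^{\bot}(\mathbf{A}),s,-\mathbf{t}}$ using the known basis $\mathbf{S}$, and output $\mathbf{v}=\mathbf{t}+\mathbf{w}$.

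Second, I sample $\mathbf{w}$ via the randomized nearest-plane routine on $\Lambda^{\bot}(\mathbf{A})$. Compute the Gram--Schmidt orthogonalization $\widetilde{\mathbf{S}}=[\widetilde{\mathbf{s}}_1|\cdots|\widetilde{\mathbf{s}}_m]$ of $\mathbf{S}=[\mathbf{s}_1|\cdots|\mathbf{s}_m]$; set $\mathbf{c}_m:=-\mathbf{t}$ and, for $i=m,m-1,\ldots,1$, put $s_i':=s/\|\widetilde{\mathbf{s}}_i\|$ and $c_i':=\langle\mathbf{c}_i,\widetilde{\mathbf{s}}_i\rangle/\|\widetilde{\mathbf{s}}_i\|^2$, sample $z_i\hookleftarrow D_{\mathbb{Z},s_i',c_i'}$, and set $\mathbf{c}_{i-1}:=\mathbf{c}_i-z_i\mathbf{s}_i$; finally output $\mathbf{w}:=\sum_{i=1}^m z_i\mathbf{s}_i$. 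Since $\|\widetilde{\mathbf{s}}_i\|\leq\|\widetilde{\mathbf{S}}\|$ for all $i$, and $m=\mathrm{poly}(n)$ so that $\log m=\Theta(\log n)$, the hypothesis $s\geq\|\widetilde{\mathbf{S}}\|\cdot\omega(\sqrt{\log n})$ gives $s_i'\geq\omega(\sqrt{\log m})>1$; hence each one-dimensional discrete Gaussian $D_{\mathbb{Z},s_i',c_i'}$ is efficiently samplable (e.g.\ by rejection sampling), the whole routine is $\mathrm{PPT}$, and $\mathbf{w}$, being an integer combination of the $\mathbf{s}_i$, always lies in $\Lambda^{\bot}(\mathbf{A})$, so $\mathbf{v}=\mathbf{t}+\mathbf{w}\in\Lambda^{\mathbf{u}}(\mathbf{A})$.

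Third, I analyse the output law. For any fixed target $\mathbf{w}^*\in\Lambda^{\bot}(\mathbf{A})$ there is a unique sequence $z_m^*,\ldots,z_1^*$ realizing it, and the probability of outputting $\mathbf{w}^*$ is the telescoping product $\prod_i D_{\mathbb{Z},s_i',c_i'}(z_i^*)=\left(\prod_i\rho_{s_i',c_i'}(z_i^*)\right)/\left(\prod_i\rho_{s_i',c_i'}(\mathbb{Z})\right)$. By orthogonality of the Gram--Schmidt vectors one checks $\mathbf{w}^*+\mathbf{t}=\sum_i(z_i^*-c_i')\widetilde{\mathbf{s}}_i$, hence $\|\mathbf{w}^*+\mathbf{t}\|^2=\sum_i\|\widetilde{\mathbf{s}}_i\|^2(z_i^*-c_i')^2$, so the numerators multiply to $\rho_s(\mathbf{w}^*+\mathbf{t})=\rho_{s,-\mathbf{t}}(\mathbf{w}^*)$. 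For the denominators, $s_i'\geq\omega(\sqrt{\log m})\geq\eta_\epsilon(\mathbb{Z})$ for a suitable negligible $\epsilon$ (using $\eta_\epsilon(\mathbb{Z})\leq\sqrt{\ln(2(1+1/\epsilon))/\pi}$ and choosing the hidden $\omega(\sqrt{\log m})$ function large enough relative to $\epsilon$), so the smoothing lemma of~\cite{MR07} gives $\rho_{s_i',c_i'}(\mathbb{Z})\in[\tfrac{1-\epsilon}{1+\epsilon},\,1]\cdot s_i'$ regardless of $c_i'$; therefore $\prod_i\rho_{s_i',c_i'}(\mathbb{Z})$ lies within a multiplicative factor $(1+\epsilon)^m=1+\mathrm{negl}$ of the $\mathbf{w}^*$-independent constant $\prod_i s_i'=s^m/\det(\Lambda^{\bot}(\mathbf{A}))$. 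Consequently each $\mathbf{w}^*$ is output with probability within a $(1\pm\mathrm{negl})$ multiplicative factor of $D_{\Lambda^{\bot}(\mathbf{A}),s,-\mathbf{t}}(\mathbf{w}^*)$, so summing the pointwise differences over the coset bounds the statistical distance by $\mathrm{negl}$; combined with the first step, $\mathsf{SampleD}$ outputs an element of $\Lambda^{\mathbf{u}}(\mathbf{A})$ within negligible statistical distance of $D_{\Lambda^{\mathbf{u}}(\mathbf{A}),s}$.

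The delicate point is the third step: controlling the \emph{accumulated} error of the $m$ one-dimensional normalizers $\rho_{s_i',c_i'}(\mathbb{Z})$. This is exactly where the lower bound $s\geq\|\widetilde{\mathbf{S}}\|\cdot\omega(\sqrt{\log n})$ is essential, via the smoothing parameter of $\mathbb{Z}$, and one must fix the hidden $\omega(\sqrt{\log n})$ function (equivalently, $\epsilon$) so that $(1+\epsilon)^m-1$ stays negligible in $n$.
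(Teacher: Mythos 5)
Your proposal is correct: it is a faithful reconstruction of the Gentry--Peikert--Vaikuntanathan argument (reduce coset sampling to lattice sampling via a particular solution, run the randomized nearest-plane routine, and control the product of the one-dimensional normalizers via the smoothing parameter of $\mathbb{Z}$), which is precisely the proof the paper relies on. The paper itself states this lemma without proof, importing it directly from~\cite{GPV08}, so there is nothing in the paper to diverge from.
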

We also need the following two algorithms to securely delegate basis.

\begin{lemma}[\cite{CHKP10}]\label{lemma:extbasis}
Let $\mathbf{S}\in\mathbb{Z}^{m\times m}$ be a basis of $\Lambda^{\bot}(\mathbf{A})$ for some $\mathbf{A}\in\mathbb{Z}_q^{n\times m}$ whose columns generate the entire group $\mathbb{Z}_q^n$. Let $\mathbf{A'}\in\mathbb{Z}_q^{n\times {m'}}$ be any matrix containing $\mathbf{A}$ as a submatrix. There is a deterministic polynomial-time algorithm $\mathsf{ExtBasis}(\mathbf{S},\mathbf{A}')$ which returns a basis $\mathbf{S'}\in\mathbb{Z}^{m'\times m'}$ of $\Lambda^{\bot}(\mathbf{A}')$ with $\|\widetilde{\mathbf{S}'}\|=\|\widetilde{\mathbf{S}}\|$.
\end{lemma}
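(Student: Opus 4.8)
The plan is to give $\mathsf{ExtBasis}$ explicitly as a block construction and then to check, by elementary linear algebra, that its output is a basis of $\Lambda^{\bot}(\mathbf{A}')$ with exactly the stated Gram--Schmidt norm; since every step is constructive and deterministic, the polynomial running time is immediate. First I would reduce to the case where $\mathbf{A}$ forms the leftmost block of $\mathbf{A}'$. In general, pick a permutation matrix $\mathbf{P}$ with $\mathbf{A}'\mathbf{P}=[\mathbf{A}|\mathbf{A}_1]$ for some $\mathbf{A}_1\in\mathbb{Z}_q^{n\times(m'-m)}$, build a basis $\mathbf{S}''$ of $\Lambda^{\bot}([\mathbf{A}|\mathbf{A}_1])$ as described below, and output $\mathbf{S}':=\mathbf{P}\mathbf{S}''$. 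Since $\mathbf{x}\in\Lambda^{\bot}(\mathbf{A}')$ iff $\mathbf{P}^{-1}\mathbf{x}\in\Lambda^{\bot}(\mathbf{A}'\mathbf{P})$, we get $\Lambda^{\bot}(\mathbf{A}')=\mathbf{P}\cdot\Lambda^{\bot}(\mathbf{A}'\mathbf{P})$, so $\mathbf{S}'$ is a basis of $\Lambda^{\bot}(\mathbf{A}')$; and because $\mathbf{P}$ is orthogonal, orthogonalizing the columns of $\mathbf{P}\mathbf{S}''$ just produces $\mathbf{P}$ times the Gram--Schmidt vectors of $\mathbf{S}''$, hence $\|\widetilde{\mathbf{S}'}\|=\|\widetilde{\mathbf{S}''}\|$. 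From now on I assume $\mathbf{A}'=[\mathbf{A}|\mathbf{A}_1]$.

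Because the columns of $\mathbf{A}$ generate $\mathbb{Z}_q^n$, I can solve $\mathbf{A}\mathbf{W}=\mathbf{A}_1\bmod q$ for an integer matrix $\mathbf{W}=[\mathbf{w}_1|\cdots|\mathbf{w}_{m'-m}]\in\mathbb{Z}^{m\times(m'-m)}$ in deterministic polynomial time (Gaussian elimination over $\mathbb{Z}_q$ on each column, then lift the entries into $\{0,\dots,q-1\}$), and set
$$\mathbf{S}'=\begin{pmatrix}\mathbf{S}&-\mathbf{W}\\\mathbf{0}&\mathbf{I}_{m'-m}\end{pmatrix}.$$
Each column of $\mathbf{S}'$ lies in $\Lambda^{\bot}(\mathbf{A}')$: a column $(\mathbf{s}_j;\mathbf{0})$ is sent by $\mathbf{A}'$ to $\mathbf{A}\mathbf{s}_j=\mathbf{0}$, and a column $(-\mathbf{w}_i;\mathbf{e}_i)$ to $-\mathbf{A}\mathbf{w}_i+\mathbf{A}_1\mathbf{e}_i=\mathbf{0}\bmod q$. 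Conversely, if $\mathbf{v}=(\mathbf{v}_1;\mathbf{v}_2)\in\Lambda^{\bot}(\mathbf{A}')$ then $\mathbf{A}(\mathbf{v}_1+\mathbf{W}\mathbf{v}_2)=\mathbf{0}\bmod q$, so $\mathbf{v}_1+\mathbf{W}\mathbf{v}_2=\mathbf{S}\mathbf{z}$ for some $\mathbf{z}\in\mathbb{Z}^m$ and $\mathbf{v}=\mathbf{S}'(\mathbf{z};\mathbf{v}_2)$; since $\det\mathbf{S}'=\det\mathbf{S}\neq0$, the columns of $\mathbf{S}'$ form a $\mathbb{Z}$-basis of $\Lambda^{\bot}(\mathbf{A}')$.

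What remains is to compute $\|\widetilde{\mathbf{S}'}\|$, the only genuinely delicate point. I would orthogonalize the columns of $\mathbf{S}'$ in the order ``the $m$ columns coming from $\mathbf{S}$ first, then the $m'-m$ new columns''. The first $m$ lie in $\mathbb{R}^m\times\{\mathbf{0}\}$, so their Gram--Schmidt vectors are exactly $(\widetilde{\mathbf{s}}_j;\mathbf{0})$ and, $\mathbf{S}$ being a full-rank basis, they span all of $\mathbb{R}^m\times\{\mathbf{0}\}$. Hence when the $i$-th new column $(-\mathbf{w}_i;\mathbf{e}_i)$ is processed, subtracting its projection onto the span of all previously processed vectors removes the whole $-\mathbf{w}_i$ part and leaves $\mathbf{e}_i$ intact (it is already orthogonal to $\mathbf{e}_1,\dots,\mathbf{e}_{i-1}$), giving a Gram--Schmidt vector $(\mathbf{0};\mathbf{e}_i)$ of norm $1$. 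Therefore $\|\widetilde{\mathbf{S}'}\|=\max\{\|\widetilde{\mathbf{S}}\|,1\}$, which equals $\|\widetilde{\mathbf{S}}\|$ because $\widetilde{\mathbf{s}}_1=\mathbf{s}_1$ is a nonzero integer vector, so $\|\widetilde{\mathbf{S}}\|\geq1$. I expect this ordering trick together with the unipotent upper-triangular shape of the new block to be the crux; everything else is bookkeeping, with the initial column-permutation reduction being the one extra subtlety worth spelling out.
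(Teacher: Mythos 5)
The paper imports this lemma from \cite{CHKP10} without proof, so there is nothing internal to compare against; your argument is correct and is essentially the standard $\mathsf{ExtBasis}$ construction from that work (the unipotent block matrix $\left[\begin{smallmatrix}\mathbf{S}&-\mathbf{W}\\ \mathbf{0}&\mathbf{I}\end{smallmatrix}\right]$ with $\mathbf{A}\mathbf{W}=\mathbf{A}_1\bmod q$, plus the column-permutation reduction). Your Gram--Schmidt analysis, including the observation that the new orthogonalized vectors are the unit vectors $(\mathbf{0};\mathbf{e}_i)$ and that $\|\widetilde{\mathbf{S}}\|\geq 1$ for any integral basis, correctly yields $\|\widetilde{\mathbf{S}'}\|=\|\widetilde{\mathbf{S}}\|$.
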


\begin{lemma}[\cite{CHKP10}]\label{lemma:randbasis}
Let $\mathbf{S}$ be a basis of an $m$-dimensional integer lattice $\Lambda$ and a parameter $ s\geq \omega(\sqrt{\log n})\cdot\|\widetilde{\mathbf{S}}\|$. There is a $\mathrm{PPT}$ algorithm $\mathsf{RandBasis}(\mathbf{S},s)$ that outputs a new basis $\mathbf{S}'$ of $\Lambda$ with $\|\mathbf{S}'\|\leq s\cdot\sqrt{m}$.  Moreover, for any two bases  $\mathbf{S}_0,\mathbf{S}_1$ of $\Lambda$ and any $s\geq\mathrm{max}\{\|\widetilde{\mathbf{S}_0}\|,\|\widetilde{\mathbf{S}_1}\|\}\cdot\omega(\sqrt{\log n})$, the outputs of $\mathsf{RandBasis}(\mathbf{S}_0,s)$ and $\mathsf{RandBasis}(\mathbf{S}_1,s)$ are statistically close.

\end{lemma}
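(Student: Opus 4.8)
The plan is to recall the two-phase structure of $\mathsf{RandBasis}(\mathbf{S},s)$ and verify each asserted property in turn. In the first phase the algorithm draws vectors $\mathbf{v}_1,\mathbf{v}_2,\dots$ one at a time from $D_{\Lambda,s}$, using the Gaussian sampler of Lemma~\ref{lemma:sampled} on the lattice $\Lambda$ with center $\mathbf{0}$ and the input basis $\mathbf{S}$ (legitimate precisely because $s\geq\omega(\sqrt{\log n})\cdot\|\widetilde{\mathbf{S}}\|$), retaining a new sample only when it is linearly independent over $\mathbb{R}$ of those kept so far, and halting once $m$ independent vectors $\mathbf{v}_1,\dots,\mathbf{v}_m$ have been collected. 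In the second phase a deterministic conversion procedure $\mathsf{ToBasis}$ turns the full-rank set $\mathbf{V}=[\mathbf{v}_1|\cdots|\mathbf{v}_m]\subseteq\Lambda$ into an honest basis $\mathbf{S}'$ of $\Lambda$ with $\|\widetilde{\mathbf{S}'}\|\leq\|\widetilde{\mathbf{V}}\|$; I would record this as a self-contained fact, proved by iteratively reducing the integer matrix $\mathbf{U}$ relating $\mathbf{V}$ to a fixed basis of $\Lambda$ to a unimodular one via elementary column operations that never lengthen a Gram--Schmidt vector.

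Next I would establish the norm bound. A standard discrete-Gaussian tail estimate (the $\ell_2$ counterpart of the second item of Lemma~\ref{lemma:infinity-norm-bound}) gives that $\Pr[\,\|\mathbf{v}_i\| > s\sqrt{m}\,]$ is negligible for each $i$, so a union bound yields $\|\widetilde{\mathbf{V}}\|\leq\max_i\|\mathbf{v}_i\|\leq s\sqrt{m}$ with overwhelming probability, whence $\|\mathbf{S}'\|\leq s\sqrt{m}$. For the $\mathrm{PPT}$ claim I would show that a fresh sample avoids the real span of the retained vectors with probability at least $1/2$: for any fixed proper subspace $H\subsetneq\mathbb{R}^m$ one has $\Pr_{\mathbf{v}\leftarrow D_{\Lambda,s}}[\mathbf{v}\in H]\leq 1/2$ whenever $s$ exceeds the smoothing parameter of $\Lambda$, a consequence of the near-uniformity of $D_{\Lambda,s}\bmod 2\Lambda$ (equivalently, of the min-entropy bound in the third item of Lemma~\ref{lemma:infinity-norm-bound}). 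Hence each of the $m$ steps needs $O(1)$ draws in expectation, $O(m)$ in total, and the algorithm runs in expected polynomial time.

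For the ``moreover'' part I would exploit the fact that the output of $\mathsf{RandBasis}(\mathbf{S},s)$ depends on $\mathbf{S}$ only through the Gaussian samples it produces: the linear-independence test and the conversion step are functions of the sampled vectors and of the lattice $\Lambda$ alone — $\mathsf{ToBasis}$ can be handed a canonical reference for $\Lambda$, e.g.\ its Hermite normal form, so that its output is determined by $\{\mathbf{v}_i\}$ and $\Lambda$. Under the hypothesis $s\geq\max\{\|\widetilde{\mathbf{S}_0}\|,\|\widetilde{\mathbf{S}_1}\|\}\cdot\omega(\sqrt{\log n})$, for each $b\in\{0,1\}$ the samples obtained with basis $\mathbf{S}_b$ are within negligible statistical distance of the ideal distribution $D_{\Lambda,s}$, which does not involve $b$; composing with the common post-processing and using the data-processing inequality for statistical distance, $\mathsf{RandBasis}(\mathbf{S}_0,s)$ and $\mathsf{RandBasis}(\mathbf{S}_1,s)$ are each statistically close to one fixed distribution, hence to each other.

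The step I expect to be the main obstacle is the quantitative subspace-avoidance bound underlying the expected-polynomial-time guarantee, namely $\Pr_{\mathbf{v}\leftarrow D_{\Lambda,s}}[\mathbf{v}\in H]\leq 1/2$ for every proper subspace $H$; the secondary point requiring care is isolating a version of $\mathsf{ToBasis}$ that simultaneously does not increase the Gram--Schmidt norm and produces an output depending only on $\{\mathbf{v}_i\}$ and $\Lambda$, since the ``moreover'' clause hinges on the latter. The remaining ingredients — Gaussian tail bounds, correctness of the sampler from Lemma~\ref{lemma:sampled}, and the data-processing inequality — are routine.
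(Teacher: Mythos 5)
The paper offers no proof of this lemma; it is imported as a black box from~\cite{CHKP10}, so the only meaningful comparison is with the original argument there --- which is essentially what you reconstruct: rejection-sample $m$ linearly independent vectors from $D_{\Lambda,s}$ via the Gaussian sampler, convert them to a basis, and argue that the output is a randomized function of the samples and of $\Lambda$ alone (hence basis-independent once $\mathsf{ToBasis}$ is anchored to a canonical basis such as the Hermite normal form), so that per-sample closeness to $D_{\Lambda,s}$ transfers to the outputs by data processing. You correctly isolate the two delicate points, namely the subspace-avoidance probability and the basis-independence of the post-processing. The one place your write-up is looser than the claim is the norm bound: the conversion guarantee you actually state controls the Gram--Schmidt norm, giving $\|\widetilde{\mathbf{S}'}\|\le\|\widetilde{\mathbf{V}}\|\le\max_i\|\mathbf{v}_i\|\le s\sqrt{m}$, whereas the lemma bounds $\|\mathbf{S}'\|=\max_i\|\mathbf{s}'_i\|$; a basis with short Gram--Schmidt vectors need not consist of short vectors, so the final ``whence'' is not automatic and requires the stronger form of the Micciancio--Goldwasser conversion that also does not increase the lengths of the (sorted) input vectors --- the property that~\cite{CHKP10} in fact invokes (alternatively, note that only $\|\widetilde{\mathbf{S}'}\|$ is ever used downstream in this paper, as it feeds the Gaussian parameter of the next delegation level). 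With that version of $\mathsf{ToBasis}$ supplied the argument is complete; everything else, including the mod-$2\Lambda$ justification of $\Pr[\mathbf{v}\in H]\le 1/2$, is sound, though I would drop the aside that this is ``equivalent'' to the min-entropy bound, since min-entropy alone does not control the mass of the infinite set $H\cap\Lambda$.
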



\subsection{The Bonsai Tree Signature Scheme}\label{subsection:bonsai-tree}
Our construction builds on the Bonsai tree signature scheme~\cite{CHKP10}. Now we describe it briefly. The scheme takes the following parameters: $\lambda$ is the security parameter and $n=\mathcal{O}(\lambda)$, $\ell$ is the message length, integer $q=\mathrm{poly}(n)$ is sufficiently large, $m=\mathcal{O}(n\log q)$, $\widetilde{L}=\mathcal{O}(\sqrt{n\log q})$, $s=\omega({\sqrt{\log n}})\cdot\widetilde{L}$, and $\beta=\lceil s\cdot\log n\rceil$. The verification key is the tuple $(\mathbf{A}_0,\mathbf{A}_1^0,\mathbf{A}_1^1,\ldots,\mathbf{A}_\ell^0,\mathbf{A}_\ell^1,\mathbf{u})$  while the signing key is $\mathbf{S}_0$, where $(\mathbf{A}_0,\mathbf{S}_0)$ is generated by the $\mathsf{TrapGen}(n,m,q)$ algorithm as described in Lemma \ref{lemma:trapgen} and  matrices $\mathbf{A}_1^0,\mathbf{A}_1^1,\ldots,\mathbf{A}_\ell^0,\mathbf{A}_\ell^1$ and vector $\mathbf{u}$ are all uniformly random and independent over $\mathbb{Z}_q^{n\times m}$ and $\mathbb{Z}_q^{n}$, respectively.

To sign a binary message $\mathrm{id}\in\{0,1\}^\ell$, the signer first computes the matrix $\mathbf{A}_{\mathrm{id}}:=[\mathbf{A}_0|\mathbf{A}_1^{\mathrm{id}[1]}|\cdots|\mathbf{A}_\ell^{\mathrm{id}[\ell]}]\in\mathbb{Z}_q^{n\times (\ell+1)m}$, and then outputs a vector $\mathbf{v}\in\Lambda^{\mathbf{u}}(\mathbf{A}_{\mathrm{id}})$ via the algorithm $\mathsf{SampleD}(\mathsf{ExtBasis}(\mathbf{S}_0,\mathbf{A}_{\mathrm{id}}),\mathbf{u},s)$. To verify the validity of $\mathbf{v}$ on message $\mathrm{id}$, the verifier  computes   $\mathbf{A}_{\mathrm{id}}$ as above and checks if  $\mathbf{A}_{\mathrm{id}}\cdot\mathbf{v}=\mathbf{u}\mod q$ and $ \|\mathbf{v}\|_{\infty}\leq \beta$ hold. They proved that this signature scheme is existential unforgeable under static chosen message attacks   based on the hardness of the $\mathsf{SIS}$ problem.

\subsection{Stern-Like Zero-Knowledge Argument Systems}\label{subsection:Stern}
The statistical zero-knowledge argument of knowledge ($\mathsf{ZKAoK}$) presented in this work are Stern-like~\cite{Ste96} protocols. In 1996, Stern~\cite{Ste96} suggested a three-move zero-knowledge protocol for the well-known syndrome decoding ($\mathsf{SD}$) problem. It was then later adapted to the lattice setting for a restricted version of Inhomogeneous Short Integer Solution ($\mathsf{ISIS}^{\infty}$) problem by Kawachi et al.~\cite{KTX08}. More recently, Ling et al.~\cite{LNSW13} generalized the protocol to handle more versatile relations that find applications in the designs of various lattice-based constructions (see, e.g.,~\cite{LLNW16,LLMNW16-ge,NguyenTW17,LLNW17-AC-PRF,LLMNW17-AC-OT,LLNW18}). Libert et al.~\cite{LLMNW16-dgs} put forward an abstraction of Stern's protocol to capture a wider range of lattice-based relations, which  we now recall.

\smallskip
\noindent
{\bf An abstraction of Stern's Protocol.} Let  $K, L, q\in \mathbb{Z}^+$ with $L\geq K$ and $q \geq 2$, and let $\mathsf{VALID}\subset\{-1,0,1\}^L$. Given a finite set $\mathcal{S}$, associate every $\phi \in \mathcal{S}$ with a permutation $\Gamma_\phi$ of $L$ elements so that the following conditions hold:
\begin{eqnarray}\label{eq:zk-equivalence}
\begin{cases}
\mathbf{w} \in \mathsf{VALID} \hspace*{2.5pt} \Longleftrightarrow \hspace*{2.5pt} \Gamma_\phi(\mathbf{w}) \in \mathsf{VALID}, \\
\text{If } \mathbf{w} \in \mathsf{VALID} \text{ and } \phi \text{ is uniform in } \mathcal{S}, \text{ then }  \Gamma_\phi(\mathbf{w}) \text{ is uniform in } \mathsf{VALID}.
\end{cases}
\end{eqnarray}
 The target is to construct a statistical \textsf{ZKAoK} for the abstract relation of the following form:
\begin{eqnarray*}
\mathrm{R_{abstract}} = \big\{(\mathbf{M}, \mathbf{u}), \mathbf{w} \in \mathbb{Z}_q^{K \times L} \times \mathbb{Z}_q^K \times \mathsf{VALID}: \mathbf{M}\cdot \mathbf{w} = \mathbf{u} \bmod q.\big\}
\end{eqnarray*}

To obtain the desired $\mathsf{ZKAoK}$ protocol, one has to prove that $\mathbf{w} \in \mathsf{VALID}$ and $\mathbf{w}$ satisfies the linear equation $\mathbf{M}\cdot \mathbf{w}=\mathbf{u}\bmod q$. To prove the former condition holds in a $\mathsf{ZK}$ manner, the prover chooses $\phi \xleftarrow{\$}\mathcal{S}$ and let the verifier check $\Gamma_\phi(\mathbf{w}) \in \mathsf{VALID}$. According to the first condition in~(\ref{eq:zk-equivalence}), the verifier should be convinced that $\mathbf{w}$ is indeed from the set $\mathsf{VALID}$. At the same time, the verifier is not able to learn any extra information about $\mathbf{w}$ due to the second condition in~(\ref{eq:zk-equivalence}). To show in $\mathsf{ZK}$ that the linear equation holds, the prover simply chooses $\mathbf{r}_w \xleftarrow{\$} \mathbb{Z}_q^L$ as a masking vector and then shows to the verifier that the equation $\mathbf{M}\cdot (\mathbf{w} + \mathbf{r}_w) = \mathbf{M}\cdot \mathbf{r}_w + \mathbf{u} \bmod q$ holds instead.


Figure~\ref{Figure:Interactive-Protocol} describes  the interaction between the prover $\mathcal{P}$ and the verifier $\mathcal{V}$ in details. The system utilizes  a statistically hiding and computationally binding string commitment scheme \textsf{COM}  from~\cite{KTX08}.

\begin{figure}[!htbp]

\begin{enumerate}
  \item \textbf{Commitment:} Prover chooses $\mathbf{r}_w \xleftarrow{\$} \mathbb{Z}_q^L$, $\phi \xleftarrow{\$} \mathcal{S}$ and randomness $\rho_1, \rho_2, \rho_3$ for $\mathsf{COM}$.
Then he sends $\mathrm{CMT}= \big(C_1, C_2, C_3\big)$ to the verifier, where
    \begin{gather*}
        C_1 =  \mathsf{COM}(\phi, \mathbf{M}\cdot \mathbf{r}_w \bmod q; \rho_1), \hspace*{5pt}
        C_2 =  \mathsf{COM}(\Gamma_{\phi}(\mathbf{r}_w); \rho_2), \\
        C_3 =  \mathsf{COM}(\Gamma_{\phi}(\mathbf{w} + \mathbf{r}_w \bmod q); \rho_3).
    \end{gather*}

  \item \textbf{Challenge:} $\mathcal{V}$ randomly choose a challenge $Ch$ from the set $ \{1,2,3\}$ and sends it to $\mathcal{P}$.
  \item \textbf{Response:} According to the choice of $Ch$, $\mathcal{P}$ sends back response $\mathrm{RSP}$ computed in the following manner:
  \smallskip
\begin{itemize}
\item $Ch = 1$: Let $\mathbf{t}_{w} = \Gamma_{\phi}(\mathbf{w})$, $\mathbf{t}_{r} = \Gamma_{\phi}(\mathbf{r}_w)$, and $\mathrm{RSP} = (\mathbf{t}_w, \mathbf{t}_r, \rho_2, \rho_3)$. 

\item $Ch = 2$: Let $\phi_2 = \phi$, $\mathbf{w}_2 = \mathbf{w} + \mathbf{r}_w \bmod q$, and
    $\mathrm{RSP} = (\phi_2, \mathbf{w}_2, \rho_1, \rho_3)$. 
\item $Ch = 3$: Let $\phi_3 = \phi$, $\mathbf{w}_3 = \mathbf{r}_w$, and
 $\mathrm{RSP} = (\phi_3, \mathbf{w}_3, \rho_1, \rho_2)$.
\end{itemize}
\end{enumerate}
\textbf{Verification:}  When receiving $\mathrm{RSP}$ from $\mathcal{P}$, $\mathcal{V}$ performs as follows:
\smallskip
          \begin{itemize}
            \item $Ch = 1$: Check that $\mathbf{t}_w \in \mathsf{VALID}$, $C_2 = \mathsf{COM}(\mathbf{t}_r; \rho_2)$, ${C}_3 = \mathsf{COM}(\mathbf{t}_w + \mathbf{t}_r \bmod q; \rho_3)$. \smallskip

             \item $Ch = 2$: Check that $C_1 = \mathsf{COM}(\phi_2, \mathbf{M}\cdot \mathbf{w}_2 - \mathbf{u} \bmod q; \rho_1)$, ${C}_3 = \mathsf{COM}(\Gamma_{\phi_2}(\mathbf{w}_2); \rho_3)$. \smallskip

            \item $Ch = 3$: Check that $C_1 =  \mathsf{COM}(\phi_3, \mathbf{M}\cdot \mathbf{w}_3; \rho_1), \hspace*{5pt}
        C_2 =  \mathsf{COM}(\Gamma_{\phi_3}(\mathbf{w}_3); \rho_2).$

          \end{itemize}
          In each case, $\mathcal{V}$ returns $1$ if and only if all the conditions hold.
\caption{Stern-like \textsf{ZKAoK} for the relation $\mathrm{R_{abstract}}$.}
\label{Figure:Interactive-Protocol}
\end{figure}

\begin{theorem}[\cite{LLMNW16-dgs}]\label{theorem:zk-protocol}
Let $\mathsf{COM}$ be a statistically hiding and computationally binding string commitment scheme. Then the interactive depicted in Figure~\ref{Figure:Interactive-Protocol} is a statistical \emph{\textsf{ZKAoK}} with perfect completeness, soundness error~$2/3$, and communication cost~$\mathcal{O}(L\log q)$. Specifically:
\begin{itemize}
\item There exists a polynomial-time simulator that, on input $(\mathbf{M}, \mathbf{u})$, with probability $2/3$ it outputs an accepted transcript that is within negligible statistical distance from the one produced by an honest prover who knows the witness.
\item There is an algorithm that, takes as inputs $(\mathbf{M},\mathbf{u})$ and three accepting transcripts  $(\mathrm{CMT},1,\mathrm{RSP}_1)$, $(\mathrm{CMT},2,\mathrm{RSP}_2)$, $(\mathrm{CMT},3,\mathrm{RSP}_3)$ on $(\mathbf{M},\mathbf{u})$, and outputs $\mathbf{w}' \in \mathsf{VALID}$ such that $\mathbf{M}\cdot \mathbf{w}' = \mathbf{u} \bmod q$ in  polynomial time.
\end{itemize}
\end{theorem}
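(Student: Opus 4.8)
The plan is to establish separately the four assertions bundled in the theorem: perfect completeness, the communication cost $\mathcal{O}(L\log q)$, statistical zero-knowledge (existence of the simulator), and knowledge soundness with error $2/3$ (existence of the extractor). The first two are routine. For completeness I would verify the three cases of the verification phase against an honest transcript. For $Ch=1$ this uses $\mathbf{w}\in\mathsf{VALID}\Rightarrow\Gamma_\phi(\mathbf{w})\in\mathsf{VALID}$ from the first line of~(\ref{eq:zk-equivalence}), together with the fact that $\Gamma_\phi$ permutes the $L$ coordinates and is hence linear over $\mathbb{Z}_q^L$, so that $\Gamma_\phi(\mathbf{w})+\Gamma_\phi(\mathbf{r}_w)=\Gamma_\phi(\mathbf{w}+\mathbf{r}_w)$ and $C_2,C_3$ reopen; for $Ch=2$ it uses $\mathbf{M}\mathbf{w}_2-\mathbf{u}=\mathbf{M}(\mathbf{w}+\mathbf{r}_w)-\mathbf{u}=\mathbf{M}\mathbf{r}_w$ since $\mathbf{M}\mathbf{w}=\mathbf{u}$; for $Ch=3$ both checks are immediate. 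The cost bound follows by counting: $\mathrm{CMT}$ is three fixed-length $\mathsf{COM}$ outputs, the challenge is a single trit, and each possible $\mathrm{RSP}$ consists of a constant number of vectors of $\mathbb{Z}_q^L$ plus a constant number of commitment-randomness strings, for $\mathcal{O}(L\log q)$ bits in total.

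For zero-knowledge I would build a simulator $\mathcal{S}$ that, on input $(\mathbf{M},\mathbf{u})$ and with black-box access to a (possibly cheating) verifier, samples a guess $\overline{Ch}\xleftarrow{\$}\{1,2,3\}$ of the challenge it will \emph{not} be able to answer, and fabricates $\mathrm{CMT}$ so that the other two challenges can be handled. If $\overline{Ch}=1$, it computes by linear algebra some $\mathbf{w}'$ with $\mathbf{M}\mathbf{w}'=\mathbf{u}\bmod q$ (not necessarily in $\mathsf{VALID}$), samples $\phi\xleftarrow{\$}\mathcal{S}$, $\mathbf{r}_w\xleftarrow{\$}\mathbb{Z}_q^L$, and forms $C_1,C_2,C_3$ exactly as an honest prover would with $(\mathbf{w}',\phi,\mathbf{r}_w)$; then $Ch\in\{2,3\}$ can be answered. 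If $\overline{Ch}\in\{2,3\}$, it instead picks an arbitrary $\mathbf{w}'\in\mathsf{VALID}$ (ignoring the linear equation), samples $\phi,\mathbf{r}_w$ as above, sets $C_2=\mathsf{COM}(\Gamma_\phi(\mathbf{r}_w);\rho_2)$ and $C_3=\mathsf{COM}(\Gamma_\phi(\mathbf{w}'+\mathbf{r}_w);\rho_3)$, and sets $C_1$ to be exactly the value making the \emph{answered} challenge verify: $C_1=\mathsf{COM}(\phi,\mathbf{M}\mathbf{r}_w;\rho_1)$ when $\overline{Ch}=2$ (so that $Ch=3$ verifies) and $C_1=\mathsf{COM}(\phi,\mathbf{M}(\mathbf{w}'+\mathbf{r}_w)-\mathbf{u};\rho_1)$ when $\overline{Ch}=3$ (so that $Ch=2$ verifies). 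If the verifier's actual challenge equals $\overline{Ch}$, $\mathcal{S}$ aborts; otherwise it outputs the accepting transcript. Since $\overline{Ch}$ is uniform and independent of the verifier's view before the challenge is issued, $\mathcal{S}$ succeeds with probability $2/3$. To show the output is within negligible statistical distance of a real transcript, I would first invoke the statistical hiding of $\mathsf{COM}$ to argue that the single unopened commitment can be replaced by a commitment to the honest value undetectably, and then use both properties in~(\ref{eq:zk-equivalence}): uniformity of $\phi$ makes $\Gamma_\phi(\mathbf{w}')$ uniform over $\mathsf{VALID}$ (so the arbitrary choice of $\mathbf{w}'$ in the $Ch=1$ branch is hidden), and uniformity of $\mathbf{r}_w$ makes $\mathbf{w}'+\mathbf{r}_w$ and $\Gamma_\phi(\mathbf{r}_w)$ uniform over $\mathbb{Z}_q^L$ (so the data revealed in the $Ch\in\{2,3\}$ branches matches the honest distribution).

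For the extractor, I would take three accepting transcripts $(\mathrm{CMT},1,\mathrm{RSP}_1)$, $(\mathrm{CMT},2,\mathrm{RSP}_2)$, $(\mathrm{CMT},3,\mathrm{RSP}_3)$ sharing the same commitment, parse $\mathrm{RSP}_1=(\mathbf{t}_w,\mathbf{t}_r,\rho_2,\rho_3)$, $\mathrm{RSP}_2=(\phi_2,\mathbf{w}_2,\rho_1,\rho_3')$, $\mathrm{RSP}_3=(\phi_3,\mathbf{w}_3,\rho_1',\rho_2')$, and exploit the computational binding of $\mathsf{COM}$: the openings of $C_1,C_2,C_3$ forced by the three transcripts must agree (a disagreement is a binding collision, contradicting the hypothesis on $\mathsf{COM}$), so $\phi_2=\phi_3=:\phi$, $\mathbf{M}\mathbf{w}_2-\mathbf{u}=\mathbf{M}\mathbf{w}_3\bmod q$, $\mathbf{t}_r=\Gamma_\phi(\mathbf{w}_3)$ and $\mathbf{t}_w+\mathbf{t}_r=\Gamma_\phi(\mathbf{w}_2)\bmod q$, while the $Ch=1$ check gives $\mathbf{t}_w\in\mathsf{VALID}$. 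I then set $\mathbf{w}':=\Gamma_\phi^{-1}(\mathbf{t}_w)$; the equivalence in~(\ref{eq:zk-equivalence}) yields $\mathbf{w}'\in\mathsf{VALID}$, and combining the relations above with linearity and injectivity of $\Gamma_\phi$ gives $\Gamma_\phi(\mathbf{w}')=\mathbf{t}_w=\Gamma_\phi(\mathbf{w}_2)-\mathbf{t}_r=\Gamma_\phi(\mathbf{w}_2-\mathbf{w}_3)$, hence $\mathbf{w}'=\mathbf{w}_2-\mathbf{w}_3\bmod q$ and therefore $\mathbf{M}\mathbf{w}'=\mathbf{M}\mathbf{w}_2-\mathbf{M}\mathbf{w}_3=\mathbf{u}\bmod q$; thus $\mathbf{w}'$ is the required witness and is produced in polynomial time. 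The soundness error $2/3$ then follows by the standard rewinding argument: a prover succeeding with probability non-negligibly above $2/3$ yields, with noticeable probability, three accepting transcripts on a common $\mathrm{CMT}$ for all three challenges, and these — as just shown — either break the binding of $\mathsf{COM}$ or hand us a valid witness.

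I expect the main obstacle to be the case analysis inside the simulator rather than any single computation: one must choose $C_1$ (and, when $\overline{Ch}=1$, the possibly-off-$\mathsf{VALID}$ vector $\mathbf{w}'$) so that the relevant verification relation holds \emph{and} so that the revealed response carries exactly the honest-prover distribution, and the indistinguishability argument has to thread together the statistical hiding of $\mathsf{COM}$ with both structural properties in~(\ref{eq:zk-equivalence}); the same kind of bookkeeping, on the binding side, is what makes the gluing of the three transcripts in the extractor go through.
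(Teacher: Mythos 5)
Your proposal is correct and follows essentially the same route as the proof this paper defers to (the standard Stern-type simulator that guesses the one unanswerable challenge, and the extractor that glues three accepting transcripts via the binding of $\mathsf{COM}$ and the two equivalences in~(\ref{eq:zk-equivalence})), which is the argument of~\cite{LLMNW16-dgs} that the paper explicitly omits. No gaps beyond the usual caveat, which you already note, that the extractor outputs either a witness or a binding collision.
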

The proof of the Theorem~\ref{theorem:zk-protocol}, appeared in~\cite{LLMNW16-dgs}, is omitted here.

\section{Our Lattice-Based Forward-Secure Group Signature}\label{section:main-scheme}

In the description below, for a binary tree of depth $k$, we identify each node at depth~$j$ with a binary vector $z$ of length $j$ such that $z[1]$ to $z[j]$ are ordered from the top to the bottom and a $0$ and a $1$ indicate the left and right branch respectively in the order of traversal. Let $B\in \mathbb{Z}^+$.  For an integer $0\leq b\leq B$, denote $\mathsf{Bin}(b)$ as the binary representation of $b$ with length $\lceil\log B \rceil$.

In our FSGS scheme, lifetime of the scheme is divided into $T$ discrete periods $ 0,1,\cdots,T-1 $.
For simplicity, let $T=2^d$ for some  $d\in \mathbb{Z}^+$.
Following previous works~\cite{BSSW06,LY10},  each time period $t$ is associated with leaf $\mathsf{Bin}(t)$.

Following \cite{BSSW06}, for $j=1,\cdots,d+1$, $t\in\{0,1,\cdots, T-1\}$,  we define a time period's ``right sibling at depth $j$'' as
\[
 \mathrm{sibling}(j,t) =
  \begin{cases}
  (1)^\top &~\text{if }~ j=1~\text{and}~ \mathsf{Bin}(t)[j]=0, \\
   (\mathsf{Bin}(t)[1],\cdots,\mathsf{Bin}(t)[j-1],1)^\top& ~\text{if }~ 1< j\leq d ~\text{and}~ \mathsf{Bin}(t)[j]=0, \\
    \bot &~ \text{if }~ j\leq d ~\text{and}~ \mathsf{Bin}(t)[j]=1, \\
     \mathsf{Bin}(t) & ~\text{if }~j=d+1.
  \end{cases}
\]
Define node set $\mathrm{Nodes}_{(t\rightarrow T-1)}$ to be $\{\mathrm{sibling}(1,t),\cdots,\mathrm{sibling}(d+1,t)\}$. 
For any $t'>t$, one can check that for any non-$\bot$ $z'\in\mathrm{Nodes}_{(t'\rightarrow T-1)}$, there exists a $z\in\mathrm{Nodes}_{(t\rightarrow T-1)}$ such that $z$ is an ancestor of $z'$.

\begin{figure}
  \centering
  \begin{tikzpicture}[scale=0.9]
  \node[circle,draw, inner sep=4.5pt,fill=yellow] (root) at  (0,0) {$\epsilon$};

  \node[circle,draw, inner sep=1.5pt] (000) at (-5.5,-3) {$000$};
  \node[circle,draw, inner sep=1.5pt] (111) at (5.5,-3) {$111$};

  \node[circle,draw, inner sep=1.5pt,fill=pink] (011) at (-0.75,-3) {$011$};
  \node[circle,draw, inner sep=1.5pt] (100) at (0.75,-3) {$100$};

  \node[circle,draw, inner sep=1.5pt,fill=yellow] (010) at (-2.35,-3) {$010$};
  \node[circle,draw, inner sep=1.5pt] (101) at (2.35,-3) {$101$};

  \node[circle,draw, inner sep=1.5pt] (001) at (-3.95,-3) {$001$};
  \node[circle,draw, inner sep=1.5pt] (110) at (3.95,-3) {$110$};

  \node[rectangle,draw, minimum size=0.5cm] (x000) at (-5.5,-4) {$t=0$};
  \node[rectangle,draw, minimum size=0.5cm] (x111) at (5.5,-4) {$t=7$};
  \node[rectangle,draw, minimum size=0.5cm] (x011) at (-0.75,-4) {$t=3$};
  \node[rectangle,draw, minimum size=0.5cm] (x100) at (0.75,-4) {$t=4$};

  \node[rectangle,draw, minimum size=0.5cm, fill=yellow] (x010) at (-2.35,-4) {$t=2$};
  \node[rectangle,draw, minimum size=0.5cm] (x101) at (2.35,-4) {$t=5$};

  \node[rectangle,draw, minimum size=0.5cm] (x001) at (-3.95,-4) {$t=1$};
  \node[rectangle,draw, minimum size=0.5cm] (x110) at (3.95,-4) {$t=6$};

  \node[circle,draw, inner sep=2.5pt] (00) at (-4.725,-2) {$00$};
  \node[circle,draw, inner sep=2.5pt] (11) at (4.725,-2) {$11$};

  \node[circle,draw, inner sep=2.5pt,fill=yellow] (01) at (-1.55,-2) {$01$};
  \node[circle,draw, inner sep=2.5pt] (10) at (1.55,-2) {$10$};

  \node[circle,draw, inner sep=3.5pt,fill=yellow] (0) at (-3.1375,-1) {$0$};
  \node[circle,draw, inner sep=3.5pt, fill=pink] (1) at (3.1375,-1) {$1$};

  \draw [<-, thick] (0) edge (root);
  \draw [<-, thick] (1) edge (root);

  \draw [<-, thick] (00) edge (0);
  \draw [<-, thick] (01) edge (0);
  \draw [<-, thick] (10) edge (1);
  \draw [<-, thick] (11) edge (1);

  \draw [<-, thick] (000) edge (00);
  \draw [<-, thick] (001) edge (00);
  \draw [<-, thick] (010) edge (01);
  \draw [<-, thick] (011) edge (01);
  \draw [<-, thick] (100) edge (10);
  \draw [<-, thick] (101) edge (10);
  \draw [<-, thick] (110) edge (11);
  \draw [<-, thick] (111) edge (11);

  \end{tikzpicture}
  \caption{A binary tree with time periods $T=2^3$. Consider the path from the root $\epsilon$ to the leaf node $(010)^\top$, we have $\mathrm{sibling}(1,2)=(1)^\top$, $\mathrm{sibling}(2,2)=\bot$, $\mathrm{sibling}(3,2)=(011)^\top$, and $\mathrm{sibling}(4,2)=(010)^\top$.  Therefore,  $\mathrm{Nodes}_{(2\rightarrow 7)}=\{(1)^\top, \bot, (011)^\top, (010)^\top\}$.  Similarly, $\mathrm{Nodes}_{(5\rightarrow 7)}=\{ \bot, (11)^\top,\bot, (101)^\top\}$. It is verifiable that  node $(1)^\top$ is an ancestor of both node $(101)^\top$ and node $(11)^\top$.   }\label{tree-illustration}
\end{figure}

\subsection{Description of the Scheme}\label{subsection:construction}
Our scheme operates in the Nakanishi et al.'s (static) model~\cite{NHF09}. Let $T=2^d$ and $N=2^\ell$. The group public key consists of (i) a Bonsai tree of depth $\ell+d$ specified by a matrix $\mathbf{A}=[\mathbf{A}_0|\mathbf{A}_1^0|\mathbf{A}_1^1\cdots|\mathbf{A}_{\ell+d}^0|\mathbf{A}_{\ell+d}^1]\in\mathbb{Z}_q^{n\times(2\ell+2d+1)m}$ and a vector $\mathbf{u}\in\mathbb{Z}_q^n$, which are  for issuing certificate; (ii) A public matrix $\mathbf{B}\in\mathbb{Z}_q^{n\times m}$ of the $\mathrm{IBE}$ scheme by Gentry et al.~\cite{GPV08}, which is for encrypting user identities when signing messages. The secret key of  $\mathsf{GM}$ is a trapdoor matrix of the Bonsai tree while the secret key of the tracing manager is a  trapdoor matrix of the $\mathrm{IBE}$ scheme.

Each user $\mathrm{id}\in\{0,1\}^{\ell}$ is assigned a node $\mathrm{id}$. To enable periodical key updating, each user $\mathrm{id}$ is associated with a subtree of depth $d$. In our scheme, all users are assumed to be valid group members from time $0$ to $T-1$. Let $z$ be a binary string of length $d_z\leq d$. Define $\mathbf{A}_{\mathrm{id}\|z}=[\mathbf{A}_0|\mathbf{A}_1^{\mathrm{id}[1]}|\cdots|\mathbf{A}_\ell^{\mathrm{id}[\ell]}|\mathbf{A}_{\ell+1}^{z[1]}|\cdots|\mathbf{A}_{\ell+d_z}^{z[d_z]}]\in\mathbb{Z}_q^{n\times (\ell+d_z+1)m}$. Specifically, the group signing key of user $\mathrm{id}$ at time $t$ is  $\{\mathbf{S}_{\mathrm{id}\|z},z\in\mathrm{Nodes}_{(t\rightarrow T-1)}\}$, which satisfies $\mathbf{A}_{\mathrm{id}\|z}\cdot \mathbf{S}_{\mathrm{id}\|z}=\mathbf{0}\bmod q$. Thanks to the basis delegation technique~\cite{CHKP10}, users are able to compute the trapdoor matrices for all the descendent of nodes in the set $\mathrm{Nodes}_{(t\rightarrow T-1)}$ and hence are able to derive all the subsequent signing keys.We remark that for leaf nodes, it is sufficient to  generate   short vectors   instead of short bases, since we do not need to perform further delegations.

Once received the group signing key, each user can issue signatures on behalf of the group. When signing a message at time $t$, user $\mathrm{id}$ first generates a one-time signature key pair $(\mathsf{ovk},\mathsf{osk})$, and then encrypts his identity  $\mathrm{id}$  to a ciphertext $\mathbf{c}$ using the $\mathrm{IBE}$ scheme with respect to ``identity'' $\mathsf{ovk}$. Next, he proves in zero-knowledge that: (i) he is a certified group member; (ii) he has done key evolution honestly; (iii) $\mathbf{c}$ is a correct encryption of $\mathrm{id}$. To prove that facts (i) and (ii) hold, it is sufficient to prove knowledge of a short vector $\mathbf{v}_{\mathrm{id}\|\mathsf{Bin}(t)}$  such that $\mathbf{A}_{\mathrm{id}\|\mathsf{Bin}(t)}\cdot \mathbf{v}_{\mathrm{id}\|\mathsf{Bin}(t)}=\mathbf{u}\bmod q$. The protocol is developed from Langlois et al.'s technique~\cite{LLNW14-PKC}(which was also employed in~\cite{ChengNW16} for designing policy-based signatures) and Ling et al.'s technique~\cite{LNSW13}, and is repeated $\kappa=\omega(\log n)$ times to achieve negligible soundness error, and is made non-interactive via Fiat-Shamir transform~\cite{FS86} as a triple $\Pi$. Finally, the user generates a one-time signature $\mathrm{sig}$ on the pair $(\mathbf{c},\Pi)$, and outputs the group signature consisting of $(\mathsf{ovk},\mathbf{c},\Pi,\mathrm{sig})$.

To verify a group signature, one  checks the validity of $\mathrm{sig}$ under the key $\mathsf{ovk}$ and $\Pi$. In case of dispute, $\mathsf{TM}$ can decrypt the ciphertext with respect to the ``identity'' $\mathsf{ovk}$ using his secret key.
Details of the scheme are described below.
\begin{description}
\item $\mathsf{KeyGen}(\lambda,T,N)$: On inputs security parameter $\lambda$, total number of time periods $T=2^d$ for some $d\in\mathbb{Z}_{+}$ and maximum number of group members $N=2^\ell$ for some $\ell\in\mathbb{Z}^{+}$, this algorithm does the following:
    \smallskip
        \begin{enumerate}
        \item Choose $n=\mathcal{O}(\lambda)$,   $q=\mathrm{poly}(n)$,  $m=\mathcal{O}(n\log q)$. Let $k=\ell+d$ and $\kappa=\omega(\log n)$.
        \item Run $\mathsf{TrapGen}(n,m,q)$ as described in Lemma~\ref{lemma:trapgen} to obtain $\mathbf{A}_0\in\mathbb{Z}_q^{n\times m}$ and $\mathbf{S}_0\in\mathbb{Z}^{m\times m}$.

        \item Sample $\mathbf{u}\xleftarrow{\$}\mathbb{Z}_q^n$, and $\mathbf{A}_{i}^{b}\xleftarrow{\$}\mathbb{Z}_q^{n\times m}$ for all $i\in[k]$ and $b\in\{0,1\}$. 

        \item Choose a one-time signature scheme $\mathcal{OTS}=(\mathsf{OGen}, \mathsf{OSign},\mathsf{OVer})$, and  a statistically hiding and computationally binding commitment scheme  $\mathsf{COM}$ from~\cite{KTX08}    that will be used in our zero-knowledge argument system.
       \item Let $\mathcal{H}_0: \{0,1\}^*\rightarrow \mathbb{Z}_q^{n\times \ell}$ and $\mathcal{H}_{1}:\{0,1\}^*\rightarrow \{1,2,3\}^{\kappa}$   be collision-resistant hash functions, which will be modelled as random oracles in the security analysis.
      \item Let Gaussian parameter $s_i$ be $\mathcal{O}(\sqrt{nk\log q})^{i-\ell+1}\cdot\omega(\sqrt{\log n})^{i-\ell+1}$, which will be used to generate short bases or sample short vectors at level $i$ for $i\in\{\ell,\ell+1,\cdots,k\}$. 

       \item Choose integer bounds $\beta=\lceil s_{k}\cdot \log n\rceil, B=\widetilde{\mathcal{O}}(\sqrt{n})$, and let $\chi$ be a $B$-bounded distribution over $\mathbb{Z}$.


        \item Generate a master key pair $(\mathbf{B},\mathbf{S})\in\mathbb{Z}_q^{n\times m}\times \mathbb{Z}^{m\times m}$ for the  IBE scheme by Gentry et al.~\cite{GPV08} via the $\mathsf{TrapGen}(n,m,q)$ algorithm. \smallskip

        \item
             For   user $i\in\{0,1,\cdots,N-1\}$, let  $\mathrm{id}=\mathsf{Bin}(i)\in\{0,1\}^\ell$.  Let node $\mathrm{id}$ be the identifier of user $i$.   Determine the node set $\mathrm{Nodes}_{(0\rightarrow T-1)}$.

             For  $z\in\mathrm{Nodes}_{(0\rightarrow T-1)}$, if $z=\bot$, set $\mathbf{usk}_0[i][z]=\bot$. Otherwise denote $d_z$ as the length of $z$ with $d_z\leq d$,  compute the matrix \[\mathbf{A}_{\mathrm{id}\|z}=[\mathbf{A}_0|\mathbf{A}_1^{\mathrm{id}[1]}|\cdots|\mathbf{A}_\ell^{\mathrm{id}[\ell]}|\mathbf{A}_{\ell+1}^{z[1]}|\cdots|\mathbf{A}_{\ell+d_z}^{z[d_z]}]\in\mathbb{Z}_q^{n\times (\ell+d_z+1)m}.\]
            and proceed as follows:
                \begin{itemize}

                \item If $z$ is of length $d$, i.e., $d_z=d$, it computes a vector $\mathbf{v}_{\mathrm{id}\|z}\in \Lambda^{\mathbf{u}}(\mathbf{A}_{\mathrm{id}\|z})$ via $$\mathbf{v}_{\mathrm{id}\|z}\leftarrow\mathsf{SampleD}(\mathsf{ExtBasis}(\mathbf{S}_0,\mathbf{A}_{\mathrm{id}\|z}),\mathbf{u},s_{k}).$$  Set $\mathbf{usk}_{0}[i][z]=\mathbf{v}_{\mathrm{id}\|z}$.  \smallskip

                \item If $z$ is of length less than $d$, i.e., $1\leq d_z<d$, it  computes a matrix $\mathbf{S}_{\mathrm{id}\|z}\in\mathbb{Z}^{(\ell+d_z+1)m\times(\ell+d_z+1)m}$ via
                    \[\mathbf{S}_{\mathrm{id}\|z}\leftarrow\mathsf{RandBasis}(\mathsf{ExtBasis}(\mathbf{S}_0,\mathbf{A}_{\mathrm{id}\|z}),s_{\ell+d_z}).\]
                   Set $\mathbf{usk}_{0}[i][z]=\mathbf{S}_{\mathrm{id}\|z}$.
                \smallskip
                \end{itemize}
            Let  $\mathbf{usk}_{0}[i]=\{\mathbf{usk}_{0}[i][z],z\in\mathrm{Nodes}_{(0\rightarrow T-1)}\}$ be the initial  secret key of user~$i$.
        \end{enumerate}
Let public parameter be $\mathsf{pp}$, group public key be $\mathsf{gpk}$, secret key of $\mathsf{GM}$ be $\mathsf{msk}$, secret key of $\mathsf{TM}$ be $\mathsf{mosk}$ and initial secret key be $\mathbf{usk}_0$, which are defined as follows:
\[\mathsf{pp}=\{n,q,m,\ell,d,k,\kappa,\mathcal{OTS},\mathsf{COM},\mathcal{H}_0,\mathcal{H}_{1},s_\ell,\ldots,s_k,\beta,B\},\]
\[\mathsf{gpk}=\{\mathsf{pp},\mathbf{A}_0,\mathbf{A}_{1}^{0},\mathbf{A}_{1}^{1},\ldots,\mathbf{A}_{k}^{0},\mathbf{A}_{k}^{1},\mathbf{u},\mathbf{B}\},\]
\[\mathsf{msk}=\mathbf{S}_0,~~~~~\mathsf{mosk}=\mathbf{S}, \]
\[\mathbf{usk}_0=\{\mathbf{usk}_0[0],\hspace{2pt}\mathbf{usk}_0[1],\hspace{2pt}\ldots,\hspace{2pt}\mathbf{usk}_0[N-1] \}.\]


\smallskip

\item $\mathsf{KeyUpdate}(\mathsf{gpk},\mathbf{usk}_{t}[i],i,t+1)$: Compute the identifier of user $i$ as  $\mathrm{id}=\mathsf{Bin}(i)$,  parse $\mathbf{usk}_{t}[i]=\{\mathbf{usk}_{t}[i][z],z\in\mathrm{Nodes}_{(t\rightarrow T-1)}\}$, and determine the node set $\mathrm{Nodes}_{(t+1\rightarrow T-1)}$. \smallskip

For $z'\in\mathrm{Nodes}_{(t+1\rightarrow T-1)}$, if $z'=\bot$, set $\mathbf{usk}_{t+1}[i][z']=\bot$. Otherwise,
there exists exactly one $z\in\mathrm{Nodes}_{(t\rightarrow T-1)}$ as its prefix, i.e., $z'=z\|y$ for some suffix $y$. 
Consider the following two cases. \smallskip
        \begin{enumerate}
        \item If  $z'=z$, let $\mathbf{usk}_{t+1}[i][z']=\mathbf{usk}_{t}[i][z]$.
        \smallskip
        \item If $z'=z\|y$ for some non-empty $y$,  then $\mathbf{usk}_{t}[i][z]=\mathbf{S}_{\mathrm{id}\|z}$. Consider the following two subcases.
            \begin{itemize}
            \item If $z'$ is of length $d$,  run \[\mathbf{v}_{\mathrm{id}\|z'}\leftarrow\mathsf{SampleD}(\mathsf{ExtBasis}(\mathbf{S}_{\mathrm{id}\|z},\mathbf{A}_{\mathrm{id}\|z'}),\mathbf{u},s_{k}),\]
            and set $\mathbf{usk}_{t+1}[i][z']=\mathbf{v}_{\mathrm{id}\|z'}$.  \smallskip

            \item  If $z'$ is of length less than $d$, run \[\mathbf{S}_{\mathrm{id}\|z'}\leftarrow\mathsf{RandBasis}(\mathsf{ExtBasis}(\mathbf{S}_{\mathrm{id}\|z},\mathbf{A}_{\mathrm{id}\|z'}),s_{\ell+d_{z'}}),\]
            and set $\mathbf{usk}_{t+1}[i][z']=\mathbf{S}_{\mathrm{id}\|z'}$. \smallskip
            \end{itemize}
        \end{enumerate}
Output updated key as $\mathbf{usk}_{t+1}[i]=\{\mathbf{usk}_{t+1}[i][z'],z'\in\mathrm{Nodes}_{(t+1\rightarrow T-1)}\}$.

\smallskip
\item $\mathsf{Sign}(\mathsf{gpk},\mathbf{usk}_{t}[i],i,t,M)$: 
Compute the identifier  $\mathrm{id}=\mathsf{Bin}(i)$. 
By the structure of the node set $\mathrm{Nodes}_{(t\rightarrow T-1)}$, there exists some $z\in\mathrm{Nodes}_{(t\rightarrow T-1)}$ such that $z=\mathsf{Bin}(t)$ is of length $d$  and $\mathbf{usk}_{t}[i][z]=\mathbf{v}_{\mathrm{id}\|z}$. \smallskip

To sign a message $M\in\{0,1\}^*$, the signer then performs the following steps.\smallskip

        \begin{enumerate}

            \item \label{sign:step 1} First, generate a one-time signature key pair $(\mathsf{ovk}, \ms{osk}) \leftarrow \mathsf{OGen}(n)$,
              and then encrypt  $\mathrm{id}$ with respect to ``identity'' $\mathsf{ovk}$ as follows.
                            Let $\mathbf{G} =
              \mathcal{H}_0(\ms{ovk})\in\mathbb{Z}_q^{n\times \ell}$. Sample $\mathbf{s} \hookleftarrow \chi^n$, $\mathbf{e}_1\hookleftarrow \chi^m$, $\mathbf{e}_2 \hookleftarrow\chi^\ell$, and  compute
               ciphertext $(\mathbf{c}_1,\mathbf{c}_2)\in\mathbb{Z}_q^{m}\times\mbb{Z}_q^\ell$  as
               \begin{equation}\label{eq:ciphertext-of-id}
               (\mathbf{c}_1=\mathbf{B}^\top\cdot\mathbf{s}+\mathbf{e}_1, \hspace{6.8pt} \mathbf{c}_2=\mathbf{G}^\top\cdot\mathbf{s}+\mathbf{e}_2+\lfloor\frac{q}{2}\rfloor\cdot\mathrm{id}).
               \end{equation}
            \item\label{sign:step 2}  Second, compute the matrix $\mathbf{A}_{\mathrm{id}\|z}$ and generate a $\mathsf{NIZKAoK}$ $\Pi$ to demonstrate the possession of a valid tuple
                \begin{equation}\xi=(\mathrm{id},\mathbf{s},\mathbf{e}_1,\mathbf{e}_2,\mathbf{v}_{\mathrm{id}\|z})\label{witness tuple}\end{equation}
                 such that
                    \begin{enumerate}
                    \item \label{witness condition 1}$\mathbf{A}_{\mathrm{id}\|z}\cdot \mathbf{v}_{\mathrm{id}\|z}=\mathbf{u}\mod q$,
                    and $\|\mathbf{v}_{\mathrm{id}\|z}\|_{\infty}\leq \beta$.
                    \smallskip
                    \item \label{witness condition 2} 
                    Equations in~(\ref{eq:ciphertext-of-id}) hold with $\|\mathbf{s}\|_{\infty}\leq B$, $\|\mathbf{e}_1\|_{\infty}\leq B$ and $\|\mathbf{e}_2\|_{\infty}\leq B$.
                   \smallskip
                    \end{enumerate}
        This is done by running our argument system described  in Section \ref{subsection:main-nizk}
        with public input
        \[\gamma=(\mathbf{A}_0,\mathbf{A}_{1}^{0},\mathbf{A}_{1}^{1},\ldots,\mathbf{A}_{k}^{0},\mathbf{A}_{k}^{1},\mathbf{u},\mathbf{B},\mathbf{G},\mathbf{c}_1,\mathbf{c}_2,t)\]
        and  witness tuple $\xi$ as above. 
        The protocol is repeated $\kappa=\omega(\log n)$ times to obtain negligible soundness error and made non-interactive via the Fiat-Shamir heuristic~\cite{FS86} as a triple $\Pi=((\mathrm{CMT}_i)_{i=1}^{\kappa},\mathrm{CH},(\mathrm{RSP}_i)_{i=1}^{\kappa})$ with $\mathrm{CH}=\mathcal{H}_{1}(M,(\mathrm{CMT}_i)_{i=1}^{\kappa},\mathbf{c}_1,\mathbf{c}_2,t).$
        \item \label{sign:step 3}Third, compute a one-time signature $\mathrm{sig} = \ms{OSign}(\ms{osk}; \mathbf{c}_1, \mathbf{c}_2, \Pi)$ and output the signature as $\Sigma= (\ms{ovk},\mathbf{c}_1, \mathbf{c}_2, \Pi,\mathrm{sig})$.\smallskip
      \end{enumerate}

\item $\mathsf{Verify}(\ms{gpk},t,M,\Sigma)$: This algorithm proceeds as follows:
     \begin{enumerate}
        \item Parse  $\Sigma$ as $\Sigma=(\ms{ovk},\mathbf{c}_1, \mathbf{c}_2,\Pi,\mathrm{sig})$.  If
         $\ms{OVer}(\ms{ovk}; \mathrm{sig}; \mathbf{c}_1, \mathbf{c}_2, \Pi) = 0$, then return $0$.
         \smallskip

        \item Parse $\Pi$ as $\Pi=((\mathrm{CMT}_i)_{i=1}^{\kappa},(\mathrm{Ch}_1,\ldots,\mathrm{Ch}_\kappa),(\mathrm{RSP}_i)_{i=1}^{\kappa})$.  

            If $(\mathrm{Ch}_1,\cdots,\mathrm{Ch}_\kappa)\neq\mathcal{H}_{1}(M,(\mathrm{CMT}_i)_{i=1}^{\kappa},\mathbf{c}_1,\mathbf{c}_2,t)$, then return $0$.

        \smallskip
        \item  For $i\in[\kappa]$, run
the verification step of the underlying argument protocol  
to check the validity of $\mathrm{RSP}_i$ with respect to
$\mathrm{CMT}_i$ and $\mathrm{Ch}_i$. If any of the conditions does not hold, then return $0$.
\item Return $1$.

     \end{enumerate}

\item $\mathsf{Open}(\ms{gpk},\ms{mosk},t,M,\Sigma)$: If $\mathsf{Verify}(\ms{gpk},t,M,\Sigma)=0$, abort. Otherwise,
let $\ms{mosk}$ be $\mathbf{S}\in\mathbb{Z}^{m\times m}$ and parse $\Sigma$ as $\Sigma=(\ms{ovk},\mathbf{c}_1, \mathbf{c}_2,\Pi, \mathrm{sig})$. Then it decrypts $(\mathbf{c}_1,\mathbf{c}_2)$ as follows:
    \begin{enumerate}
    \item Compute $\mathbf{G}=\mathcal{H}_0(\mathsf{ovk})=[\mathbf{g}_1|\cdots|\mathbf{g}_\ell]\in\mathbb{Z}^{n\times \ell}$.  Then use $\mathbf{S}$ to compute a small norm matrix $\mathbf{F}_{\ms{ovk}}\in\mathbb{Z}^{m\times \ell}$ such that $\mathbf{B}\cdot \mathbf{F}_{\mathsf{ovk}}=\mathbf{G}\bmod q$. This is done by computing $\mathbf{f}_i\leftarrow \mathsf{SampleD}(\mathbf{B},\mathbf{S},\mathbf{g}_i,s_\ell)$ for all $i\in[\ell]$ and let $\mathbf{F}_{\ms{ovk}}=[\mathbf{f}_1|\cdots|\mathbf{f}_{\ell}]$.
        \smallskip
    \item Use $\mathbf{F}_{\ms{ovk}}$ to decrypt $(\mathbf{c}_1,\mathbf{c}_2)$ by computing $$\mathrm{id'}=\big\lfloor\frac{\mathbf{c}_2-\mathbf{F}_{\ms{ovk}}^\top\cdot \mathbf{c}_1}{\lfloor q/2\rfloor}\big\rceil\in\{0,1\}^\ell.$$
    \item Return $\mathrm{id}'\in\{0,1\}^\ell$.
    \end{enumerate}

\end{description}

\subsection{Analysis of the Scheme}

\noindent
{\sc Efficiency. }
We first analyze the complexity of the scheme described in Section~\ref{subsection:construction}, with respect to security parameter $\lambda$ and parameters $\ell =  \log N $ and $d= \log T $. Recall $k=\ell+d$.
\begin{itemize}
\item The group public key $\mathsf{gpk}$   has bit-size $\widetilde{\mathcal{O}}(\lambda^2\cdot k)$.

\item The user secret key $\mathbf{usk}_t[i]$ has at most $d+1$ trapdoor matrices, and has bit-size $\widetilde{\mc{O}}(\lambda^2\cdot k^2d)$.

\item The size of signature $\Sigma$ is dominated by that of the Stern-like $\mathsf{NIZKAoK}$ $\Pi$, which is $\widetilde{\mathcal{O}}(|\xi|\cdot \log q)\cdot \omega(\log \lambda)$, where $|\xi|$ denotes the bit-size of the witness-tuple $\xi$. Overall, $\Sigma$ has bit-size $\widetilde{\mathcal{O}}(\lambda\cdot k)$.
\end{itemize}

\noindent
{\sc Correctness. }
The correctness of the above scheme follows from the following facts: (i) the underlying argument system is perfectly complete; (ii) the underlying encryption scheme obtained by transforming the IBE scheme in~\cite{GPV08} via CHK transformation~\cite{CHK04} is correct.

Specifically, for an honest user, when he signs a message at time period $t$, he is able to demonstrate the possession of a valid tuple $\xi$ of the form~(\ref{witness tuple}).  Therefore, with probability $1$, the resulting signature $\Pi$ will be accepted by the $\mathsf{Verify}$ algorithm, implied by the perfect completeness of the underlying argument system.  As for the correctness of the $\ms{Open}$ algorithm, note that
\begin{align*}
\mathbf{c}_2-\mathbf{F}_{\ms{ovk}}^\top\cdot \mathbf{c}_1 &= \mathbf{G}^\top\cdot\mathbf{s}+\mathbf{e}_2+\big\lfloor\frac{q}{2}\big\rfloor\cdot\mathrm{id}-\mathbf{F}_{\ms{ovk}}^\top\cdot(\mathbf{B}^\top\cdot\mathbf{s}+\mathbf{e}_1) \\
&=\big\lfloor\frac{q}{2}\big\rfloor\cdot\mathrm{id}+\mathbf{e}_2-\mathbf{F}_{\ms{ovk}}^\top\cdot\mathbf{e}_1
\end{align*}
where $\|\mathbf{e}_1\|\leq B$, $\|\mathbf{e}_2\|_{\infty}\leq B$, and $\|\mathbf{f}_{i}\|_{\infty}\leq \lceil  s_\ell\cdot \log m\rceil=\widetilde{\mathcal{O}}( \sqrt{n\cdot k})$, which is implied by Lemma~\ref{lemma:infinity-norm-bound}.  Recall that  $q=\mathrm{poly}(n)$, $m=\mathcal{O}(n\log q)$ and $B=\widetilde{\mathcal{O}}(\sqrt{n})$. Hence
\[\|\mathbf{e}_2-\mathbf{F}_{\ms{ovk}}^\top\cdot\mathbf{e}_1\|_{\infty} 
\leq B+m\cdot B\cdot\widetilde{\mathcal{O}}( \sqrt{n\cdot k})=\widetilde{\mathcal{O}}(n^2).\]
As long as we choose sufficiently large $q$, with probability $1$, the $\mathsf{Open}$ algorithm will recover $\mathrm{id}$ and correctness of the $\mathsf{Open}$ algorithm holds.

\smallskip

\noindent
{\sc Security. } In Theorem~\ref{theorem:main-scheme}, we prove that our scheme satisfies the security requirements put forward by Nakanishi et al.~\cite{NHF09}.
\begin{theorem}\label{theorem:main-scheme}
In the random oracle model, the forward-secure group signature described
in Section \ref{subsection:construction} satisfies  full anonymity and forward-secure traceability requirements under the $\mathsf{LWE}$ and $\mathsf{SIS}$ assumptions.
\end{theorem}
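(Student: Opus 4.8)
The plan is to prove the two requirements separately by game hopping, in the random oracle model for $\mathcal{H}_0$ and $\mathcal{H}_1$, using the statistical zero-knowledge and knowledge-extraction properties of the argument system (Theorem~\ref{theorem:zk-protocol}), the norm bounds of Lemma~\ref{lemma:infinity-norm-bound}, the basis-delegation tools (Lemmas~\ref{lemma:trapgen}--\ref{lemma:randbasis}), the strong unforgeability of $\mathcal{OTS}$, and the hardness of $\mathsf{LWE}$ and $\mathsf{SIS}$.

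\textbf{Full anonymity.} Beginning with the real experiment for a fixed challenge bit $b$, I would (1) replace the proof $\Pi$ in the challenge signature by a transcript of the simulator of Theorem~\ref{theorem:zk-protocol} (programming $\mathcal{H}_1$ at the right input); by statistical zero-knowledge this is indistinguishable and $\Pi$ no longer depends on $\mathrm{id}_b$. (2) Fix the challenge one-time key $\mathsf{ovk}^{\star}$ at the outset and have the $\mathsf{Open}$ oracle reject any query reusing $\mathsf{ovk}^{\star}$; by strong unforgeability of $\mathcal{OTS}$ this is again indistinguishable, and it ensures the opening oracle never needs the $\mathrm{IBE}$ key for identity $\mathsf{ovk}^{\star}$. (3) Generate $\mathbf{B}$ as the public key of a GPV $\mathrm{IBE}$ challenger and answer the surviving $\mathsf{Open}$ queries with $\mathrm{IBE}$ decryption keys for identities $\mathsf{ovk}\neq\mathsf{ovk}^{\star}$ (programming $\mathcal{H}_0$); note that all data the anonymity adversary may see, namely $\mathsf{msk}=\mathbf{S}_0$, $\mathbf{usk}_0$, and the public matrices and $\mathbf{u}$, are still produced honestly, and $\mathsf{mosk}=\mathbf{S}$ is no longer used. (4) Replace $\mathbf{c}_2$ in the challenge signature by the encryption of $\mathbf{0}$, invoking semantic security of the $\mathrm{IBE}$ (hence $\mathsf{LWE}$), since $\mathbf{c}_1$ is already independent of $\mathrm{id}_b$. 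In the final game the challenge signature is independent of $b$, so the advantage is negligible.

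\textbf{Forward-secure traceability.} Suppose $\mathcal{A}$ wins with non-negligible probability. Signing-oracle queries for non-corrupted users are answered without the Bonsai trapdoor, by encrypting the true $\mathrm{id}$ honestly and simulating $\Pi$ as above. Next, guess the pair $(i^{\star},t^{\star})$ carried by the winning forgery (polynomial loss $1/(NT)$) and put $p^{\star}=\mathsf{Bin}(i^{\star})\,\|\,\mathsf{Bin}(t^{\star})\in\{0,1\}^{k}$. Take a $\mathsf{SIS}$ instance $\mathbf{M}\in\mathbb{Z}_q^{n\times(k+1)m}$, parse it as $[\mathbf{A}_0\,|\,\mathbf{A}_1^{p^{\star}[1]}\,|\cdots|\,\mathbf{A}_k^{p^{\star}[k]}]$, generate the off-path matrices $\mathbf{A}_i^{1-p^{\star}[i]}$ with trapdoors via $\mathsf{TrapGen}$, and set $\mathbf{u}=\mathbf{M}\cdot\widehat{\mathbf{e}}$ for a short Gaussian $\widehat{\mathbf{e}}$, so that $\mathbf{u}$ is statistically uniform and the reduction knows one short preimage of it under $\mathbf{A}_{p^{\star}}$; the reduction also generates $(\mathbf{B},\mathbf{S})$ honestly so that it can run $\mathsf{Open}$. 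The combinatorial heart is that, conditioned on $\mathcal{A}$'s winning with this guess, every string $\mathrm{id}\|z$ for which $\mathcal{A}$ legitimately receives a sub-basis $\mathbf{S}_{\mathrm{id}\|z}$ — either because $\mathrm{id}\neq i^{\star}$, or because $\mathrm{id}=i^{\star}$ is corrupted only at times $t>t^{\star}$ and then $z\in\mathrm{Nodes}_{(t\rightarrow T-1)}$ avoids all ancestors of leaf $t^{\star}$ — is not a prefix of $p^{\star}$, hence $\mathbf{A}_{\mathrm{id}\|z}$ contains an off-path block whose trapdoor the reduction holds, so $\mathbf{S}_{\mathrm{id}\|z}$ can be produced via $\mathsf{ExtBasis}$ then $\mathsf{RandBasis}$ with correct distribution by Lemma~\ref{lemma:randbasis} (the reduction aborts if it ever needs an unavailable trapdoor, which never happens on the good event). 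Finally, a standard rewinding/forking argument on the $\mathcal{H}_1$-query underlying $\Pi$, together with the extractor of Theorem~\ref{theorem:zk-protocol}, yields a witness $(\mathrm{id}^{\star},\mathbf{s}^{\star},\mathbf{e}_1^{\star},\mathbf{e}_2^{\star},\mathbf{v}^{\star})$ with $\mathbf{A}_{\mathrm{id}^{\star}\|\mathsf{Bin}(t^{\star})}\cdot\mathbf{v}^{\star}=\mathbf{u}\bmod q$, $\|\mathbf{v}^{\star}\|_{\infty}\le\beta$, and $(\mathbf{c}_1,\mathbf{c}_2)$ a low-noise encryption of $\mathrm{id}^{\star}$; by correctness of decryption $\mathrm{id}^{\star}$ is what $\mathsf{Open}$ returns, so on the good event $\mathrm{id}^{\star}=i^{\star}$ and $\mathbf{A}_{p^{\star}}\cdot\mathbf{v}^{\star}=\mathbf{u}$. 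Then $\mathbf{v}^{\star}-\widehat{\mathbf{e}}\in\Lambda^{\perp}(\mathbf{M})$ is short and, by the min-entropy of $\widehat{\mathbf{e}}$ given $\mathbf{u}$ (Lemma~\ref{lemma:infinity-norm-bound}), nonzero except with negligible probability, i.e.\ a valid $\mathsf{SIS}$ solution — a contradiction.

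\textbf{Main obstacle.} The delicate part is the traceability reduction: one must show that the single ``off the path $p^{\star}$'' trapdoor placement simultaneously (a) suffices to answer every adaptive $\mathsf{KeyUpdate}$/corruption query — which reduces entirely to the structural fact that $\mathrm{Nodes}_{(t\rightarrow T-1)}$ for $t>t^{\star}$ contains no ancestor of leaf $t^{\star}$, the property that encodes forward security — and (b) leaves the reduction unable to compute the extracted $\mathbf{v}^{\star}$ on its own, so that $\mathbf{v}^{\star}-\widehat{\mathbf{e}}$ is genuinely fresh. Secondary care is needed for the random-oracle bookkeeping in the forking argument (consistency of $\mathcal{H}_1$ programming with simulated signatures and the attendant polynomial loss) and, on the anonymity side, for the CHK-style argument that rejecting reused $\mathsf{ovk}^{\star}$ makes the $\mathsf{Open}$ oracle simulatable purely from $\mathrm{IBE}$ key extraction at identities $\ne\mathsf{ovk}^{\star}$, so that $\mathsf{LWE}$ can be invoked on the challenge ciphertext.
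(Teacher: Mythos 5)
Your proposal is correct and follows essentially the same route as the paper: a CHK-style game-hopping argument for anonymity (fix $\mathsf{ovk}^{\star}$ and invoke strong unforgeability of $\mathcal{OTS}$, program $\mathcal{H}_0$ to answer $\mathsf{Open}$ queries without $\mathsf{mosk}$, simulate $\Pi$, then randomize the challenge ciphertext via $\mathsf{LWE}$), and for traceability a reduction that guesses $(i^{\star},t^{\star})$, embeds the $\mathsf{SIS}$ instance along the path $\mathsf{Bin}(i^{\star})\|\mathsf{Bin}(t^{\star})$ with $\mathsf{TrapGen}$'d off-path blocks, sets $\mathbf{u}$ as the image of a hidden short Gaussian vector, answers corruption queries via $\mathsf{ExtBasis}$/$\mathsf{RandBasis}$, and extracts via forking plus the min-entropy argument. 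The only differences are cosmetic (order of the anonymity hops, phrasing the final hop through IBE semantic security rather than directly as an $\mathsf{LWE}$ distinguishing game, and simulating all signing proofs rather than only those of user $i^{\star}$).
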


The proof of the theorem is established by Lemma \ref{lemma:anonymity} and Lemma \ref{lemma:traceability}.

\begin{lemma}\label{lemma:anonymity}
Suppose that one-time signature scheme $\mathcal{OTS}$ is strongly unforgeable. In the random oracle model, the forward-secure group signature scheme described
in Section \ref{subsection:construction} is fully anonymous under the hardness of the $\mathsf{LWE}_{n,q,\chi}$ problem.
\end{lemma}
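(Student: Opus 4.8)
The plan is to prove full anonymity via a sequence of indistinguishable games, starting from the real experiment $\mathbf{Exp}_{\mathrm{FSGS},\mathcal{A}}^{\mathsf{Anon}}$ and ending in a game where the challenge signature carries no information about the bit $b$. First I would fix some notation: let $\mathsf{ovk}^\star$ be the one-time verification key used in the challenge signature. In Game~0 (the real experiment) the challenger runs $\mathsf{KeyGen}$ honestly, hands $\mathcal{A}$ the values $(\mathsf{gpk},\mathsf{msk},\mathbf{usk}_0)$, answers $\mathsf{Open}$ queries using $\mathsf{mosk}=\mathbf{S}$, and on the challenge produces $\Sigma=(\mathsf{ovk}^\star,\mathbf{c}_1,\mathbf{c}_2,\Pi,\mathrm{sig})$ as prescribed. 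In Game~1 I reject, in the $\mathsf{Open}$ oracle, any post-challenge query $\Sigma=(\mathsf{ovk},\mathbf{c}_1,\mathbf{c}_2,\Pi,\mathrm{sig})$ with $\mathsf{ovk}=\mathsf{ovk}^\star$: such a query would constitute a fresh valid pair $(\,(\mathbf{c}_1,\mathbf{c}_2,\Pi),\mathrm{sig}\,)$ under $\mathsf{ovk}^\star$, contradicting the strong unforgeability of $\mathcal{OTS}$; hence Games~0 and~1 differ by at most the $\mathcal{OTS}$ forging advantage, which is negligible. In Game~2 I replace the honestly generated $\Pi$ in the challenge by a simulated argument, using the statistical zero-knowledge simulator guaranteed by Theorem~\ref{theorem:zk-protocol} and programming the random oracle $\mathcal{H}_1$ at the relevant point; this changes $\mathcal{A}$'s view only by a negligible statistical distance (and a negligible probability of a programming collision). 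Crucially, after Game~2 the challenge ciphertext $(\mathbf{c}_1,\mathbf{c}_2)$ is no longer used as a witness, so it can be attacked directly.

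Next, in Game~3 I change how $\mathsf{Open}$ is answered and how the challenge ciphertext is formed, so that the secret trapdoor $\mathbf{S}$ is no longer needed to decrypt and, more importantly, so that I can embed an $\mathsf{LWE}$ challenge at ``identity'' $\mathsf{ovk}^\star$. Concretely I would set up the matrix $\mathbf{B}$ and program $\mathcal{H}_0$ so that for every $\mathsf{ovk}\neq\mathsf{ovk}^\star$ queried by $\mathcal{A}$ I know a short $\mathbf{F}_{\mathsf{ovk}}$ with $\mathbf{B}\cdot\mathbf{F}_{\mathsf{ovk}}=\mathcal{H}_0(\mathsf{ovk})$, while at $\mathsf{ovk}^\star$ I do not — this is the standard GPV/IBE partitioning trick, and it suffices because Game~1 already guarantees no $\mathsf{Open}$ query with $\mathsf{ovk}=\mathsf{ovk}^\star$ is ever made. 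In this game $\mathsf{Open}$ queries are answered with the appropriate $\mathbf{F}_{\mathsf{ovk}}$ exactly as in the real $\mathsf{Open}$ algorithm, so the change is only conceptual (statistically close by Lemma~\ref{lemma:sampled} and Lemma~\ref{lemma:infinity-norm-bound}). In Game~4 I replace the pair $\bigl(\mathbf{B}^\top\mathbf{s}+\mathbf{e}_1,\ \mathbf{G}^{\star\top}\mathbf{s}+\mathbf{e}_2\bigr)$ appearing in the challenge, where $\mathbf{G}^\star=\mathcal{H}_0(\mathsf{ovk}^\star)$, by a uniformly random pair over $\mathbb{Z}_q^{m}\times\mathbb{Z}_q^{\ell}$. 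Indistinguishability of Games~3 and~4 follows from $\mathsf{LWE}_{n,q,\chi}$: given an $\mathsf{LWE}$ instance with $m+\ell$ samples, the reduction sets the first $m$ coordinates of the public matrix to be $\mathbf{B}$ and the last $\ell$ to be $\mathbf{G}^\star$, embeds the samples as $(\mathbf{c}_1,\mathbf{c}_2-\lfloor q/2\rfloor\mathrm{id}_b)$, and uses the hardness of $\mathsf{LWE}$ with secret from $\chi$ (as noted after the $\mathsf{LWE}$ definition, via~\cite{ACPS09}); a real instance yields Game~3 and a uniform instance yields Game~4.

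Finally, in Game~4 the challenge signature is $(\mathsf{ovk}^\star,\mathbf{c}_1,\mathbf{c}_2,\Pi,\mathrm{sig})$ where $\Pi$ is simulated (independent of $b$), $\mathrm{sig}$ is a one-time signature on $(\mathbf{c}_1,\mathbf{c}_2,\Pi)$, and $(\mathbf{c}_1,\mathbf{c}_2)$ is uniformly random — in particular $\mathbf{c}_2$ is uniform and hence statistically independent of $\mathrm{id}_b=\mathsf{Bin}(i_b)$, since $\lfloor q/2\rfloor\cdot\mathrm{id}_b$ is just a fixed shift of a uniform vector. Therefore the whole challenge signature is distributed independently of $b$, so $\mathcal{A}$'s guessing probability is exactly $1/2$ and $\mathbf{Adv}^{\mathsf{Anon}}=0$ in Game~4. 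Chaining the bounds, the overall anonymity advantage is at most $\mathbf{Adv}^{\mathrm{suf}}_{\mathcal{OTS}}+\mathrm{negl}(\lambda)+\mathbf{Adv}^{\mathsf{LWE}}_{n,q,\chi}$, which is negligible. I expect the main obstacle to be the $\mathsf{Open}$-oracle simulation in Games~3–4: I must answer every admissible decryption query \emph{without} the master trapdoor $\mathbf{S}$ yet \emph{with} a key that decrypts correctly and is statistically distributed like the real $\mathbf{F}_{\mathsf{ovk}}$ output of $\mathsf{SampleD}(\mathbf{B},\mathbf{S},\cdot,s_\ell)$, while still reserving the ``punctured'' identity $\mathsf{ovk}^\star$ for the $\mathsf{LWE}$ embedding — this is exactly where the collision-freeness of one-time keys (guaranteeing $\mathsf{ovk}^\star$ is fresh except with negligible probability) and the programmability of $\mathcal{H}_0$ are essential, and care is needed so that the Gaussian parameter $s_\ell$ is large enough for the partitioning setup to produce correctly distributed keys.
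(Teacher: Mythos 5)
Your proposal is correct and follows essentially the same game-hopping strategy as the paper's proof: abort on $\mathsf{Open}$ queries reusing $\mathsf{ovk}^\star$ via strong unforgeability of $\mathcal{OTS}$, switch $\mathbf{B}$ to uniform and program $\mathcal{H}_0$ with the GPV partitioning trick so that opening queries are answerable without the master trapdoor, simulate $\Pi$ via the statistical ZK simulator, and finally replace the challenge ciphertext components by uniform values under $\mathsf{LWE}_{n,q,\chi}$ so the challenge is independent of $b$. The only differences are cosmetic (you simulate $\Pi$ before rather than after reprogramming $\mathcal{H}_0$, and you fold the paper's final re-randomization game into the concluding observation), which does not affect validity.
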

\begin{proof}

Denote $\mc{C}$ as the challenger and $\mathcal{A}$ as the adversary. Following~\cite{LNW15}, we prove this lemma using a series of computationally indistinguishable games. The first game Game $0$ is the real experiment $\mathbf{Exp}_{\mathrm{FSGS},\mathcal{A}}^{\ms{Anon}}(\lambda,T,N)$ while  the last game is such that the advantage of the adversary is~$0$.
\begin{description}

\item[Game $0$:] In this game,  $\mathcal{C}$ runs the experiment $\mathbf{Exp}_{\mathrm{FSGS},\mathcal{A}}^{\mathsf{Anon}}(\lambda,T,N)$ faithfully.
In the challenge phase,  $\mathcal{A}$ outputs a  message $M^*$ together with two  users $0\leq i_0,i_1\leq N-1$ for the targeted time $t^*$.   $\mathcal{C}$ responds by sending back a signature $\Sigma^*= (\ms{ovk}^*,\mathbf{c}_1^*, \mathbf{c}_2^*, \Pi^*, \mathrm{sig}^*)\leftarrow\mathsf{Sign}(\mathsf{gpk},\mathbf{usk}_{t^*}[i_b],t^*,M^*)$ for a random bit $b\in\{0,1\}$. 
Then the adversary outputs a bit $b'\in\{0,1\}$ and this game returns~$1$ if $b'=b$ and $0$ otherwise.
In this experiment, $\mathcal{C}$ replies with all random strings for oracles queries of $\mathcal{H}_0,\mathcal{H}_1$.  

\smallskip
\item[Game $1$:] In this game, we modify Game $0$ in two aspects: (i) We generate the pair $(\ms{ovk}^*,\ms{osk}^*)$ in the very beginning of the experiment; (ii) For the signature opening queries, if  $\mathcal{A}$ asks for a valid signature of the form $\Sigma= (\ms{ovk},\mathbf{c}_1, \mathbf{c}_2, \Pi, \mathrm{sig})$ such that $\ms{ovk}=\ms{ovk}^*$, then  $\mathcal{C}$ outputs a random bit and aborts the experiment. Now we argue that the probability that $\mathcal{C}$ aborts is negligible and hence Game $0$ and Game $1$ are computationally indistinguishable. Actually, before the challenged signature is given to  $\mathcal{A}$, $\mathsf{ovk}^*$ is independent of $\mc{A}'s$ view, hence it is negligible that $\mathcal{A}$ queries a signature containing $\ms{ovk}^*$. Furthermore, after the challenged signature is sent to $\mathcal{A}$, if $\mathcal{A}$ queries a \emph{new} valid signature of the form $(\ms{ovk}^*,\mathbf{c}_1,\mathbf{c}_2,\Pi, \mathrm{sig})$, then $((\mathbf{c}_1,\mathbf{c}_2,\Pi),\mathrm{sig})$ is a successful forgery of the $\mathcal{OTS}$ scheme, which breaks the strong unforgeability of the $\mathcal{OTS}$ scheme. This proves that $\mathcal{C}$ aborts with negligible probability. From now on, we  assume that $\mathcal{A}$ will not query valid signature containing $\mathsf{ovk}^*$.

\smallskip

\item[Game $2$:] In this game, we change Game $1$ in the following ways. First, instead of generating $\mathbf{B}$  using the $\mathsf{TrapGen}$ algorithm, we generate  a uniformly random matrix $\mathbf{B}^*$ over $\mathbb{Z}_q^{n\times m}$. This change is indistinguishable to $\mathcal{A}$ since the matrix $\mathbf{B}$ is statistically close to uniform by Lemma~\ref{lemma:trapgen}.
    Second, we program the random oracle $\mathcal{H}_0$ as follows. For query of $\mathsf{ovk}^*$, it generates a uniformly random matrix $\mathbf{G}^*\in\mathbb{Z}_q^{n\times \ell}$, let $\mathcal{H}_0(\ms{ovk}^*)=\mathbf{G}^*$, and return $\mathbf{G}^*$ to $\mathcal{A}$. This change also makes no difference to $\mathcal{A}$ since the output of $\mathcal{H}_0$ is uniformly random. For query of $\mathsf{ovk}\neq\mathsf{ovk}^*$, we first sample $\mathbf{F}_{\ms{ovk}}\leftarrow D_{\mathbb{Z},s_\ell}^{m\times \ell}$, let $\mathcal{H}_0(\mathsf{ovk})=\mathbf{G}=\mathbf{B}^*\cdot\mathbf{F}_{\ms{ovk}}\mod q$, and return $\mathbf{G}$ to $\mathcal{A}$. We then keep a record of $(\mathsf{ovk},\mathbf{F}_{\ms{ovk}},\mathbf{G})$. $\mathbf{G}$ generated in this way is statistically close to uniform by Lemma~\ref{lemma:infinity-norm-bound}. Hence, this change does not affect $\mathcal{A}$'s view non-negligibly.
    When $\mathcal{A}$  queries the opening oracle a signature $(\ms{ovk},\mathbf{c}_1,\mathbf{c}_2,\Pi, \mathrm{sig})$ with $\mathsf{ovk}\neq \ms{ovk}^*$, $\mathcal{C}$ can use the recorded~$\mathbf{F}_{\ms{ovk}}$ to decrypt the ciphertext $(\mathbf{c},\mathbf{c}_2)$. 
    Hence $\mathcal{C}$ can answer all signature opening queries. It then follows that Game~$2$ and Game~$1$ are statistically indistinguishable.

\smallskip

\item[Game $3$:] In this game,we modify  Game $2$ as follows.
Instead of generating a real proof $\Pi^*$ for the challenged signature,  we generate a simulated one without using the witness by programming the random oracle $\mathcal{H}_1$.
Since our argument system is statistically zero-knowledge, the view of adversary $\mathcal{A}$ is statistically indistinguishable between Game~$3$ and Game $2$. 

\smallskip

\item[Game $4$:]  In this game, we change Game $3$ as follows. Instead of computing  \[(\mathbf{c}_1^*, \mathbf{c}_2^*)=(\mathbf{B^*}^\top\cdot\mathbf{s}+\mathbf{e}_1, ~ \mathbf{G^*}^\top\cdot\mathbf{s}+\mathbf{e}_2+\big\lfloor\frac{q}{2}\big\rfloor\cdot\mathsf{Bin}(i_b))\in\mathbb{Z}_q^m\times \mathbb{Z}_q^{\ell}\] in the challenged phase, where   $\mathbf{s}\in\chi^n,\mathbf{e}_1\in\chi^m,\mathbf{e}_2\in\chi^\ell$, we let $$(\mathbf{c}_1^*, \mathbf{c}_2^*)=(\mathbf{z}_1^*,~\mathbf{z}_2^*+\big\lfloor\frac{q}{2}\big\rfloor\cdot\mathsf{Bin}(i_b) ),$$ where $\mathbf{z}_1^*,\mathbf{z}_2^*$ are uniformly random vectors over $\mathbb{Z}_q^m$ and $\mathbb{Z}_q^\ell$. We claim that this modification is computationally indistinguishable to the view of the adversary $\mathcal{A}$ assuming the hardness of $\mathsf{LWE}_{n,q,\chi}$. Indeed, if we let $\mathbf{D}=[\mathbf{B}^*|\mathbf{G}^*]\in\mathbb{Z}_q^{n\times (m+\ell)}$, $\mathbf{e}=(\mathbf{e}_1\|\mathbf{e}_2)\in\chi^{m+\ell}$, and $\mathbf{z}=(\mathbf{z}_1^*\|\mathbf{z}_2^*)\in\mathbb{Z}_q^{m+\ell}$, then to distinguish Game $3$ and Game $4$ is to distinguish $(\mathbf{D}, \mathbf{D}^T\cdot\mathbf{s}+\mathbf{e})$ and $(\mathbf{D},\mathbf{z})$. Recall that $\mathbf{B}^*$ and $\mathbf{G}^*$ are uniformly random matrices, so is $\mathbf{D}$.  

\smallskip

\item[Game $5$:]  In this game, we slightly change Game $4$ by substituting $(\mathbf{c}_1^*, \mathbf{c}_2^*)$ with a new independent and uniform tuple $(\mathbf{z}'_1,\mathbf{z}'_2)$. 
    It is straightforward  that Game $5$ and Game $4$ are statistically indistinguishable. 
    Furthermore, the challenged signature in this game does not depends on the challenged bit $ b$ any more, and hence the advantage of $\mathcal{A}$ in this game is $0$.
\smallskip
\end{description}

 It then follows that $\mathbf{Adv}_{\mathrm{FSGS},\mathcal{A}}^{\ms{Anon}}(\lambda,T,N)$ is negligible in $\lambda$ because of the indistinguishability of the above games. This concludes the proof. \qed 

\end{proof}

\begin{lemma}
\label{lemma:traceability} In the random oracle model, the   forward-secure group signature scheme described in Section \ref{subsection:construction} is
 forward-secure traceable under the hardness of the $\mathsf{SIS}_{n,\overline{m},q,2\beta}^{\infty}$ problem, where $\overline{m}=(k+1)m$. 
\end{lemma}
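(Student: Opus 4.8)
The plan is to reduce forward-secure traceability to the $\mathsf{SIS}_{n,\overline{m},q,2\beta}^{\infty}$ problem, where $\overline{m}=(k+1)m$, following the standard paradigm for lattice-based group signatures built on the Bonsai tree signature. Given an $\mathsf{SIS}$ instance $\mathbf{A}^\ast \in \mathbb{Z}_q^{n\times \overline{m}}$, I would parse it as $\mathbf{A}^\ast = [\mathbf{A}_0 \mid \mathbf{A}_1 \mid \cdots \mid \mathbf{A}_k]$ with each block in $\mathbb{Z}_q^{n\times m}$, then embed it into the group public key. Since the simulator must answer secret-key queries $\mathbf{usk}_t[i]$ for corrupted users, it cannot know the trapdoor for every $\mathbf{A}_{\mathrm{id}\|z}$; I would use the classic Boyen/Bonsai ``partitioning'' trick. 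The simulator guesses (or uses the abort-resampling technique) a target path and sets up the matrices $\mathbf{A}_i^0,\mathbf{A}_i^1$ so that it knows a short trapdoor basis $\mathbf{S}_0$ for $\mathbf{A}_0$ in the branches it needs, while planting $\mathbf{A}^\ast$ along the ``challenge'' path so that no trapdoor is known there. Concretely, because all queried $\mathbf{usk}$ values live inside the Bonsai tree and the set $\mathrm{Nodes}_{(t\to T-1)}$ only ever delegates ``downward and to the right'', the simulator can program one bit position per level to distinguish queryable subtrees from the challenge subtree, using $\mathsf{TrapGen}$ for the branch matrices it must be able to delegate from and uniformly random matrices (derived from $\mathbf{A}^\ast$) for the others.

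Next I would argue that the simulation is statistically close to the real game: $\mathbf{A}_0$ from $\mathsf{TrapGen}$ is close to uniform (Lemma~\ref{lemma:trapgen}); the $\mathbf{A}_i^b$ are uniform in both games; the matrix $\mathbf{B}$ and $\mathsf{mosk}=\mathbf{S}$ are generated honestly (the adversary is given $\mathsf{mosk}$ anyway in the traceability game, so there is nothing to hide there); the keys $\mathbf{S}_{\mathrm{id}\|z}$ and vectors $\mathbf{v}_{\mathrm{id}\|z}$ returned on corruption queries are produced via $\mathsf{RandBasis}\circ\mathsf{ExtBasis}$ and $\mathsf{SampleD}$, whose outputs are statistically independent of which trapdoor was used (Lemmas~\ref{lemma:extbasis}, \ref{lemma:randbasis}, \ref{lemma:sampled}); and signing queries are answered either with a real key (when the user is not corrupted and the simulator happens to hold a usable basis) or with the zero-knowledge simulator for $\Pi$ together with the random-oracle programming of $\mathcal{H}_1$. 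The one subtlety is answering $\mathsf{Sign}$ queries for users for whom the simulator has \emph{no} basis at the relevant leaf: here I would use the simulated-proof strategy (the $\mathsf{ZKAoK}$ is statistical ZK, so a simulated $\Pi$ on a valid-looking but possibly unknown witness is indistinguishable), encrypting the correct $\mathrm{id}$ honestly.

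When the adversary outputs a forgery $(t^\ast,M^\ast,\Sigma^\ast)$ with $\Sigma^\ast=(\mathsf{ovk}^\ast,\mathbf{c}_1^\ast,\mathbf{c}_2^\ast,\Pi^\ast,\mathrm{sig}^\ast)$ that wins, I would first invoke the forking lemma on the Fiat--Shamir $\mathsf{NIZKAoK}$: rewinding at the random-oracle query defining $\mathrm{CH}$ yields, with non-negligible probability, three accepting transcripts sharing the same commitment, from which the knowledge extractor of Theorem~\ref{theorem:zk-protocol} produces a witness $\xi^\ast=(\mathrm{id}^\ast,\mathbf{s}^\ast,\mathbf{e}_1^\ast,\mathbf{e}_2^\ast,\mathbf{v}^\ast)$ satisfying $\mathbf{A}_{\mathrm{id}^\ast\|\mathsf{Bin}(t^\ast)}\cdot\mathbf{v}^\ast=\mathbf{u}\bmod q$, $\|\mathbf{v}^\ast\|_\infty\le\beta$, and the encryption relations. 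Since the extracted $\mathrm{id}^\ast$ is exactly what $\mathsf{Open}$ recovers from $(\mathbf{c}_1^\ast,\mathbf{c}_2^\ast)$, the winning condition guarantees that either $\mathrm{id}^\ast$ was never corrupted, or it was corrupted only at times $t>t^\ast$. In both cases, by the path-partitioning setup the simulator \emph{never} handed the adversary any basis lying on (an ancestor of) the node $\mathrm{id}^\ast\|\mathsf{Bin}(t^\ast)$; equivalently, the adversary was never given a short preimage of $\mathbf{u}$ under $\mathbf{A}_{\mathrm{id}^\ast\|\mathsf{Bin}(t^\ast)}$ nor any basis from which one could be delegated. (For the case $\mathrm{id}^\ast\in\mathsf{CU}$ corrupted late, one uses the fact that $\mathrm{Nodes}_{(t\to T-1)}$ for $t>t^\ast$ contains no ancestor of $\mathsf{Bin}(t^\ast)$ — this is precisely the forward-security property of the node selection.) Conditioned on the guess of the challenge path being correct — which happens with probability $1/\mathrm{poly}$, and can be amplified away in the usual way — the simulator then combines its own short solution for $\mathbf{A}_{\mathrm{id}^\ast\|\mathsf{Bin}(t^\ast)}$ obtained via its planted trapdoor on a \emph{companion} matrix: more precisely, along the challenge path $[\mathbf{A}_0\mid\mathbf{A}_1^{\mathrm{id}^\ast[1]}\mid\cdots]$ is exactly $\mathbf{A}^\ast$ (up to a known permutation of blocks), so $\mathbf{A}^\ast\cdot\mathbf{v}^\ast=\mathbf{u}$; the simulator independently samples its own short $\mathbf{v}'$ with $\mathbf{A}^\ast\cdot\mathbf{v}'=\mathbf{u}$ using a trapdoor it retained for a disjoint branch — no, more cleanly: it uses the standard trick of having planted $\mathbf{u}=\mathbf{A}^\ast\cdot\mathbf{r}$ for a known short $\mathbf{r}$ (or with a known small-norm preimage) so that $\mathbf{v}^\ast-\mathbf{r}$ (suitably zero-padded to length $\overline{m}$) is a nonzero short vector in $\Lambda^\perp(\mathbf{A}^\ast)$ of infinity-norm at most $2\beta$, solving $\mathsf{SIS}_{n,\overline{m},q,2\beta}^\infty$. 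Nonzeroness follows from the min-entropy of $\mathbf{r}$ given $\mathbf{u}$ (Lemma~\ref{lemma:infinity-norm-bound}), so $\mathbf{v}^\ast\ne\mathbf{r}$ with overwhelming probability.

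\textbf{Main obstacle.} The delicate point is the partitioning argument: one must set up the $2k$ matrices $\mathbf{A}_i^b$ so that \emph{every} basis the adversary can legitimately obtain through corruptions and key updates (i.e., every $\mathbf{S}_{\mathrm{id}\|z}$ for $z\in\mathrm{Nodes}_{(t\to T-1)}$ with the user corrupted at time $\ge t$) is delegable from a trapdoor the simulator holds, while the single ``challenge node'' $\mathrm{id}^\ast\|\mathsf{Bin}(t^\ast)$ is not — and then show the adversary's view is independent enough that the guess of this node succeeds with probability $1/\mathrm{poly}(\lambda,N,T)$. Because the same $\mathbf{A}_i^b$ matrices are shared across all $N$ users and all $T$ leaves, and because a corrupted user at time $t$ leaks an \emph{entire subtree} worth of delegable bases, the combinatorics of which bit-patterns the simulator can afford to ``own'' is the crux; I expect this to be handled exactly as in the Bonsai signature unforgeability proof of Cash et al.~\cite{CHKP10} lifted to the two-layer ($\ell+d$) tree, with the node-selection structure of Libert--Yung~\cite{LY10} ensuring the forward-security case reduces to the same partitioning.
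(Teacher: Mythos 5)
Your proposal is correct and follows essentially the same route as the paper's proof: guess the target user and time period, embed the $\mathsf{SIS}$ blocks along the challenge path $\mathrm{id}^*\|\mathsf{Bin}(t^*)$ while generating the complementary branch matrices with $\mathsf{TrapGen}$, plant $\mathbf{u}=\mathbf{C}\cdot\mathbf{z}$ for a known Gaussian $\mathbf{z}$, extract a witness via the forking lemma, and output $\mathbf{v}^*-\mathbf{z}$, with nonzeroness from the min-entropy of $\mathbf{z}$ given $\mathbf{u}$. The "main obstacle" you flag is resolved exactly as you anticipate: the paper's partitioning reduces to one differing bit per corrupted identity (and, for the late-corrupted target user, one differing bit per node of $\mathrm{Nodes}_{(t\to T-1)}$ versus $\mathsf{Bin}(t^*)$), at a $1/(NT)$ guessing loss.
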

\begin{proof}

Assume there is a $\mathrm{PPT}$ adversary $\mathcal{A}$ attacking the forward-secure traceability of the forward-secure group signature scheme with non-negligible probability, then we construct a new $\mathrm{PPT}$ adversary  $\mathcal{B}$ attacking the $\mathsf{SIS}_{n,\overline{m},q,2\beta}^{\infty}$ problem with non-negligible probability.

Given an $\mathsf{SIS}$ instance $\mathbf{C}\in\mathbb{Z}_q^{n\times\overline{m}}$, the goal of $\mathcal{B}$ is to find a non-zero vector $\mathbf{v}\in\mathbb{Z}_q^{\overline{m}}$ such that $\mathbf{C}\cdot\mathbf{v}=\mathbf{0}\mod q$ and $\|\mathbf{v}\|_{\infty}\leq 2\beta$. 
Towards this goal, $\mathcal{B}$ simulates the view of the adversary $\mathcal{A}$ attacking the forward-secure traceability.
Initially, $\mathcal{B}$ lets $t=0$ and the set $\mathrm{CU}$ be empty and then generates group public key $\mathsf{gpk}$, secret key for opening $\mathsf{mosk}$, and some secret internal state as follows:
    \begin{itemize}
        \item Parse $\mathbf{C}$ as $\mathbf{C}=[\mathbf{C}_0|\mathbf{C}_1|\cdots|\mathbf{C}_{k}]$ for $\mathbf{C}_j\in\mathbb{Z}_q^{n\times m}$, $j\in\{0,1,\cdots,k\}$. It then generates the remaining public parameters as in Section~\ref{subsection:construction}.
        \item Sample $\mathbf{z}=(\mathbf{z}_0\|\mathbf{z}_1\|\cdots\|\mathbf{z}_{k})\in\mathbb{Z}^{\overline{m}}$, where each $\mathbf{z}_i$ is sampled from $D_{\mathbb{Z}^{m},s_{k}}$. If $\|\mathbf{z}\|_{\infty}>\beta$, then repeat the sampling. Compute $\mathbf{u}=\mathbf{C}\cdot\mathbf{z}\mod q$. 

        \item Guess the targeted  user $i^*\in\{0,1,\cdots,N-1\}$ and targeted forgery time period $t^*\in\{0,1,\cdots,T-1\}$ uniformly. 

        \item \emph{Uncontrolled growth.} Let $\mathrm{id}^*=\mathsf{Bin}(i^*)$ and $z^*=\mathsf{Bin}(t^*)$. Define $\mathbf{A}_{j}^{\mathrm{id}^*[j]}$ to be $\mathbf{C}_{j}$ for $j\in[\ell]$ and $\mathbf{A}_{\ell+j}^{z^*[j]}$ to be $\mathbf{C}_{\ell+j}$ for $j\in[d]$. Recall $k=\ell+d$. 

        \item \emph{Controlled growth.} Generate $\mathbf{A}_{j}^{1-\mathrm{id}^*[j]}$ for all $j\in[\ell]$ via $(\mathbf{A}_{j}^{1-\mathrm{id}^*[j]},\mathbf{S}_j)\leftarrow\mathsf{TrapGen}(n,m,q)$ and $\mathbf{A}_{\ell+j}^{1-z^*[j]}$  for $j\in[d]$ via the algorithm $(\mathbf{A}_{\ell+j}^{1-z^*[j]},\mathbf{S}_{\ell+j})\leftarrow\mathsf{TrapGen}(n,m,q)$. 

        \item Generate a master key pair $(\mathbf{B},\mathbf{S})$ via $\mathsf{TrapGen}(n,m,q)$. 

    \end{itemize}
$\mathcal{B}$ invokes $\mathcal{A}$ by sending  $\mathsf{gpk}, \mathsf{mosk}$ and then interacts with $\mathcal{A}$.   At the start of each time period $t\in\{0,1,\cdots, T-1\}$, $\mathcal{B}$ announces the beginning of $t$ to $\mathcal{A}$. At current time period $t$, $\mathcal{B}$ responds to $\mathcal{A}$'s queries as follows.
        \begin{itemize}
         \item  When $\mathcal{A}$ queries the random oracles $\mathcal{H}_0,\mathcal{H}_1$, $\mathcal{B}$ replies   with uniformly random strings and keeps a record of the queries. 

         \item When $\mathcal{A}$ queries the secret key of member $i^*$, if $i^*\in \mathrm{CU}$ or $t\leq t^*$,  $\mathcal{B}$ then aborts\footnote{ This guarantees that once the secret key of member $i^*$ is queried, it is only queried at time $t$ such that $ t>t^*$.}. Otherwise, for each node $z\in\mathrm{Nodes}_{(t\rightarrow T-1)}$,  it first computes the smallest index $d_{z,t}$ such that $1\leq d_{z,t}\leq d$ and $ z^* [d_{z,t}]\neq z[d_{z,t}]$.
        Then $\mathcal{B}$ computes $\mathbf{usk}_{t}[i][z]$ via $\mathsf{SampleD}(\mathsf{ExtBasis}(\mathbf{S}_{\ell+d_{z,t}},\mathbf{A}_{\mathrm{id}^*\|z}),s_{k})$ if $z$ is of length $d$ or via $\mathsf{RandBasis}(\mathsf{ExtBasis}(\mathbf{S}_{\ell+d_{z,t}},\mathbf{A}_{\mathrm{id}^*\|z}),s_{\ell+d_{z}})$ if $z$ is of length $d_z<d$.
        Finally, $\mathcal{B}$ sets $\mathbf{usk}_t[i]$  as in our construction and sends it to $\mathcal{A}$. Add $i^*$ to the set $\mathrm{CU}$. 

         \item When $\mathcal{A}$ queries the secret key of  member $i\neq i^*$, if $i\in \mathrm{CU}$, $\mathcal{B}$ then aborts. Otherwise,  let $\mathrm{id}=\mathsf{Bin}(i)$ and $\ell_i$ be the smallest index such that $1 \leq \ell_i\leq \ell$ and $\mathrm{id}[\ell_i]\neq \mathrm{id}^*[\ell_i]$.
         Compute $\mathbf{usk}_{t}[i][z]$  via the algorithm $\mathsf{SampleD}(\mathsf{ExtBasis}(\mathbf{S}_{\ell_i},\mathbf{A}_{\mathrm{id}\|z}),\mathbf{u},s_{k})$ if $z$ is of length $d$ or via the algorithm  $\mathsf{RandBasis}(\mathsf{ExtBasis}(\mathbf{S}_{\ell_i},\mathbf{A}_{\mathrm{id}\|z}),s_{\ell+d_z})$ if $z$ is of length $d_z< d$ for $z\in\mathrm{Nodes}_{(t\rightarrow T-1)}$.
         Set $\mathbf{usk}_t[i]$  as in our construction and send it to $\mathcal{A}$. Finally, add $i$ to the set $\mathrm{CU}$. \smallskip

        \item When $\mathcal{A}$ queries a signature on a message for user $i$. If $i\in \mathrm{CU}$ at current time $t$, $\mathcal{B}$ aborts\footnote{Note that $\mathcal{B}$ never answers the signing queries of corrupted users.}.  Otherwise, if $i\neq i^*$, $\mathcal{B}$ responds as in our algorithm $\mathsf{Sign}$ using the
        corresponding witness. If $i=i^*$, $\mathcal{B}$
         performs the same as in our algorithm $\mathsf{Sign}$ except that it generates a simulated proof $\Pi'$
         by programming the hash oracle $\mathcal{H}_1$ and returns
         $\Sigma= (\ms{ovk},\mathbf{c}_1, \mathbf{c}_2, \Pi',\mathrm{sig})$ to~$\mathcal{A}$.

        \end{itemize}
It is worth noticing that the secret key of user $i^*$ is either never queried ($i^*\notin \mathrm{CU}$) or is only queried at time $t$ with $t>t^*$.

We claim that $\mathcal{A}$ cannot distinguish whether it interacts with the real challenger or with $\mathcal{B}$. First, group public key $\mathsf{gpk}$ given to $\mathcal{A}$ is indistinguishable from the real one. This is because the output matrix $\mathbf{A}$ of the $\mathsf{TrapGen}$ algorithm is statistically close to uniform by Lemma~\ref{lemma:trapgen} and $\mathbf{u}$ is statistically close to a uniform vector over $\mathbb{Z}_q^n$ by Lemma~\ref{lemma:infinity-norm-bound}.
Second, the secret signing key given to   $\mathcal{A}$ is indistinguishable from the real one due to the fact that the outputs of $\mathsf{RandBasis}$ using two different bases are within negligible statistical distance by Lemma~\ref{lemma:randbasis}. Third, the signature queries make no difference to the view of $\mathcal{A}$. This can be implied by the statistical zero-knowledge property of the underlying argument system.

When $\mathcal{A}$ halts and outputs a message $M^*$ and a signature $\Sigma^*$ at the targeted time period $t'$ such that $\mathsf{Verify}(\ms{gpk},t',M^*,\Sigma^*)=1$ and $\Pi^*$ is not obtained by making a signing query at $M^*$, check $t'=t^*$ holds or not. If not, indicating the guess of $t^*$ fails, then $\mathcal{B}$ aborts. Otherwise, $\mathcal{A}$ outputs $(t^*,M^*,\Sigma^*)$. Parse $\Sigma^*$ as \[(\ms{ovk}^*,~\mathbf{c}_1^*, ~\mathbf{c}_2^*,~
(\mathrm{CMT}_i^*)_{i=1}^{\kappa},~(\mathrm{Ch}_i^*)_{i=1}^{\kappa},~(\mathrm{RSP}_i^*)_{i=1}^{\kappa},~\mathrm{sig}^*). \]
 Run $\mathrm{id}'\leftarrow\mathsf{Open}(\ms{gpk},\mathsf{mosk},t^*,M^*,\Sigma^*)$. If $\mathrm{id}'\neq\mathrm{id}^*$, indicating that the guess of $i^*$ fails, then $\mathcal{B}$ aborts. Otherwise, $\mathcal{B}$ makes use of the forgery to solve the $\mathsf{SIS}$ problem as follows.

 First,  $\mathcal{A}$ must have queried $\mathcal{H}_1$ for the tuple $(M^*,(\mathrm{CMT}_i^*)_{i=1}^{\kappa},\mathbf{c}_1^*,\mathbf{c}_2^*,t^*)$, 
 since the probability of guessing this value is at most $3^{-\kappa}$, which is negligible by our choice of $\kappa$.
 Let $(M^*,(\mathrm{CMT}_i^*)_{i=1}^{\kappa},\mathbf{c}_1^*,\mathbf{c}_2^*,t^*)$ be the $h$-th oracle query and $Q_{\mathcal{H}_1}$ be the total oracle queries $\mathcal{A}$ has made to $\mathcal{H}_1$.
Next, $\mathcal{B}$ lets $h$
be the targeted forking point and replays $\mathcal{A}$ polynomial-number times. For each new run, $\mathcal{B}$ starts with the same random tape  and random input  as in the original run. Further, for the first $h-1$ queries of $\mathcal{H}_1$, $\mathcal{B}$ replies with the same random value as in the original run, but from $h$-th query on, $\mathcal{B}$ replies with fresh and independent value. Besides, for queries of $\mathcal{H}_0$, $\mathcal{B}$ always replies as in the original run.

Constructed in this way, $(M^*,(\mathrm{CMT}_i^*)_{i=1}^{\kappa},\mathbf{c}_1^*,\mathbf{c}_2^*,t^*)$ is always the $h$-th oracle query $\mathcal{A}$ made to $\mathcal{H}_1$. The improved forking lemma~\cite{BPVY00} implies that with probability  $\geq\frac{1}{2}$, $\mathcal{B}$ obtains $3$-fork
for the  tuple $(M^*,(\mathrm{CMT}_i^*)_{i=1}^{\kappa},\mathbf{c}_1^*,\mathbf{c}_2^*,t^*)$ with pairwise
distinct hash values $\mathrm{CH}_h^{(1)}$, $\mathrm{CH}_h^{(2)}$, $\mathrm{CH}_h^{(3)}$ and corresponding valid responses $\mathrm{RSP}_h^{(1)}$, $\mathrm{RSP}_h^{(2)}$, $\mathrm{RSP}_h^{(3)}$. A simple calculation shows
that  with probability $1-(\frac{7}{9})^\kappa$, we have $\{\mathrm{CH}_{h,j}^{(1)}$, $\mathrm{CH}_{h,j}^{(2)}$, $\mathrm{CH}_{h,j}^{(3)}\}=\{1,2,3\}$ for some $j\in[\kappa]$. Therefore,
$\mathrm{RSP}_{h,j}^{(1)}$, $\mathrm{RSP}_{h,j}^{(2)}$, $\mathrm{RSP}_{h,j}^{(3)}$ are $3$ valid responses for all the challenges $1,2,3$   w.r.t. the same
commitment $\mathrm{CMT}_{j}^*$. Since $\mathsf{COM}$ is computationally binding, $\mathcal{B}$ is able to extract the witness tuple $$\xi^*=(\mathrm{id},\mathbf{s},\mathbf{e}_1,\mathbf{e}_2,\mathbf{v}_{\mathrm{id}\|z})$$ such that $\|\mathbf{v}_{\mathrm{id}\|z}\|_{\infty}\leq \beta$, $\|\mathbf{s}\|_{\infty}\leq B$, $\|\mathbf{e}_1\|_{\infty}\leq B$, $\|\mathbf{e}_2\|_{\infty}\leq B$ and
\begin{eqnarray*}
&\mathbf{A}_{\mathrm{id}\|z}\cdot \mathbf{v}_{\mathrm{id}\|z}=\mathbf{u}\mod q, \\
&(\mathbf{c}_1^*=\mathbf{B}^T\cdot\mathbf{s}+\mathbf{e}_1, ~ \mathbf{c}_2^*=(\mathbf{G}^*)^T\cdot\mathbf{s}+\mathbf{e}_2+\big\lfloor\frac{q}{2}\big\rfloor\cdot\mathrm{id})\in \mathbb{Z}_q^m\times \mathbb{Z}_q^{\ell},
\end{eqnarray*} where $\mathbf{G}^*=\mathcal{H}_0(\mathsf{ovk}^*)$.
Conditioned on guessing correctly $i^*,t^*$, we have $\mathrm{id}=\mathrm{id}^*$ and $z=z^*$. Therefore, $\mathbf{A}_{\mathrm{id}\|z}=\mathbf{C}$.
Now we have $\mathbf{C}\cdot \mathbf{v}_{\mathrm{id}\|z}=\mathbf{u}=\mathbf{C}\cdot\mathbf{z}\mod q$.
We claim that $\mathbf{v}_{\mathrm{id}\|z}\neq\mathbf{z}$ with overwhelming probability.
This is because $\mathcal{A}$ either queried the secret key of
user $\mathrm{id}$ at time after $t^*$ or never queried the secret key at all (by successfully attacking forward-secure traceability), then $\mathbf{z}$ is not known to $\mathcal{A}$.
Further, from the view of the adversary, $\mathbf{z}$ is from the distribution $D_{\Lambda^{\mathbf{u}}(\mathbf{C}),s_{k}}$ and hence has large min-entropy, which are implied by Lemma~\ref{lemma:infinity-norm-bound}. Therefore, $\mathbf{v}_{\mathrm{id}\|z}\neq\mathbf{z}$ with overwhelming probability. Hence $\mathbf{x}=\mathbf{v}_{\mathrm{id}\|z}-\mathbf{z}\neq \mathbf{0}$ and $\|\mathbf{x}\|_{\infty}\leq 2\beta$.  This implies that we solve the $\mathsf{SIS}_{n,\overline{m},q,2\beta}^{\infty}$ problem with non-negligible probability and hence our scheme is forward-secure traceable.

\end{proof}

\section{The Underlying Zero-Knowledge Argument System}\label{section:main-nizk}

In Section~\ref{subsection:dec-ext}, we recall the extension, decomposition, and permutation techniques from~\cite{LNSW13,LLNW14-PKC}. Then we describe in Section~\ref{subsection:main-nizk} our statistical $\mathsf{ZKAoK}$ protocol that will be used in generating group signatures.

\subsection{Extension, Decomposition, and Permutation}\label{subsection:dec-ext}
{\sc{Extensions.}} For $\mathfrak{m}\in\mathbb{Z}$, let $\mathsf{B}_{3\mathfrak{m}}$ be the set of all vectors in $\{-1,0,1\}^{3\mathfrak{m}}$ having exactly $\mathfrak{m}$ coordinates $-1$, $\mathfrak{m}$ coordinates $1$, and $\mathfrak{m}$ coordinates $0$ and $\mathcal{S}_{\mathfrak{m}}$ be the set of all permutations on $\mathfrak{m}$ elements. Let $\oplus$ be the addition operation modulo~$2$. Define the following functions
\begin{itemize}

\item $\mathsf{ext}_3$: $\{-1,0,1\}^{\mathfrak{m}}\rightarrow \mathsf{B}_{3\mathfrak{m}}$ that transforms a vector $\mathbf{v}=(v_1,\ldots,v_\mathfrak{m})^\top$  to vector $(\mathbf{v}\|(\mathbf{-1})^{\mathfrak{m}-n_{-1}}\|\mathbf{0}^{\mathfrak{m}-n_0}\|\mathbf{1}^{\mathfrak{m}-n_1})^\top$,  where $n_{j}$ is the number of element $j$ in the vector $\mathbf{v}$ for $j\in\{-1,0,1\}$.
\item $\mathsf{enc}_2$: $\{0,1\}^{\mathfrak{m}}\rightarrow \{0,1\}^{2\mathfrak{m}}$ that transforms a vector $\mathbf{v}=(v_1,\ldots,v_\mathfrak{m})^\top$ to vector $(v_1,1-v_1,\ldots,v_\mathfrak{m},1-v_\mathfrak{m})^\top$.
\end{itemize}

\noindent {\sc{Decompositions and Permutations.}} We now recall the integer decomposition technique. For any $B \in \mathbb{Z}^+$,  define   $p_B=\lfloor \log B\rfloor +1$ and  the sequence $B_1, \ldots, B_{p_B}$ as $B_j = \lfloor\frac{B + 2^{j-1}}{2^j} \rfloor$  for each $  j \in [p_B]$. As observed in \cite{LNSW13}, it  satisfies $\sum_{j=1}^{p_B} B_j = B$ and
 any integer $v \in [ B]$ can be decomposed to $\mathsf{idec}_B(v) = (v^{(1)}, \ldots, v^{(p_B)})^\top \in \{0,1\}^{p_B}$ such that $\sum_{j=1}^{p_B}B_j \cdot  v^{(j)} = v$. This decomposition procedure is described in a deterministic manner as follows: \vspace{-0.1 cm}
\begin{enumerate}
\item $v': = v$
\item For $j=1$ to $p_B$ do:
    \begin{enumerate}[(i)]
    \item If $v' \geq B_j$ then $v^{(j)}: = 1$, else $v^{(j)}: = 0$;
    \item $v': = v' - B_j\cdot v^{(j)}$.
    \end{enumerate}
\item Output $\mathsf{idec}_B(v) = (v^{(1)}, \ldots, v^{(p_B)})^\top$.
\end{enumerate}

\noindent Next, for any positive integers $\mathfrak{m}, B$, we define the function  $\mathsf{vdec}_{\mathfrak{m},B}$  that transforms 
  a vector $\mathbf{w}=(w_1,\ldots, w_\mathfrak{m})^\top\in [-B,B ]^{\mathfrak{m}}$ to a vector 
of the following form: $$\mathbf{w}'=(\sigma(w_1)\cdot \mathsf{idec}_{B}(|w_1|)\|\cdots\|\sigma(w_\mathfrak{m})\cdot \mathsf{idec}_B(|w_\mathfrak{m}|))\in\{-1,0,1\}^{\mathfrak{m}p_B},$$
where  $\forall j\in[\mathfrak{m}]$: $\sigma(w_j)=0$ if $w_j=0$;  $\sigma(w_j)=-1$ if $w_j<0$; $\sigma(w_j)=1$ if $w_j>0$.

Define the  matrix  $\mathbf{H}_{{\mathfrak{m}},B}=\left[
                                                                             \begin{array}{ccc}
                                                                               B_1,\ldots, B_{p_B}  & ~ & ~ \\
                                                                               ~ & \ddots & ~ \\
                                                                               ~ & ~& B_1,\ldots, B_{p_B}  \\
                                                                             \end{array}
                                                                           \right]\in\mathbb{Z}^{{\mathfrak{m}}\times { \mathfrak{m}}p_{B}}$ and its extension $\widehat{\mathbf{H}}_{\mathfrak{m},B}=[\mathbf{H}_{\mathfrak{m},B}|\mathbf{0}^{\mathfrak{m}\times 2\mathfrak{m}p_B}]\in\mathbb{Z}^{{\mathfrak{m}}\times { 3\mathfrak{m}}p_{B}}$.
Let $\widehat{\mb{w}}=\mathsf{ext}_3(\mathbf{w}')\in\mathsf{B}_{3\mathfrak{m}p_B}$, then one can see that $\widehat{\mathbf{H}}_{\mathfrak{m},B}\cdot \widehat{\mathbf{w}}=\mathbf{w}$ and for any $\psi\in \mathcal{S}_{3\mathfrak{m}p_B}$, the following equivalence holds:
 \begin{eqnarray}\label{eqn:permutation}
\widehat{\mathbf{w}}\in\mathsf{B}_{3\mathfrak{m}p_B}\Leftrightarrow \psi(\widehat{\mathbf{w}})\in\mathsf{B}_{3\mathfrak{m}p_B}.
\end{eqnarray}

Define the following permutation.
\begin{itemize}
\item For any $\mathbf{e}=(e_1,\ldots,e_{\mathfrak{m}})^\top \in\{0,1\}^{\mathfrak{m}}$, define   $\Pi_{\mathbf{e}}: \mathbb{Z}^{2\mathfrak{m}}\rightarrow \mathbb{Z}^{2\mathfrak{m}}$ that maps a vector $\mathbf{v}=(v_1^0,v_1^1,\ldots, v_{\mathfrak{m}}^0,v_{\mathfrak{m}}^1)^\top$ to $(v_1^{e_1},v_1^{1-e_1},\ldots, v_{\mathfrak{m}}^{e_\mathfrak{m}},v_{\mathfrak{m}}^{1-e_{\mathfrak{m}}})^\top$.


\end{itemize}
One can see that, for any $\mathbf{z},\mathbf{e}\in\{0,1\}^{\mathfrak{m}}$, the following equivalence  holds:
    \begin{eqnarray}\label{eqn:enc2}
    \mathbf{v}=\mathsf{enc}_2(\mathbf{z})\Leftrightarrow \Pi_{\mathbf{e}} (\mathbf{v})=\mathsf{enc}_2(\mathbf{z}\oplus \mathbf{e}).
    \end{eqnarray}

\subsection{The Underlying Zero-Knowledge Argument System}\label{subsection:main-nizk}
We now describe a statistical $\mathsf{ZKAoK}$ that will be invoked by the signer when generating group signatures. The protocol is developed from Stern-like techniques proposed by Ling et al.~\cite{LNSW13} and Langlois et al.~\cite{LLNW14-PKC}.
\begin{description}
\item {\bf{Public input}} $\gamma$: $\mathbf{A}_0\in\mathbb{Z}_q^{n\times m}$, $\mathbf{A}_{j}^{b}\in\mathbb{Z}_q^{n\times m}$ for $(b,j)\in\{0,1\}\times[k]$, $\mathbf{u}\in\mathbb{Z}_q^n$, $\mathbf{B}\in\mathbb{Z}_q^{n\times m}$, $\mathbf{G}\in\mathbb{Z}_q^{n\times\ell}$, $(\mathbf{c}_1,\mathbf{c}_2)\in\mathbb{Z}_q^m\times\mathbb{Z}_q^\ell$, $t\in\{0,1,\cdots,T-1\}$.

\smallskip

\item {\bf{Secret input}} $\xi$: $\mathrm{id}\in\{0,1\}^\ell$, $\mathbf{s}\in\chi^n$, $\mathbf{e}_1\in\chi^m$, $\mathbf{e}_2\in\chi^\ell$, $\mathbf{v}_{\mathrm{id}\|z}\in\mathbb{Z}^{(\ell+d+1)m}$ with $z=\mathsf{Bin}(t)$.

\smallskip

\item {\bf{Prover's goal}}:
\begin{eqnarray}\label{equation:goal-1}
\begin{cases}
\mathbf{A}_{\mathrm{id}\|z}\cdot \mathbf{v}_{\mathrm{id}\|z}=\mathbf{u}\bmod q,~\|\mathbf{v}_{\mathrm{id\|z}}\|_{\infty}\leq \beta;\\
                   \mathbf{c}_1=\mathbf{B}^\top\cdot\mathbf{s}+\mathbf{e}_1\bmod q, ~ \mathbf{c}_2=\mathbf{G}^\top\cdot\mathbf{s}+\mathbf{e}_2+\big\lfloor\frac{q}{2}\big\rfloor\cdot\mathrm{id}\bmod q;\\
                   \|\mathbf{s}\|_{\infty}\leq B,~ \|\mathbf{e}_1\|_{\infty}\leq B, ~\|\mathbf{e}_2\|_{\infty}\leq B.
\end{cases}
\end{eqnarray}

\end{description}
We first rearrange the above conditions.
Let $\mathbf{A}'=[\mathbf{A}|\mathbf{A}_1^{0}|\mathbf{A}_1^{1}|\cdots |\mathbf{A}_\ell^{0}|\mathbf{A}_\ell^{1}]\in\mathbb{Z}_q^{(2\ell+1)m}$, $\mathbf{A}_{\mathrm{id}}=[\mathbf{A}_0|\mathbf{A}_1^{\mathrm{id}[1]}|\cdots|\mathbf{A}_1^{\mathrm{id}[\ell]}]\in\mathbb{Z}_q^{(\ell+1)m}$ and $\mathbf{A}''=\mathbf{A}_{\ell+1}^{z[1]}|\cdots|\mathbf{A}_{\ell+1}^{z[d]}]\in\mathbb{Z}_q^{dm}$.
Then $\mathbf{A}_{\mathrm{id}\|z}=[\mathbf{A}_{\mathrm{id}}|\mathbf{A}'']\in\mathbb{Z}_q^{(\ell+d+1)m}$.
Let $\mathbf{v}_{\mathrm{id}}=(\mathbf{v}_0\|\mathbf{v}_1\|\cdots\|\mathbf{v}_{\ell})$, $\mathbf{w}_2=(\mathbf{v}_{\ell+1}\| \cdots\|\mathbf{v}_{\ell+d})$ with each $\mathbf{v}_i\in\mathbb{Z}^m$. Then
$\mathbf{v}_{\mathrm{id}\|z}=(\mathbf{v}_{\mathrm{id}}\|\mathbf{w}_2)$.
Therefore $\mathbf{A}_{\mathrm{id}\|z}\cdot \mathbf{v}_{\mathrm{id}\|z}=\mathbf{u}\bmod q$ is equivalent to
\begin{eqnarray}\label{equation:goal-1-1}
\mathbf{A}_{\mathrm{id} }\cdot \mathbf{v}_{\mathrm{id}}+\mathbf{A}''\cdot \mathbf{w}_2=\mathbf{u}\bmod q.
\end{eqnarray}
Since $\mathrm{id}$ is part of secret input, $\mathbf{A}_{\mathrm{id}}$ should not be explicitly given. We note that Langlois et al.~\cite{LLNW14-PKC} already addressed this problem. The idea is as follows: they first added $\ell$ suitable zero-blocks of size $m$ to vector $\mathbf{v}_{\mathrm{id}}$ and then obtained the extended vector $\mathbf{w}_1=(\mathbf{v}_0\|\mathbf{v}_1^0\|\mathbf{v}_1^1\|\cdots\|\mathbf{v}_{\ell}^0\|\mathbf{v}_{\ell}^1) \in\mathbb{Z}^{(2\ell+1)m}$, where the added zero-blocks are $\mathbf{v}_1^{1-\mathrm{id}[1]},\ldots,\mathbf{v}_\ell^{1-\mathrm{id}[\ell]}$ and $\mathbf{v}_i^{\mathrm{id}[i]}=\mathbf{v}_i,\forall i\in[\ell]$.  Now one can check that equation~(\ref{equation:goal-1-1}) is equivalent to
\begin{eqnarray}\label{equation:goal-1-2}
\mathbf{A}'\cdot \mathbf{w}_1+\mathbf{A}''\cdot \mathbf{w}_2=\mathbf{u}\bmod q.
\end{eqnarray}

Let $\mathbf{B}'=\left[
                   \begin{array}{ccc}
                    \mathbf{B}^\top& \mathbf{I}_m & \mathbf{0}^{m\times \ell} \\
                    \mathbf{G}^\top & \mathbf{0}^{\ell\times m} & \mathbf{I}_{\ell} \\
                   \end{array}
                 \right]$, $\mathbf{B}''=\left[
                                              \begin{array}{c}
                                                \mathbf{0}^{m\times\ell} \\
                                                \lfloor q/2\rfloor \mathbf{I}_{\ell} \\
                                              \end{array}
                                            \right]
                 $,  and  $\mathbf{w}_3=(\mathbf{s}\|\mathbf{e}_1\|\mathbf{e}_2)\in\mathbb{Z}^{n+m+\ell}$. Then one can check that  $\mathbf{c}_1=\mathbf{B}^\top\cdot\mathbf{s}+\mathbf{e}_1\bmod q, ~ \mathbf{c}_2=\mathbf{G}^\top\cdot\mathbf{s}+\mathbf{e}_2+\big\lfloor\frac{q}{2}\big\rfloor\cdot\mathrm{id}\bmod q$ is equivalent to \begin{eqnarray}\label{equation:goal-1-3}
\mathbf{B}'\cdot \mathbf{w}_3+\mathbf{B}''\cdot \mathrm{id}=(\mathbf{c}_1\|\mathbf{c}_2)\bmod q.
\end{eqnarray}
Using basic algebra, we can transform equations~(\ref{equation:goal-1-2})and~(\ref{equation:goal-1-3}) into one equation of the following form: \[\mathbf{M}_0\cdot \mathbf{w}_0=\mathbf{u}_0\mod q,\] where  $\mathbf{M}_0,~\mathbf{u}_0$ are built from  $\mathbf{A}',\mathbf{A}'',\mathbf{B}',\mathbf{B}''$ and $\mathbf{u},(\mathbf{c}_1\|\mathbf{c}_2)$, respectively, and $\mathbf{w}_0=(\mathbf{w}_1\|\mathbf{w}_2\|\mathbf{w}_3\|\mathrm{id})$.

Now we can use the decomposition and extension techniques described in Section~\ref{subsection:dec-ext} to handle our secret vectors.
Let $L_1=3(2\ell+1)mp_\beta$, $L_2=3dmp_\beta$, $L_3=3(n+m+\ell)p_B$, and $L=L_1+L_2+L_3+2\ell$.
We transform our secret vector $\mathbf{w}_0 $ to vector $\mathbf{w}=(\widehat{\mathbf{w}}_1\|\widehat{\mathbf{w}}_2\|\widehat{\mathbf{w}}_3\|\widehat{\mathrm{id}})\in\{-1,0,1\}^{L}$ of the following form:
\begin{itemize}
\item $\widehat{\mathbf{w}}_1=(\widehat{\mathbf{v}}_0\|\widehat{\mathbf{v}}_{1}^{0}\|\widehat{\mathbf{v}}_1^1\|\cdots\|\widehat{\mathbf{v}}_\ell^0\|\widehat{\mathbf{v}}_\ell^{1})\in\{-1,0,1\}^{L_1}$ with $\widehat{\mathbf{v}}_0=\mathsf{ext}_3(\mathsf{vdec}_{m,\beta}(\mathbf{v}_0))\in \mathsf{B}_{3mp_\beta}$, $\forall i\in[\ell]$, $\widehat{\mathbf{v}}_{i}^{1-\mathrm{id}[i]}=\mathbf{0}^{3mp_\beta}$ and $\widehat{\mathbf{v}}_{i}^{\mathrm{id}[i]}=\mathsf{ext}_3(\mathsf{vdec}_{m,\beta}(\mathbf{v}_{i}^{\mathrm{id}[i]}))\in\mathsf{B}_{3mp_\beta}$;  \smallskip
\item $\widehat{\mathbf{w}}_2=\mathsf{ext}_3(\mathsf{vdec}_{dm,\beta}(\mathbf{w}_2))\in \mathsf{B}_{3dmp_\beta}$;\smallskip
\item $\widehat{\mathbf{w}}_3=\mathsf{ext}_3(\mathsf{vdec}_{n+m+\ell,B}(\mathbf{w}_3))\in\mathsf{B}_{3(n+m+\ell)p_B}$;\smallskip
\item $\widehat{\mathrm{id}}=\mathsf{enc}_2(\mathrm{id})\in\{0,1\}^{2\ell}$.
\end{itemize}

Using basic algebra, we can form public matrix $\mathbf{M}$ such that \[\mathbf{M}\cdot \mathbf{w} = \mathbf{M}_0\cdot \mathbf{w}_0 =\mathbf{u}_0 \bmod q.\]

Up to this point, we have transformed the considered relations into equation of the desired form $\mathbf{M}\cdot \mathbf{w}=\mathbf{u}\bmod q$.
We now specify the set $\mathsf{VALID}$ that contains the secret vector $\mathbf{w}$, the set $\mathcal{S}$ and permutations $\{\Gamma_{\phi}: \phi\in \mathcal{S}\}$ such that the conditions in~(\ref{eq:zk-equivalence}) hold.

Define $\mathsf{VALID}$ to be the set of vectors of the form $\mathbf{z}=(\mathbf{z}_1\|\mathbf{z}_2\|\mathbf{z}_3\|\mathbf{z}_4)\in\{-1,0,1\}^{L}$ such that there exists $\mathbf{x}\in\{0,1\}^{\ell}$
\begin{itemize}
\item $\mathbf{z}_1=(\mathbf{y}_0\|\mathbf{y}_{1}^{0}\|\mathbf{y}_{1}^{1}\|\cdots\|\mathbf{y}_{\ell}^{0}\|\mathbf{y}_{\ell}^{1})\in\{-1,0,1\}^{3(2\ell+1)mp_\beta}$ with  $\mathbf{y}_0\in \mathsf{B}_{3mp_\beta}$  and for each $ i\in[\ell]$, $\mathbf{y}_{i}^{1-\mathbf{x}[i]}=\mathbf{0}^{3mp_\beta},~\mathbf{y}_{i}^{\mathbf{x}[i]}\in\mathsf{B}_{3mp_\beta} $; \smallskip

\item $\mathbf{z}_2\in \mathsf{B}_{3dmp_\beta} $ and $\mathbf{z}_3\in \mathsf{B}_{3(n+m+\ell)p_B} $; \smallskip

\item $\mathbf{z}_4=\mathsf{enc}_2(\mathbf{x})\in\{0,1\}^{2\ell}$. \smallskip
\end{itemize}
Clearly, our vector $\mathbf{w}$ belongs to the tailored set $\mathsf{VALID}$.\smallskip

Now, let $\mathcal{S}=(\mathcal{S}_{3mp_\beta})^{2\ell+1}\times \mathcal{S}_{3dmp_\beta}\times \mathcal{S}_{3(n+m+\ell)p_B}\times \{0,1\}^{\ell}$.  For any \[\phi=(\psi_0,\psi_1^0,\psi_1^1,\ldots,\psi_{\ell}^0,\psi_{\ell}^1,\eta_2,\eta_3,\mathbf{e})\in\mathcal{S},~\mathbf{e}=(e_1,\ldots,e_\ell)^\top,\] define the permutation $\Gamma_{\phi}: \mathbb{Z}^{L}\rightarrow\mathbb{Z}^{L}$ as follows. When applied to a   vector
\[\mathbf{z}=(\mathbf{y}_0\|\mathbf{y}_{1}^{0}\|\mathbf{y}_{1}^{1}\|\cdots\|\mathbf{y}_{\ell}^{0}\|\mathbf{y}_{\ell}^{1}\|\mathbf{z}_2\|\mathbf{z}_3\|\mathbf{z}_4)\in\mathbb{Z}^{L}\]
where the first $2\ell+1$ blocks are of size $3mp_\beta$ and  the last three blocks are of size $3dmp_\beta$, $3(n+m+\ell)p_B$ and $2\ell$, respectively;
it transforms $\mathbf{z}$ to   vector $\Gamma_{\phi}(\mathbf{z})$ of the following form:
\begin{align*}
(\hspace*{1.6pt}&\psi(\mathbf{y}_0)\|\psi_{1}^{e_1}(\mathbf{y}_1^{{e}_1})\|\psi_{1}^{1-e_1}(\mathbf{y}_1^{1-e_1})\|\cdots\|\psi_{\ell}^{e_{\ell}}(\mathbf{y}_\ell^{e_\ell})\|\psi_{\ell}^{1-e_\ell}(\mathbf{y}_\ell^{1-e_{\ell}})\|
\\
&\eta_{2}(\mathbf{z}_2)\|\eta_{3}(\mathbf{z}_3)\|\Pi_{\mathbf{e}}(\mathbf{z}_4)\hspace*{1.6pt}).
\end{align*}

Based on the equivalences observed in~(\ref{eqn:permutation}) and~(\ref{eqn:enc2}) , it can be checked that if  $\mathbf{z}\in \mathsf{VALID}$ for some $\mathbf{x}\in\{0,1\}^\ell$, then $\Gamma_{\phi}(\mathbf{z})\in \mathsf{VALID}$ for some $\mathbf{x}\oplus \mathbf{e}\in\{0,1\}^{\ell}$. In other words, the conditions in~(\ref{eq:zk-equivalence}) hold, and therefore, we can obtain the desired statistical $\mathsf{ZKAoK}$ protocol.

\section*{Acknowledgements}
The research is supported by Singapore Ministry of Education under Research Grant MOE2016-T2-2-014(S). Khoa Nguyen is also supported by the Gopalakrishnan -- NTU Presidential Postdoctoral Fellowship 2018.

\bibliographystyle{abbrv}

\begin{thebibliography}{10}

\bibitem{AR00}
M.~Abdalla and L.~Reyzin.
\newblock {A New Forward-Secure Digital Signature Scheme}.
\newblock In {\em ASIACRYPT 2000}, volume 1976 of {\em LNCS}, pages 116--129.
  Springer, 2000.

\bibitem{Ajtai96}
M.~Ajtai.
\newblock {Generating Hard Instances of Lattice Problems (Extended Abstract)}.
\newblock In {\em STOC 1996}, pages 99--108. ACM, 1996.

\bibitem{AP09}
J.~Alwen and C.~Peikert.
\newblock {Generating Shorter Bases for Hard Random Lattices}.
\newblock In {\em {STACS} 2009}, pages 75--86, 2009.

\bibitem{Anderson02}
R.~Anderson.
\newblock {Two Remarks on Public Key Cryptology}.
\newblock Technical report, University of Cambridge, Computer Laboratory, 2002.

\bibitem{ACPS09}
B.~Applebaum, D.~Cash, C.~Peikert, and A.~Sahai.
\newblock Fast cryptographic primitives and circular-secure encryption based on
  hard learning problems.
\newblock In {\em {CRYPTO} 2009}, volume 5677 of {\em LNCS}, pages 595--618.
  Springer, 2009.

\bibitem{BMW03}
M.~Bellare, D.~Micciancio, and B.~Warinschi.
\newblock {Foundations of Group Signatures: Formal Definitions, Simplified
  Requirements, and a Construction Based on General Assumptions}.
\newblock In {\em EUROCRYPT 2003}, volume 2656 of {\em LNCS}, pages 614--629.
  Springer, 2003.

\bibitem{BM99}
M.~Bellare and S.~K. Miner.
\newblock {A Forward-Secure Digital Signature Scheme}.
\newblock In {\em CRYPTO 1999}, volume 1666 of {\em LNCS}, pages 431--448.
  Springer, 1999.

\bibitem{BSZ05}
M.~Bellare, H.~Shi, and C.~Zhang.
\newblock {Foundations of Group Signatures: The Case of Dynamic Groups}.
\newblock In {\em {CT-RSA} 2005}, volume 2656 of {\em LNCS}. Springer, 2005.

\bibitem{BBG05}
D.~Boneh, X.~Boyen, and E.-J. Goh.
\newblock {Hierarchical Identity based Encryption with Constant Size
  Ciphertext}.
\newblock In {\em EUROCRYPT 2005}, volume 3494 of {\em LNCS}, pages 440--456.
  Springer, 2005.

\bibitem{BS04VLR}
D.~Boneh and H.~Shacham.
\newblock {Group Signatures with Verifier-Local Revocation}.
\newblock In {\em ACM-CCS 2004}, pages 168--177. ACM, 2004.

\bibitem{BCN18}
C.~Boschini, J.~Camenisch, and G.~Neven.
\newblock Floppy-sized group signatures from lattices.
\newblock In {\em {ACNS} 2018}, volume 10892 of {\em LNCS}, pages 163--182.
  Springer, 2018.

\bibitem{Boy10}
X.~Boyen.
\newblock Lattice mixing and vanishing trapdoors: A framework for fully secure
  short signatures and more.
\newblock In {\em {PKC} 2010}, volume 6056 of {\em LNCS}, pages 499--517.
  Springer, 2010.

\bibitem{BSSW06}
X.~Boyen, H.~Shacham, E.~Shen, and B.~Waters.
\newblock {Forward-Secure Signatures with Untrusted Update}.
\newblock In {\em ACM-CCS 2006}, pages 191--200. ACM, 2006.

\bibitem{BGV12}
Z.~Brakerski, C.~Gentry, and V.~Vaikuntanathan.
\newblock (leveled) fully homomorphic encryption without bootstrapping.
\newblock In {\em ITCS 2012}, pages 309--325. {ACM}, 2012.

\bibitem{BPVY00}
E.~Brickell, D.~Pointcheval, S.~Vaudenay, and M.~Yung.
\newblock {Design Validations for Discrete Logarithm Based Signature Schemes}.
\newblock In {\em PKC 2000}, pages 276--292. Springer, 2000.

\bibitem{CNR12}
J.~Camenisch, G.~Neven, and M.~R{\"u}ckert.
\newblock {Fully Anonymous Attribute Tokens from Lattices}.
\newblock In {\em SCN 2012}, volume 7485 of {\em LNCS}, pages 57--75. Springer,
  2012.

\bibitem{CHK03}
R.~Canetti, S.~Halevi, and J.~Katz.
\newblock {A Forward-Secure Public-Key Encryption Scheme}.
\newblock In {\em EUROCRYPT 2003}, volume 2656 of {\em LNCS}, pages 255--271.
  Springer, 2003.

\bibitem{CHK04}
R.~Canetti, S.~Halevi, and J.~Katz.
\newblock {Chosen-Ciphertext Security from Identity-Based Encryption}.
\newblock In {\em {EUROCRYPT} 2004}, volume 3027 of {\em LNCS}, pages 207--222.
  Springer, 2004.

\bibitem{CHKP10}
D.~Cash, D.~Hofheinz, E.~Kiltz, and C.~Peikert.
\newblock {Bonsai Trees, or How to Delegate a Lattice Basis}.
\newblock In {\em {EUROCRYPT} 2010}, volume 6110 of {\em LNCS}, pages 523--552.
  Springer, 2010.

\bibitem{CV91}
D.~Chaum and E.~van Heyst.
\newblock {Group Signatures}.
\newblock In {\em {EUROCRYPT} 1991}, volume 547 of {\em LNCS}, pages 257--265.
  Springer, 1991.

\bibitem{ChengNW16}
S.~Cheng, K.~Nguyen, and H.~Wang.
\newblock {Policy-Based Signature Scheme from Lattices}.
\newblock {\em Des. Codes Cryptography}, 81(1):43--74, 2016.

\bibitem{PinoLS18}
R.~del Pino, V.~Lyubashevsky, and G.~Seiler.
\newblock Lattice-based group signatures and zero-knowledge proofs of
  automorphism stability.
\newblock In {\em {ACM-CCS} 2018}, pages 574--591. {ACM}, 2018.

\bibitem{DKXY02}
Y.~Dodis, J.~Katz, S.~Xu, and M.~Yung.
\newblock {Key-Insulated Public Key Cryptosystems}.
\newblock In {\em {EUROCRYPT} 2002}, volume 2332 of {\em LNCS}, pages 65--82.
  Springer, 2002.

\bibitem{FS86}
A.~Fiat and A.~Shamir.
\newblock {How to Prove Yourself: Practical Solutions to Identification and
  Signature Problems}.
\newblock In {\em {CRYPTO} 1986}, volume 263 of {\em LNCS}, pages 186--194.
  Springer, 1987.

\bibitem{GPV08}
C.~Gentry, C.~Peikert, and V.~Vaikuntanathan.
\newblock {How to Use a Short Basis: Trapdoors for Hard Lattices and New
  Cryptographic Constructions}.
\newblock In {\em STOC 2008}, pages 197--206. {ACM}, 2008.

\bibitem{GKV10}
S.~D. Gordon, J.~Katz, and V.~Vaikuntanathan.
\newblock {A Group Signature Scheme from Lattice Assumptions}.
\newblock In {\em ASIACRYPT 2010}, volume 6477 of {\em LNCS}, pages 395--412.
  Springer, 2010.

\bibitem{IR01}
G.~Itkis and L.~Reyzin.
\newblock {Forward-Secure Signatures with Optimal Signing and Verifying}.
\newblock In {\em {CRYPTO} 2001}, volume 2139 of {\em LNCS}, pages 332--354.
  Springer, 2001.

\bibitem{KDM17}
M.~Kansal, R.~Dutta, and S.~Mukhopadhyay.
\newblock {Forward Secure Efficient Group Signature in Dynamic Setting using
  Lattices}.
\newblock {\em {IACR} Cryptology ePrint Archive}, 2017:1128.
\newblock Available at: \url{https://eprint.iacr.org/2017/1128}.

\bibitem{KTX08}
A.~Kawachi, K.~Tanaka, and K.~Xagawa.
\newblock {Concurrently Secure Identification Schemes Based on the Worst-Case
  Hardness of Lattice Problems}.
\newblock In {\em ASIACRYPT 2008}, volume 5350 of {\em LNCS}, pages 372--389.
  Springer, 2008.

\bibitem{KTY04}
A.~Kiayias, Y.~Tsiounis, and M.~Yung.
\newblock Traceable signatures.
\newblock In {\em {EUROCRYPT} 2004}, volume 3027 of {\em LNCS}, pages 571--589.
  Springer, 2004.

\bibitem{KY06}
A.~Kiayias and M.~Yung.
\newblock {Secure Scalable Group Signature with Dynamic Joins and Separable
  authorities}.
\newblock {\em IJSN}, 1(1):24--45, 2006.

\bibitem{LLLS13}
F.~Laguillaumie, A.~Langlois, B.~Libert, and D.~Stehl{\'e}.
\newblock {Lattice-Based Group Signatures with Logarithmic Signature Size}.
\newblock In {\em ASIACRYPT 2013}, volume 8270 of {\em LNCS}, pages 41--61.
  Springer, 2013.

\bibitem{LLNW14-PKC}
A.~Langlois, S.~Ling, K.~Nguyen, and H.~Wang.
\newblock {Lattice-Based Group Signature Scheme with Verifier-Local
  Revocation}.
\newblock In {\em {PKC} 2014}, volume 8383 of {\em LNCS}, pages 345--361.
  Springer, 2014.
\newblock Corrected full version: \url{http://eprint.iacr.org/2014/033}.

\bibitem{LLMNW16-dgs}
B.~Libert, S.~Ling, F.~Mouhartem, K.~Nguyen, and H.~Wang.
\newblock {Signature Schemes with Efficient Protocols and Dynamic Group
  Signatures from Lattice Assumptions}.
\newblock In {\em ASIACRYPT 2016}, volume 10032 of {\em LNCS}, pages 373--403.
  Springer, 2016.

\bibitem{LLMNW16-ge}
B.~Libert, S.~Ling, F.~Mouhartem, K.~Nguyen, and H.~Wang.
\newblock {Zero-Knowledge Arguments for Matrix-Vector Relations and
  Lattice-Based Group Encryption}.
\newblock In {\em {ASIACRYPT} 2016}, volume 10032 of {\em LNCS}, pages
  101--131, 2016.

\bibitem{LLMNW17-AC-OT}
B.~Libert, S.~Ling, F.~Mouhartem, K.~Nguyen, and H.~Wang.
\newblock {Adaptive Oblivious Transfer with Access Control from Lattice
  Assumptions}.
\newblock In {\em {ASIACRYPT} 2017}, volume 10624 of {\em LNCS}, pages
  533--563. Springer, 2017.

\bibitem{LLNW16}
B.~Libert, S.~Ling, K.~Nguyen, and H.~Wang.
\newblock {Zero-Knowledge Arguments for Lattice-Based Accumulators:
  Logarithmic-Size Ring Signatures and Group Signatures Without Trapdoors}.
\newblock In {\em {EUROCRYPT} 2016}, volume 9666 of {\em LNCS}, pages 1--31.
  Springer, 2016.

\bibitem{LLNW17-AC-PRF}
B.~Libert, S.~Ling, K.~Nguyen, and H.~Wang.
\newblock Zero-knowledge arguments for lattice-based prfs and applications to
  e-cash.
\newblock In {\em {ASIACRYPT} 2017}, volume 10626 of {\em LNCS}, pages
  304--335. Springer, 2017.

\bibitem{LLNW18}
B.~Libert, S.~Ling, K.~Nguyen, and H.~Wang.
\newblock Lattice-based zero-knowledge arguments for integer relations.
\newblock In {\em {CRYPTO} 2018}, volume 10992 of {\em LNCS}, pages 700--732.
  Springer, 2018.

\bibitem{LMN16}
B.~Libert, F.~Mouhartem, and K.~Nguyen.
\newblock {A Lattice-Based Group Signature Scheme with Message-Dependent
  Opening}.
\newblock In {\em {ACNS} 2016}, volume 9696 of {\em LNCS}, pages 137--155.
  Springer, 2016.

\bibitem{LQY07}
B.~Libert, J.-J. Quisquater, and M.~Yung.
\newblock {Forward-Secure Signatures in Untrusted Update Environments:
  Efficient and Generic Constructions}.
\newblock In {\em ACM-CCS 2007}, pages 266--275. ACM, 2007.

\bibitem{LQY10}
B.~Libert, J.-J. Quisquater, and M.~Yung.
\newblock {Key Evolution Systems in Untrusted Update Environments}.
\newblock {\em ACM Transactions on Information and System Security (TISSEC)},
  13(4):37, 2010.

\bibitem{LY10}
B.~Libert and M.~Yung.
\newblock {Dynamic Fully Forward-Secure Group Signatures}.
\newblock In {\em Asia-CCS 2010}, pages 70--81. ACM, 2010.

\bibitem{LNSW13}
S.~Ling, K.~Nguyen, D.~Stehl{\'e}, and H.~Wang.
\newblock {Improved Zero-Knowledge Proofs of Knowledge for the {ISIS} Problem,
  and Applications}.
\newblock In {\em {PKC} 2013}, volume 7778, pages 107--124. Springer, 2013.

\bibitem{LNW15}
S.~Ling, K.~Nguyen, and H.~Wang.
\newblock {Group Signatures from Lattices: Simpler, Tighter, Shorter,
  Ring-Based}.
\newblock In {\em {PKC} 2015}, volume 9020 of {\em LNCS}, pages 427--449.
  Springer, 2015.

\bibitem{LNWX17}
S.~Ling, K.~Nguyen, H.~Wang, and Y.~Xu.
\newblock {Lattice-Based Group Signatures: Achieving Full Dynamicity with
  Ease}.
\newblock In {\em {ACNS} 2017}, volume 10355 of {\em LNCS}, pages 293--312.
  Springer, 2017.

\bibitem{LNWX19-ats}
S.~Ling, K.~Nguyen, H.~Wang, and Y.~Xu.
\newblock Accountable tracing signatures from lattices.
\newblock {\em {IACR} Cryptology ePrint Archive}, 2018:1251, 2018.
\newblock To appear at {CT-RSA} 2019.

\bibitem{LNWX18}
S.~Ling, K.~Nguyen, H.~Wang, and Y.~Xu.
\newblock Constant-size group signatures from lattices.
\newblock In {\em {PKC} 2018}, volume 10770 of {\em LNCS}, pages 58--88.
  Springer, 2018.

\bibitem{MP13}
D.~Micciancio and C.~Peikert.
\newblock {Hardness of {SIS} and {LWE} with Small Parameters}.
\newblock In {\em {CRYPTO} 2013}, volume 8042 of {\em LNCS}, pages 21--39.
  Springer, 2013.

\bibitem{MR07}
D.~Micciancio and O.~Regev.
\newblock {Worst-Case to Average-Case Reductions Based on Gaussian Measures}.
\newblock {\em SIAM Journal on Computing}, 37(1):267--302, 2007.

\bibitem{NHF09}
T.~Nakanishi, Y.~Hira, and N.~Funabiki.
\newblock {Forward-Secure Group Signatures from Pairings}.
\newblock In {\em Pairing 2009}, pages 171--186. Springer, 2009.

\bibitem{NguyenTW17}
K.~Nguyen, B.~H.~M. Tan, and H.~Wang.
\newblock Zero-knowledge password policy check from lattices.
\newblock In {\em {ISC} 2017}, volume 10599 of {\em LNCS}, pages 92--113.
  Springer, 2017.

\bibitem{NZZ15}
P.~Q. Nguyen, J.~Zhang, and Z.~Zhang.
\newblock {Simpler Efficient Group Signatures from Lattices}.
\newblock In {\em {PKC} 2015}, volume 9020 of {\em LNCS}, pages 401--426.
  Springer, 2015.

\bibitem{PR06}
C.~Peikert and A.~Rosen.
\newblock {Efficient Collision-Resistant Hashing from Worst-Case Assumptions on
  Cyclic Lattices}.
\newblock In {\em TCC 2006}, volume 3876 of {\em LNCS}, pages 145--166.
  Springer, 2006.

\bibitem{Regev05}
O.~Regev.
\newblock {On Lattices, Learning with Errors, Random Linear Codes, and
  Cryptography}.
\newblock In {\em STOC 2005}, pages 84--93. ACM, 2005.

\bibitem{SEHKMO12}
Y.~Sakai, K.~Emura, G.~Hanaoka, Y.~Kawai, T.~Matsuda, and K.~Omote.
\newblock {Group Signatures with Message-Dependent Opening}.
\newblock In {\em {Pairing} 2012}, volume 7708 of {\em LNCS}, pages 270--294.
  Springer, 2012.

\bibitem{Song01}
D.~X. Song.
\newblock {Practical Forward Secure Group Signature Schemes}.
\newblock In {\em ACM-CCS 2001}, pages 225--234. ACM, 2001.

\bibitem{Ste96}
J.~Stern.
\newblock {A New Paradigm for Public Key Identification}.
\newblock {\em IEEE Transactions on Information Theory}, 42(6):1757--1768,
  1996.

\end{thebibliography}

\appendix
\newpage
\section{Some Remarks on~\cite{KDM17} (ePrint 2017/1128)}\label{appendix:incorrect}
In~\cite{KDM17}, Kansal, Dutta, and Mukhopadhyay proposed a forward-secure group signature scheme from lattices in the model of Libert and Yung~\cite{LY10}. Unfortunately, it can be observed that their proposed scheme does not satisfy the correctness and security requirements.

The version of the scheme posted on 27-Nov-2017 15:26:21 UTC contains the following shortcomings.
\begin{itemize}
\item The scheme does not satisfy the correctness requirement.
    The opening algorithm, on input a signature generated by user $i$, does not output $i$. In fact, the transcript for user $i$ stored by the group manager is $\mathsf{transcript}_i=(\mathbf{v}_{i}^{(0)},i,\mathsf{uvk}[i],\mathsf{sig}_i,[t_1,t_2])$. However, when signing messages at time $t_1+j$ with $0< j \leq t_2-t_1$, the user $i$ encrypts $\mathbf{v}_i^{(j)}$, which is never seen by the group manager and which is unrelated to $\mathbf{v}_i^{(0)}$. Hence, the decryption procedure can only recovers $\mathbf{v}_i^{(j)}$ and no user is being traced in this case.  (Readers are referred to Page 22 (Join algorithm) and Page 25 (Sign algorithm) in~\cite{KDM17} for more details.)
         \smallskip
\item The scheme does not satisfy the anonymity requirement. The signature generated by user~$i$ at time period~$t_j$ contains matrix~$\mathbf{C}_i^{(j)}$, which is part of the updated certificate and which should be kept secret. Therefore, two signatures generated by the same user at the same period can easily be linked. (Readers are referred to  Page 24 (Update algorithm) and Page 27 (equation (13)) in~\cite{KDM17} for more details.)

    \smallskip

\end{itemize}

We note that in an updated version of the scheme, posted on 18-Jan-2018 17:33:37 UTC, the signature does not contain matrix $\mathbf{C}_i^{(j)}$, but the validity of the signature now cannot be publicly verified. That is because, in order to verify the underlying zero-knowledge argument system of Section~5 (Page 27), one needs to be given matrix $\mathbf{C}_{\mathrm{id}_i}^{(j)}$ that encodes the secret identity of the signer and that is not publicly known. In other words, this updated scheme also does not work.

We further observe that, a version of the scheme posted on 19-Apr-2018 07:26:41 UTC still contains the following shortcoming.

\begin{itemize}
\item The signatures generated by malicious users cannot be properly identified. In fact, to tackle with the first problem we have put forwarded above, the authors allow all users to update their transcripts to contain $\mathbf{v}_i^{(j)}$ at time $t_j$. However,  this enables all malicious user to  update the transcript arbitrarily. Therefore,  employing the transcript to identify the signer when opening signatures  is meaningless.
(Readers are referred to the updating algorithm in Page 25.)

\end{itemize}

\end{document}